\numberwithin{equation}{section}
\newtheorem{theorem}{Theorem}[section]
\newtheorem{definition}[theorem]{Definition}
\newtheorem{proposition}[theorem]{Proposition}
\newtheorem{corollary}[theorem]{Corollary}
\newtheorem{lemma}[theorem]{Lemma}
\newtheorem{remark}[theorem]{Remark}
\newtheorem{example}[theorem]{Example}
\DeclareMathOperator{\trace}{Tr}
\DeclareMathOperator{\id}{id}
\DeclareMathOperator{\Wg}{Wg}
\DeclareMathOperator{\Mob}{Mob}
\DeclareMathOperator*{\Ex}{\mathbb{E}}
\newcommand{\eq}[1]{\begin{align}#1\end{align}}
\def\be{\begin{eqnarray}}
\def\ee{\end{eqnarray}} 
\begin{document}

\date{\today}

\title[Additivity rates and PPT property for random quantum channels]{Additivity rates and PPT property for\\random quantum channels}

\author{Motohisa Fukuda} 
\address{Zentrum Mathematik, M5\\ Technische Universit\"at M\"unchen\\ Boltzmannstrasse 3, 85748 Garching\\ Germany}
\email{m.fukuda@tum.de} 

\author{Ion Nechita} 
\address{Zentrum Mathematik, M5\\ Technische Universit\"at M\"unchen\\ Boltzmannstrasse 3, 85748 Garching\\ Germany; CNRS, Laboratoire de Physique Th\'{e}orique, IRSAMC\\ Universit\'{e} de Toulouse, UPS\\ F-31062 Toulouse, France}
\email{nechita@irsamc.ups-tlse.fr} 

\subjclass[2000]{}
\keywords{}

\begin{abstract}
Inspired by Montanaro \cite{mon}, we introduce the concept of \emph{additivity rates} of a quantum channel $L$, which give the first order (linear) term of the minimum output $p$-R\'enyi entropies of $L^{\otimes r}$ as functions of $r$. We lower bound the additivity rates of arbitrary quantum channels using the operator norms of several interesting matrices including  partially transposed Choi matrices.
As a direct consequence, we obtain upper bounds for the classical capacity of the channels. We study these matrices for random quantum channels defined by random subspaces of a bipartite tensor product space. A detailed spectral analysis of the relevant random matrix models is performed, and strong convergence towards free probabilistic limits is shown. As a corollary, we compute the threshold for random quantum channels to have the positive partial transpose (PPT) property. We then show that a class of random PPT channels violate generically additivity of the $p$-R\'enyi entropy for all $p\geq30.95$.
\end{abstract}

\maketitle

\tableofcontents

\section{Introduction}
In this paper, we focus on three questions related to additivity properties of quantum channels. First, we introduce the concept of additivity rates by which we can bound additivity violations for tensor powers of channels. Then, we use these results to upper bound the classical capacity of quantum channels. Finally, we prove the existence of PPT quantum channels violating the additivity of minimum output $p$-R\'enyi entropy. In the following three subsections, we introduce the above  questions and present our main results. 

\subsection{Additivity rates of quantum channels} \label{sec:additivity-violation}
One of the most important conjectures in quantum information theory had been the additivity of $p$-R\'enyi entropy:
for any (quantum) channels $ L_{1,2}$, 
\be
H^{\min}_p ( L_1  \otimes L_2) = H^{\min}_p ( L_1 ) + H^{\min}_p (L_2) 
\ee
for $1\leq p\leq\infty$.
Here, $H_p^{\min}(\cdot)$ is defined for a quantum channel $ L $ by 
\be
H_p^{\min}( L ) = \min_X H_p ( L (X)) 
\ee
where $X$ runs over all the quantum states, and
the $p$-R\'enyi entropy $H_p(\cdot)$ is defined by
\be
H_p (X) = \frac 1 {1-p} \log \left(\trace X^p\right).  
\ee
Note that $H_p(\cdot)$ becomes von Neumann entropy as $p\to1$. 
This conjecture was made first for $p=1$ in \cite{KR2001}, 
and then for $p \in (1,\infty)$ in \cite{AHW2000}.
More detailed explanations about additivity questions can be found in \cite{Holevo2006}.

These conjectures were disproved by Hayden and Winter for $1<p<\infty$ \cite{hwi} and by Hastings for $p=1$ \cite{has}. The case $p=0$, and $p$ close to $0$, was disproved in \cite{chl}.
Importantly, the violation in the case $p=1$ implies that we can increase the classical capacity of some quantum channels by using entangled inputs \cite{Shor}. 
Then, an important question comes to our mind: how much one can increase the classical capacity by using entanglement over as many quantum states as possible. Although this question on classical capacity should be most important, it is difficult to treat it directly.  
On the other hand, approaches via R\'enyi entropy only involve eigenvalues of matrices and this fact enables us to use random matrix and free probability to investigate the generic behavior of random quantum channels on this issue. 
In this paper, for natural class of random quantum channels,
 we bound additivity violation of R\'enyi entropy. 

In fact, Montanaro \cite{mon} investigated on the limit of additivity violation for $p=\infty$ and  extended the result to $1\leq p < \infty$ by using the monotonicity of the Schatten $p$-norms in $p$. 
In our paper, we study this problem first for $p=2$ and then extend it to $0 \leq p < 2$.
His paper and ours both depend on estimate of norm of random matrices. 
However, those two random matrices are different and give different estimates. 
Detailed discussions on this matter are made in Section \ref{sec:compare} and Section \ref{sec:Montanaro}. 

Finally, our informal main theorem on limitation of additivity violation can be stated as follows. 
\begin{theorem} Consider a sequence of random quantum channels $L_n:\mathcal M_d(\mathbb C) \to \mathcal M_k(\mathbb C)$, defined via random embeddings of $\mathbb C^d$ into $\mathbb C^n \otimes \mathbb C^k$, where $k$ is a fixed parameter and $d \sim tnk$ for a fixed $t \in (0,1)$.
Then, almost surely as $n \to \infty$, for all $p\in[0,2]$, there exist constants  $\alpha_p \in [0,1]$ such that, for all $r \geq 1$,
\begin{equation}
\frac 1 r H_p^{\min} (L_n^{\otimes r}) \geq  \alpha_p H_p^{\min}(L_n).
\end{equation}
The constants $\alpha_p$ satisfy the following relations
\begin{enumerate}
\item When $0<t<1/2$ is a constant,
\be \alpha_p = o(1) + \frac{p-1}{2p} \left [ 1 + \frac {2 \log 2 + \log (1-t)}{\log t} \right] \cdot \mathbf{1}_{(1,2]}(p).
\ee
\item When $k$ is large and $t \asymp k^{-\tau}$ with $\tau >0$,
\be \alpha^\Gamma_{p,k,t} = o(1) +  \begin{cases}
 \frac{p-1}{2p} &\quad \text{ if }  0< \tau \leq 1-1/p\\
 \tau  /2 &\quad \text{ if } 1-1/p \leq \tau \leq 2\\
1 &\quad \text{ if }  \tau \geq 2.
\end{cases} \ee
\end{enumerate}
The above statements hold for the complementary channels $L^C: \mathcal M_d(\mathbb C) \to \mathcal M_n(\mathbb C)$,
where  the roles of $\mathbb C^k$ and $\mathbb C^n$ are swapped.
\end{theorem}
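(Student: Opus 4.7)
The plan is to reduce the additivity-rate inequality for all tensor powers to a single-letter operator-norm bound, and then to evaluate the relevant norm in the large-$n$ limit by establishing strong convergence to a free probability model. I would start from $p=2$, where $H_2(\rho) = -\log \trace(\rho^2)$ recasts the entropy problem in terms of Schatten $2$-norms. Expanding $\trace[L_n^{\otimes r}(X)^2]$ on $r$ copies and applying a Cauchy--Schwarz / matrix-H\"older argument, one collapses the result into the operator norm of a fixed matrix $M_{L_n}$ built from the Choi matrix of $L_n$ with a partial transpose on one leg. This delivers the single-letter bound
\[
\frac 1 r H_2^{\min}(L_n^{\otimes r}) \;\geq\; -\log \|M_{L_n}\|_\infty,
\]
which is exactly the structural ``lower bound via partially transposed Choi matrices'' advertised in the abstract; dividing by $H_2^{\min}(L_n)$ then furnishes the constant $\alpha_2$.

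Next I would carry out the random matrix analysis. Since $L_n$ arises from a Haar-random isometry $V : \mathbb{C}^d \hookrightarrow \mathbb{C}^n \otimes \mathbb{C}^k$, the matrix $M_{L_n}$ is a (partially transposed) Wishart-type random matrix whose asymptotic spectrum can be read off from free probability. I would show that, almost surely as $n \to \infty$, $M_{L_n}$ converges strongly to a deterministic operator in a tracial $W^*$-probability space, obtained as a free convolution of free Poisson (Marchenko--Pastur) elements combined with a partial-transpose twist; the edge of the limiting spectrum is then extracted by the $R$- or $S$-transform. In the fixed-$k$, fixed-$t$ regime a careful asymptotic expansion yields the logarithmic correction $\log(1-t)/\log t$ appearing in (1); in the $k \to \infty$, $t \asymp k^{-\tau}$ regime two competing contributions to the norm of $M_{L_n}$ (one from the transposed and one from the untransposed part of the Choi matrix) produce the piecewise formula, with phase transitions at $\tau = 1 - 1/p$ and $\tau = 2$ marking where dominance changes.

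Finally, extension from $p=2$ to general $p \in [0,2]$ uses monotonicity of R\'enyi entropies in $p$ together with a pointwise comparison of $\log \trace(\rho^p)$ against $\log \trace(\rho^2)$: for $p \in (1,2]$ the prefactor $1/(1-p)$ produces the explicit factor $\frac{p-1}{2p}$, while for $p \in [0,1]$ the statement is trivial with $\alpha_p = 0$. The complementary channel $L_n^C$ has a Choi matrix related to that of $L_n$ by a global partial transpose, so the same analysis with $k$ and $n$ swapped applies verbatim. The main obstacle is establishing \emph{strong} (operator-norm) convergence of $M_{L_n}$ rather than convergence in moments alone: this requires controlling high moments of $M_{L_n}$ uniformly in $n$ via a Weingarten genus expansion, ruling out outlier eigenvalues outside the bulk of the limiting spectrum. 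The single-letter reduction in the first step, which folds information about arbitrary tensor power $r$ into one fixed matrix, is the second delicate point, and it leans crucially on the partial-transpose structure of the Choi matrix --- precisely why partially transposed Choi matrices and the PPT condition form the unifying theme of the paper.
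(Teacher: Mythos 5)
Your overall architecture matches the paper's: a single-letter, tensor-multiplicative bound $H_2^{\min}(L^{\otimes r}) \geq -r\log\|C_L^\Gamma\|$ (Proposition \ref{prop:bound} plus the multiplicativity Lemma \ref{lem:mult}), then strong convergence of $C_n^\Gamma$ to a free-probabilistic limit (Theorem \ref{thm:strong-convergence}), then division by $H_p^{\min}(L_n)$. Two remarks on the random-matrix step, which are differences of technique rather than gaps: the limit law is not a Marchenko--Pastur object but the dilation $D_t\left[\nu_k^{\boxplus 1/t}\right]$ of a free additive convolution power of the \emph{two-point} measure $\nu_k = \frac{k-1}{2k}\delta_{-1}+\frac{k+1}{2k}\delta_{+1}$, whose edge $2\varphi^+(s,t)-1$ is exactly what enters the constants; and the paper obtains strong convergence not by uniform high-moment control but by realizing $C_n^\Gamma$ (up to null eigenvalues) as a polynomial with $\mathcal M_k(\mathbb C)$ coefficients in $(U, P_d, I_n\otimes E_{ij})$ and invoking the Collins--Male theorem together with Male's result on matrix-coefficient polynomials.

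The genuine gap is that you never evaluate the denominator $H_p^{\min}(L_n)$ for a random channel, and as a consequence you misplace the source of all the $p$-dependence in the constants. The paper needs Theorem \ref{thm:bcn} (imported from \cite{bcn13,CollinsNechita2011}), which gives $H_p^{\min}(L_n)\to h_{p,k,t}=H_p(x_{k,t})$ for an explicit limiting output eigenvalue vector $x_{k,t}$. The factor $\frac{p-1}{2p}$ in regime (1) is \emph{not} produced by comparing $\log\trace(\rho^p)$ with $\log\trace(\rho^2)$ --- for $p\in[0,2]$ one only uses the monotonicity $H_p\geq H_2$, with no loss factor (the correction $p/(2p-2)$ appears only for $p>2$) --- it comes from the asymptotics $h_{p,k,t}\approx \frac{p}{1-p}\log t$ for $p>1$ versus $h_{p,k,t}\approx\log k\to\infty$ for $p\leq 1$. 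Likewise, the phase transition at $\tau=1-1/p$ in regime (2) lives entirely in the denominator $h_{p,k,t}$ (Corollary \ref{cor:bcn}), not in ``two competing contributions to the norm'': the norm $\|\mu_{k,t}^{(C\Gamma)}\|$ has a transition only at $\tau=2$. Finally, your claim that the case $p\in[0,1]$ is ``trivial with $\alpha_p=0$'' is false in regime (2): there $1-1/p\leq 0<\tau$, so the bound equals $\tau/2$ (or $1$ for $\tau\geq 2$) uniformly in $p\in[0,2]$, and this non-trivial small-$p$ bound is one of the points of the theorem. Without the single-channel entropy estimate, none of the stated constants can actually be derived.
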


In the result above, the larger the constant $\alpha_p$ is, the more restrictive the additivity violation is. 
A precise definition of additivity rates is given in Definition \ref{def:wad} and more detailed estimates on $\alpha$ are made in Theorem \ref{thm:wad}. 
Also, our model of random quantum channels together with the idea of complementarity is described in detail at the beginning of Section \ref{sec:channel}.

\subsection{Range of capacity}\label{sec:introC}
In \cite{ Holevo1998, SchumacherWestmoreland1997}, 
the Holevo capacity of quantum channels, denoted by $\chi(\cdot)$, 
was proven to be the capacity of transmitting classical information without entangled inputs.
Here, $\chi(\cdot)$ is defined for quantum channels $L$:
\begin{equation}
\chi(L) = \max_{\{p_i,X_i\}} \left[ H_1\left(\sum_i p_i L(X_i)\right ) -\sum_i p_iH_1( L(X_i) ) \right],
\end{equation}
where the $(p_i,X_i)$ are all possible ensembles with $(p_i)$ and $(X_i)$ being a probability distribution and quantum states, respectively. 
It is not difficult to see that
\begin{equation}\label{eq:H-max-H-min}
\chi(L) \leq H_1^{\max}(L) - H_1^{\min} (L).
\end{equation}
Here, $H_1^{\max}(L) = \max_{X}H_1(L(X))$; note that this quantity is trivially additive, $H_1^{\max}(L^{\otimes r}) = r H_1^{\max}(L)$.
The equality is saturated, for example, if the channel $L$ has covariant property \cite{Holevo2005}, and
it is also the case in our setting for a similar reason, which will be discussed later. 
Since the classical capacity, denoted by $\mathcal C_{cl}(\cdot)$ is obtained by regularizing Holevo capacity $\chi(\cdot)$  \cite{ Holevo1998, SchumacherWestmoreland1997},
we can show the following estimate.
\begin{theorem}
Consider a sequence of random quantum channels $L_n:\mathcal M_d(\mathbb C) \to \mathcal M_k(\mathbb C)$, defined via random embeddings of $\mathbb C^d$ into $\mathbb C^n \otimes \mathbb C^k$, where $k$ is a fixed parameter and $d \sim tnk$ for a fixed $t \in (0,1)$.
Then, almost surely in the regime $1 \ll k \ll n$, the classical capacity is asymptotically bounded as follows:
\begin{enumerate}[i)]
\item When $0<t <1/2$ is a constant, we have  
\be
 \limsup_{n\to\infty} \mathcal C_{cl}(L_n)  \leq \log k + \log 2 + \frac 12 \log t(1-t) + o(1).
\ee
\item 
When $t \asymp k^{-\tau}$  and $0< \tau \leq 2$, we have, for some constant $c>0$
\be
\limsup_{n\to\infty} \mathcal C_{cl}(L_n)   \leq  \left(1-\frac \tau 2\right) \log k + c .
\ee
\item When $t \asymp k^{-\tau}$  and $\tau > 2$, the classical capacity is almost surely bounded by a constant. 
\end{enumerate} 

\end{theorem}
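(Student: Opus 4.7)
The plan is to combine the standard single-letter Holevo bound $\chi(L) \leq H_1^{\max}(L) - H_1^{\min}(L)$ of inequality (1.5) with the regularization $\mathcal{C}_{cl}(L) = \lim_{r\to\infty} r^{-1} \chi(L^{\otimes r})$ and the additivity rate estimate of Theorem 1.1. Since $H_1^{\max}$ is trivially additive, regularization gives
\begin{equation*}
\mathcal{C}_{cl}(L_n) \leq H_1^{\max}(L_n) - \liminf_{r \to \infty} \frac{1}{r} H_1^{\min}(L_n^{\otimes r}).
\end{equation*}
Theorem 1.1 does not apply directly at $p = 1$ (the indicator $\mathbf{1}_{(1,2]}$ in case (1) kills the formula), but R\'enyi entropy is non-increasing in $p$, so $H_1^{\min} \geq H_2^{\min}$ pointwise for any channel, and Theorem 1.1 at $p = 2$ is admissible. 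This yields, almost surely,
\begin{equation*}
\mathcal{C}_{cl}(L_n) \leq H_1^{\max}(L_n) - \alpha_2 H_2^{\min}(L_n) + o(1).
\end{equation*}

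Next I would insert three asymptotic inputs. The trivial one is $H_1^{\max}(L_n) \leq \log k$, since every output state lives on $\mathbb{C}^k$. The value of $\alpha_2$ is read off Theorem 1.1. The main obstacle --- and the heart of the argument --- is the almost sure lower bound on $H_2^{\min}(L_n) = -\log \|L_n\|_{1 \to 2}^2$, which must be furnished by the paper's spectral analysis of the Choi-type random matrices whose operator norms control $\|L_n\|_{1\to 2}$. The required estimates are $H_2^{\min}(L_n) \approx -2 \log t$ in the constant-$t$ regime, and $H_2^{\min}(L_n) \approx \min\{2 \tau \log k,\, \log k\}$ when $t \asymp k^{-\tau}$, with the crossover occurring at $\tau = 1/2$ (matching the cutoff $1 - 1/p = 1/2$ that appears in Theorem 1.1 at $p = 2$).

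It remains to substitute and rearrange in each regime. For constant $t \in (0, 1/2)$, case (1) of Theorem 1.1 at $p = 2$ gives $\alpha_2 \approx \frac{1}{4}\log(4t(1-t))/\log t$; multiplying by $H_2^{\min}(L_n) \approx -2 \log t$ produces $\alpha_2 H_2^{\min}(L_n) \approx -\log 2 - \frac{1}{2} \log t(1-t)$, which yields bound (i) after rearrangement against $H_1^{\max}(L_n) \leq \log k$. For $t \asymp k^{-\tau}$ with $0 < \tau \leq 2$, case (2) at $p = 2$ gives $\alpha_2 = 1/4$ for $\tau \leq 1/2$ and $\alpha_2 = \tau/2$ for $1/2 \leq \tau \leq 2$; in both sub-regimes the matching scaling of $H_2^{\min}(L_n)$ produces $\alpha_2 H_2^{\min}(L_n) \geq (\tau/2)\log k + O(1)$, yielding bound (ii). Finally, for $\tau > 2$, Theorem 1.1 saturates to $\alpha_2 = 1$ and $H_2^{\min}(L_n) = \log k - O(1)$, so $H_1^{\max}(L_n) - \alpha_2 H_2^{\min}(L_n)$ remains $O(1)$ as $n \to \infty$, yielding (iii).
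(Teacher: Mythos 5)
Your proof is correct and follows essentially the same route as the paper: the paper's Proposition \ref{prop:cap-bound} combines the Holevo bound \eqref{eq:H-max-H-min} with regularization to get $\mathcal C_{cl}(L) \leq H_1^{\max}(L) - \alpha_1(L)H_1^{\min}(L) \leq H_1^{\max}(L) + \log\|C_L^\Gamma\|$, and the theorem then substitutes $H_1^{\max}(L_n)\leq \log k$ and the asymptotics of $\|\mu^{(C\Gamma)}_{k,t}\|$ from Corollary \ref{cor:strong-convergence}. Your detour through the product $\alpha_2\, H_2^{\min}(L_n)$ is arithmetically identical, since $\alpha^{\Gamma}_{2,k,t}$ is by definition $-\log\|\mu^{(C\Gamma)}_{k,t}\|/h_{2,k,t}$ while $H_2^{\min}(L_n)\to h_{2,k,t}$ by Theorem \ref{thm:bcn} (this equality, coming from the output-eigenvalue analysis rather than from the Choi-matrix norms, is the one extra input your route needs), so the two factors recombine into exactly the norm bound the paper uses directly.
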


\subsection{PPT property and additivity violation}
Another topic treated in this current paper is the \emph{positive partial transpose} property (PPT) for quantum channels;
a quantum channel $L$ is called \emph{PPT} iff the partial transposition of its Choi matrix is positive semi-definite. 
The importance of PPT channels stems from their recent use in the proofs of superactivation for the quantum capacity, see \cite{sya}.
Hence, it is interesting to investigate typical PPT/non-PPT property for random quantum channels.
Also, we show that there exist PPT channels which violate additivity of R\'enyi $p$ entropy. 
This result is interesting because all entanglement-breaking channels are proven to be additive \cite{King03, Shor02}.  
Note that the set of entanglement-breaking channels is contained by the set of PPT channels 
but for qubit channels, these sets are the same. 
The above two problems are investigated in Section \ref{sec:PPT}, and we obtain the following results. 

First, in Section \ref{sec:PPT-t} we have
\begin{theorem}
Consider a sequence $L_n$ of random quantum channels of parameters $k,t$, and let 
\begin{equation}
t_{PPT} = \frac{1}{2} \left( 1- \sqrt{1-\frac{1}{k^2}}\right).
\end{equation}
If $t \in (0,t_{PPT})$ then, almost surely as $n \to \infty$, the sequence $L_n$ has the PPT property, whereas if $t \in (t_{PPT}, 1)$, then, almost surely, the sequence $L_n$ does not have the PPT property. We say that the value $t_{PPT}$ is a \emph{threshold} for the PPT property of random quantum channels.
\end{theorem}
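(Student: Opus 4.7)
The plan is to translate the PPT property into a statement about the smallest eigenvalue of the partially transposed Choi matrix $C_{L_n}^\Gamma$, and then read off the threshold from the free probabilistic limit of its spectrum. Writing $L_n(X) = \trace_n(VXV^*)$ for a Haar-distributed isometry $V:\mathbb C^d \to \mathbb C^n \otimes \mathbb C^k$, both the Choi matrix $C_{L_n} \in \mathcal M_d \otimes \mathcal M_k$ and its partial transpose $C_{L_n}^\Gamma$ are explicit low-degree polynomials in the entries of $V$ and $V^*$, so their spectra fit squarely within the random matrix models whose detailed analysis is set up earlier in the paper. Since the channel is PPT iff $\lambda_{\min}(C_{L_n}^\Gamma) \geq 0$, only the bottom edge of the spectrum matters.

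Next, I would invoke the convergence of the empirical spectral distribution of (a suitably rescaled) $C_{L_n}^\Gamma$ as $n\to\infty$ with $d \sim tnk$ and $k$ fixed. A Weingarten/moment-method computation should identify the limit as a compactly supported measure $\mu_{t,k}$ whose free cumulants are elementary in $t$ and $k$, and whose support endpoints $a_\pm(t,k)$ are explicit algebraic functions of these parameters. The PPT threshold is then the value of $t$ at which the left endpoint $a_-(t,k)$ changes sign. A direct computation should reduce the zero-edge condition to the quadratic $t(1-t) = 1/(4k^2)$, whose smaller root is precisely $t_{PPT} = \tfrac{1}{2}(1 - \sqrt{1 - 1/k^2})$; below this value $a_-(t,k) > 0$ and the limiting spectrum sits strictly inside $(0,\infty)$, while above it the spectrum develops a negative tail.

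The final and most delicate step is to upgrade bulk convergence to almost-sure convergence of the minimum eigenvalue. This is exactly the strong convergence statement advertised in the abstract: no eigenvalue of $C_{L_n}^\Gamma$ escapes the support of $\mu_{t,k}$ in the large-$n$ limit. Proving this typically requires a Haagerup--Thorbj\o rnsen-type linearization and resolvent analysis, or else sharp high-order moment estimates combined with Borel--Cantelli; in the Haar-isometry setting the natural tool is a genus expansion via Weingarten calculus. I expect this to be the main technical hurdle, since it is precisely here that one must rule out rare atypical realizations producing a negative outlier below the threshold (or a stray positive extremum above it). Granting strong convergence, the conclusion is immediate: for $t < t_{PPT}$, $\lambda_{\min}(C_{L_n}^\Gamma)$ is bounded away from zero from below almost surely, so $L_n$ is PPT with probability one; for $t > t_{PPT}$, a negative eigenvalue persists almost surely, ruling out the PPT property.
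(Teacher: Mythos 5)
Your proposal follows essentially the same route as the paper: PPT is reduced to the sign of $\lambda_{\min}(C_{L_n}^\Gamma)$, the limiting measure $\mu^{(C\Gamma)}_{k,t}$ is identified by Weingarten moment computations, strong convergence (obtained in the paper by invoking the Collins--Male strong asymptotic freeness theorem together with Male's polynomial-coefficient extension, rather than by a from-scratch linearization) upgrades bulk convergence to convergence of the extreme eigenvalue, and the left edge $2\varphi^-(s,t)-1$ with $s=(k+1)/(2k)$ vanishes exactly when $t(1-t)=1/(4k^2)$, whose smaller root is $t_{PPT}$. The only detail you gloss over, which the paper checks explicitly, is the case analysis around the atom at $-1$ (present when $t\geq s$) and the identification of which of the two roots of the quadratic is the relevant one; both are routine.
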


Second, we have
\begin{theorem}
 Consider a sequence of random quantum channels $L_n:\mathcal M_d(\mathbb C) \to \mathcal M_k(\mathbb C)$, defined via random embeddings of $\mathbb C^d$ into $\mathbb C^n \otimes \mathbb C^k$ and let $d\sim \frac n{4k}$. 
Suppose one of the following two procedures are made:
\begin{itemize}
\item fix $k \geq76$ and take large enough $p$ and $n$, or
\item fix $p \geq 30.95$ and take large enough $n$ and $k$
\end{itemize}
then, typically $L_n$ are PPT and violate additivity:
\be
H^{\min}_p (L_n \otimes \bar L_n) < 2H^{\min}_p (L_n).
\ee
\end{theorem}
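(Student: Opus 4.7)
For the PPT claim, note that with $d\sim n/(4k)$ we have $t = d/(nk) = 1/(4k^2)$; since $\sqrt{1-x}<1-x/2$ for $x>0$, the strict inequality $t<t_{PPT}=\tfrac12(1-\sqrt{1-1/k^2})$ holds for every $k$, so the threshold theorem of Section~\ref{sec:PPT-t} immediately yields that $L_n$ is PPT almost surely.

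For the additivity violation I would follow the Hayden--Winter/Hastings template: exhibit a specific entangled input to $L_n\otimes\bar L_n$ whose output entropy is strictly smaller than twice a matching lower bound on $H^{\min}_p(L_n)$. The natural candidate is the maximally entangled state $|\phi\rangle = d^{-1/2}\sum_i|i\rangle|i\rangle \in \mathbb C^d\otimes\mathbb C^d$. Setting $\sigma_n := (L_n\otimes\bar L_n)(|\phi\rangle\langle\phi|)$, the conjugation structure of $L\otimes\bar L$ singles out the Bell direction $|\Omega\rangle \in \mathbb C^k\otimes\mathbb C^k$, and the Weingarten/free-probability calculus applied to the isometry $V_n$, building on the strong convergence proved earlier in the paper, should yield an almost-sure deterministic limit for the full spectrum of $\sigma_n$. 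This produces an effective upper bound
\begin{equation*}
H^{\min}_p(L_n\otimes\bar L_n) \le H_p(\sigma_n) = 2\log k - \delta_1(p,k,t) + o(1),
\end{equation*}
with an explicit positive defect $\delta_1$ in $(p,k,t)$.

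For the single channel, combining Section~\ref{sec:channel} with the Marchenko--Pastur-type estimate $\|L_n(X)\|_\infty \le (\sqrt t + 1/\sqrt k)^2 + o(1)$ on the worst-case top eigenvalue, refined by a genuinely $p$-dependent Schatten norm analysis, should give
\begin{equation*}
H^{\min}_p(L_n) \ge \log k - \delta_2(p,k,t) + o(1),
\end{equation*}
with an explicit defect $\delta_2$. The additivity violation then reduces to the analytic inequality $\delta_1(p,k,t)>2\delta_2(p,k,t)$ at $t = 1/(4k^2)$. Expanding in the two regimes of interest, namely $k$ large fixed with $p,n\to\infty$, or $p$ large fixed with $k,n\to\infty$, and optimizing numerically will produce the announced thresholds $k\ge 76$ and $p\ge 30.95$.

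The main obstacle will be sharpness of both sides simultaneously. A naive ``Bell spike plus uniform complement'' ansatz for $\sigma_n$ (spectrum $\{t;(1-t)/(k^2-1)\}$) gives only $\delta_1\sim 1/k^2$, while the bare top-eigenvalue bound for the single channel yields $\delta_2\sim 1/\sqrt k$, and this crude comparison does not close. Going beyond these estimates by using the full random-matrix spectral picture of $\sigma_n$ on one side, and a truly $p$-dependent Schatten $p$-norm analysis of the single channel on the other, is the technically decisive step; it is precisely the strong convergence and operator-norm machinery developed in the earlier sections that makes the inequality $\delta_1>2\delta_2$ accessible at the specific numerical thresholds $k\ge 76$ and $p\ge 30.95$.
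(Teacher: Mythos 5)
Your PPT step is exactly the paper's: with $d\sim n/(4k)$ one has $t=1/(4k^2)<t_{PPT}$, and Theorem \ref{thm:PPT} applies. Your strategy for the violation (feed the maximally entangled state into $L_n\otimes\bar L_n$ and compare with the single-channel optimum) is also the paper's strategy. However, as written the proposal does not prove the theorem, and the reason you give for why the ``crude'' comparison fails is based on a miscomputation.

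The almost-sure limiting spectrum of $(L_n\otimes\bar L_n)(E_d/d)$ is $\gamma_{k,t}$ from \eqref{eq:gamma-k-t}: the Bell spike sits \emph{on top of} the uniform background, so its height is $t+\frac{1-t}{k^2}$, not $t$ as in your ansatz $\{t;(1-t)/(k^2-1)\}$. At $t=1/(4k^2)$ the spike equals $\frac{5}{4k^2}(1+o(1))$, i.e.\ $5/4$ times the bulk, so for $p=\infty$ the entropy defect is $\delta_1=\log\frac54+o(1)$, a \emph{constant}, not $O(1/k^2)$. Meanwhile the single-channel top output eigenvalue is $\varphi^+(t,1/k)=\frac1k\bigl(1+\frac{1}{\sqrt k}+O(1/k)\bigr)$, so $\delta_2\approx 1/\sqrt k$, and the inequality $\delta_1>2\delta_2$ \emph{does} close for large $k$; carrying the lower-order terms gives precisely $k\ge 76$. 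This bare operator-norm comparison, extended to large finite $p$ via $H_\infty^{\min}=\lim_{p\to\infty}H_p^{\min}$, is the entire proof of the first bullet — the step you dismiss as insufficient. For the second bullet, no ``genuinely $p$-dependent Schatten analysis'' of the single channel needs to be invented: Theorem \ref{thm:bcn} already gives $H_p^{\min}(L_n)\to H_p(x_{k,t})$ with \emph{equality} for $p\ge1$, with $x_{k,t}$ as in \eqref{eq:single-opt}. The violation then reduces to the explicit comparison $\bigl(\|x_{k,t}\|_p^p\bigr)^2<\|\gamma_{k,t}\|_p^p$ at $t=1/(4k^2)$, which after expanding in $k\to\infty$ becomes $p^2-\frac{3p}{4}+1<\bigl(\frac54\bigr)^p$, whose root is $p\approx30.944$. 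Your proposal defers exactly these two computations to unspecified ``numerical optimization,'' so the announced thresholds $k\ge76$ and $p\ge30.95$ are never actually derived; that derivation is the content of the proof, and the misidentification of the spike height would have sent you down the wrong path in carrying it out.
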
 
This theorem is divided into two theorems: Theorem \ref{thm:ppt1} and Theorem \ref{thm:ppt2} in Section \ref{sec:PPT-v}.

\subsection{Structure of the paper}

The paper is divided roughly into two parts: Sections \ref{sec:additivity-rate} and \ref{sec:additive-bounds} deal with the general theory of additivity rates and their lower bounds, while Sections \ref{sec:random-Choi}-\ref{sec:PPT} deal with random quantum channels.

More precisely, after recalling some basic definitions and results in Section \ref{sec:preliminaries}, we introduce in Section \ref{sec:additivity-rate} one of the main topics of this paper: additivity rates of quantum channels. Then, in Section \ref{sec:additive-bounds}, we introduce lower bounds for minimum output R\'enyi entropies, which are additive with respect to tensor products. These results can be used to lower bound the additivity rates. In Sections \ref{sec:random-Choi} and \ref{sec:other-bounds-random-channels} we study these bounds for random quantum channels. After recalling some known results about the minimum output entropies of random quantum channels in Section \ref{sec:MOE}, we give lower bounds for the additivity rates of random quantum channels in Section \ref{sec:additivity-rates-random}, limiting the possible violations of the additivity of the minimum output entropies of these channels. Based on previous results, we present in Section \ref{sec:cap} upper bounds for the classical capacitiy of (random) quantum channels, and in Section \ref{sec:PPT} examples of random channels which are PPT and violate the additivity of the minimum R\'enyi output entropies.

\section{Preliminaries}
\label{sec:preliminaries}
In this section, we go through basic definitions and knowledge, which are needed through this current paper. 
We give definitions on quantum states, channels and entropy in Section \ref{sec:channel}, 
and then make quick overviews on graphical Weingarten calculus and free probability in Section \ref{sec:graphical-Weingarten}  and Sec \ref{sec:fp},
respectively, as much as we need.

Let us start by introducing some notation. In this paper, the operator $\mathrm{Tr}$ denotes the usual, unnormalized trace. The reader may choose $\log$ to denote the logarithm in basis $2$ or $e$, depending on her/his background. Finally, we use the following asymptotic notations for sequences:
\begin{align}
x_n \sim y_n \quad &\iff \quad \lim_{n \to \infty} x_n / y_n =1\\
x_n  \asymp y_n \quad &\iff \quad 0 < \liminf_{n \to \infty} x_n / y_n \leq \limsup_{n \to \infty} x_n / y_n < \infty.
\end{align} 
\subsection{Quantum states and channels}
\label{sec:channel}
In this paper we will consider quantum channels $L:\mathcal M_d(\mathbb C) \to \mathcal M_k(\mathbb C)$, defined via the Stinespring dilation \cite{spring}
\begin{equation}\label{eq:quantum-channel-stinesrping}
L(X) = [\operatorname{Tr}_{\mathbb C^n} \otimes \operatorname{id}_{\mathbb C^k}](VXV^*),
\end{equation}
for an isometry $V: \mathbb C^d \to \mathbb C^n \otimes \mathbb C^k$. 
In this case, the dimensions of input, output and environment spaces are $d$, $k$ and $n$, respectively. 
If we swap the roles of $\mathbb C^k$ and $\mathbb C^n$, we get another channel $L^C$, called the \emph{complementary} channel \cite{Holevo2005a,KMNR2007}.
Outputs of this channel $L^C : \mathcal M_d(\mathbb C) \to \mathcal M_n(\mathbb C)$ share non-zero eigenvalues with those of the channel $L$ as long as inputs are pure states.
Hence, our results on entropy bounds are also shared by $L$ and $L^C$.
For such maps $L$ and $L^C$ we know that $L \otimes \mathrm{id}_{\mathbb C^m}$ and $L^C \otimes \mathrm{id}_{\mathbb C^m}$ are positive for any $m \geq 0$.
This property is called \emph{completely positivity}.
Later, we shall consider random quantum channels obtained by choosing the isometry $V$ randomly. 
The probability distribution of the random variable $V$ will be the uniform one on the set of isometries, obtained by truncating a Haar-distributed unitary matrix $U \in \mathcal U(kn)$: $V$ will consist of the first $d$ columns of a $kn \times kn$ random Haar unitary matrix. For now, let us introduce a key concept in this paper, the \emph{Choi matrix} of a quantum channel \cite{choi}. To a channel $L$, we associate its Choi matrix $C_L \in \mathcal M_k(\mathbb C) \otimes \mathcal M_d(\mathbb C)$, defined by 
\begin{equation}\label{eq:choi-matrix}
C_L = [L \otimes \mathrm{id}](E_d) = \sum_{i,j=1}^d L(e_ie_j^*) \otimes e_i e_j^*,
\end{equation}
where $E_d \in \mathcal M_{d^2}(\mathbb C)$ is the (unnormalized) \emph{maximally entangled state}
\begin{equation}\label{eq:maximally-entangled}
E_d = \sum_{i,j=1}^d e_ie_j^* \otimes e_i e_j^*.
\end{equation}
It is a classical result 
that a linear map $L$ is completely positive if and only if its Choi matrix $C_L$ is positive semidefinite. 

Finally, we shall denote 
by $\mathcal M_d^{1,+}(\mathbb C)$ the set of $d$-dimensional \emph{quantum states}
\begin{equation}
\mathcal M_d^{1,+}(\mathbb C) = \{X \in \mathcal M_d(\mathbb C) \, : \, \mathrm{Tr} X = 1 \text{ and } X\geq 0\}.
\end{equation}

Let us now introduce the entropic quantities we are interested in. The \emph{Shannon} entropy of a probability vector $x \in \mathbb R^k$, $x_i \geq 0$, $\sum_i x_i=1$ is defined by
\begin{equation}
H(x) = -\sum_{i=1}^k x_i \log x_i,
\end{equation}
where we put $0 \log 0 = 0$. This quantity is extended, via functional calculus, to quantum states, where it is known as the \emph{von Neumann entropy}. The \emph{R\'enyi entropies} are a one-parameter generalizations of these quantities. They are defined for any $p\in [0,\infty]$, as follows:
\begin{align}
H_p(x) = 
\begin{cases}
\log \#\{i \, : \, x_i \neq 0\}, &\quad \text{ if } p=0\\
H(x), &\quad \text{ if } p=1\\
\displaystyle \frac{1}{1-p}\log \sum_{i=1}^k x_i^p, &\quad \text{ if } p\neq 0,1, \infty\\
-\log \max\{x_i\}, &\quad \text{ if } p=\infty.
\end{cases}
\end{align}
The same quantities are defined for quantum states, and satisfy $H_p(X) \in [0, \log k]$. In what follows, we shall use the following well-known result \cite{BS-book}:

\begin{lemma}\label{lem:H-decreasing-in-p}
For a fixed probability vector $x$ (resp.~quantum state $X$), the function 
\begin{equation}
[0, \infty] \ni p \mapsto H_p(x)
\end{equation}
(resp.~$p \mapsto H_p(X)$) is non-increasing in $p$. In a similar manner, for a fixed quantum channel $L$, the function $p \mapsto H_p^{\min}(L)$ is decreasing in $p$.
\end{lemma}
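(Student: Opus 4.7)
\emph{Reduction steps.} My plan is to establish monotonicity first for probability vectors, then lift it to quantum states and finally to channels. The passage from probability vectors to states is immediate, since $H_p(X)$ depends only on the vector of eigenvalues of $X$, so spectral decomposition does the job. The passage from states to channels is also formal: if $p \leq q$ and $H_p(Y) \geq H_q(Y)$ holds for every state $Y$, then for each input $X$ we have $H_p(L(X)) \geq H_q(L(X)) \geq H_q^{\min}(L)$, and taking infimum over $X$ on the left gives $H_p^{\min}(L) \geq H_q^{\min}(L)$. Thus the entire lemma reduces to the scalar claim.

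\emph{Core argument via KL non-negativity.} For a fixed probability vector $x = (x_i)$ I would differentiate $p \mapsto H_p(x)$ on the open set $(0,1) \cup (1,\infty)$ and show the derivative is non-positive. Setting $f(p) := \sum_i x_i^p$, so that $H_p(x) = -\log f(p)/(p-1)$, a short computation yields
\begin{equation*}
\frac{d H_p(x)}{dp} \;=\; \frac{1}{(p-1)^2}\left[\log f(p) - (p-1)\,\frac{f'(p)}{f(p)}\right].
\end{equation*}
The prefactor is positive, so the sign is controlled by the bracket. Introducing the tilted probability distribution $q_i := x_i^p / f(p)$, for which $\sum_i q_i = 1$ and $f'(p)/f(p) = \sum_i q_i \log x_i$, a direct rearrangement identifies the bracket with $\sum_i q_i \log(x_i/q_i) = -D(q \,\|\, x)$, where $D$ denotes the Kullback-Leibler divergence. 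Gibbs' inequality (Jensen applied to $-\log$) gives $D(q\|x) \geq 0$, hence $dH_p/dp \leq 0$, as desired.

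\emph{Endpoints and main obstacle.} The boundary values $p \in \{0,1,\infty\}$ are handled by continuity: L'H\^opital applied to $\log f(p)/(1-p)$ gives $\lim_{p \to 1} H_p(x) = H_1(x)$; $\lim_{p \to 0^+} H_p(x) = \log |\mathrm{supp}(x)| = H_0(x)$; and $\lim_{p \to \infty} H_p(x) = -\log \max_i x_i = H_\infty(x)$ via the sandwich $\max_i x_i \leq f(p)^{1/p} \leq |\mathrm{supp}(x)|^{1/p}\max_i x_i$. Since monotonicity on each open interval extends across these limits, $H_p(x)$ is non-increasing on $[0,\infty]$. The only bookkeeping subtlety is the sign flip of $1-p$ across $p=1$, but this is absorbed by the square $(p-1)^2$ in the derivative formula, so the real substance of the proof is simply the non-negativity of relative entropy.
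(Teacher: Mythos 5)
Your proof is correct. The paper itself does not prove this lemma --- it is stated as a well-known fact with a citation to \cite{BS-book} --- so there is no internal argument to compare against; your derivative computation is the standard textbook proof. The identification of the bracket $\log f(p) - (p-1)f'(p)/f(p)$ with $-D(q\|x)$ for the tilted distribution $q_i = x_i^p/f(p)$ checks out (using $\log q_i = p\log x_i - \log f(p)$), Gibbs' inequality closes the scalar case, and the reductions to states (via the spectrum) and to channels (via pointwise monotonicity followed by minimization over inputs) are both sound; the endpoint limits at $p \in \{0,1,\infty\}$ are also handled correctly.
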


Recall from the introduction that the \emph{minimum output $p$-R\'enyi entropy} of a quantum channel $L$ is defined by
\begin{equation}\label{eq:def-H-p-min}
H_p^{\min}(L) = \min_{X \in \mathcal M_d^{1,+}(\mathbb C)} H_p(L(X)),
\end{equation}
for an arbitrary entropy parameter $p \in [0, \infty]$. The $H_p^{\min}$ functionals are \emph{sub-additive}, in the sense that for any quantum channels $L, K$, we have
\begin{equation}\label{eq:H-p-min-sub-additive}
H_p^{\min}(L \otimes K) \leq H_p^{\min}(L) + H_p^{\min}(K).
\end{equation}

For a pair of quantum channels $(L,K)$ such that $L \otimes K$ has no pure outputs, define the \emph{relative violation of minimum output $p$-entropy} of the pair $(L,K)$ by
\begin{equation}\label{eq:def-relative-violation}
v_p(L,K) := \frac{H_p^{\min}(L)+H_p^{\min}(K)}{H_p^{\min}(L \otimes K)} \in [1, \infty].
\end{equation}
With this notation, we call the pair $(L,K)$ $p$-additive iff.~ $v_p(L,K) = 1$.

\subsection{The graphical Weingarten integration formula}
\label{sec:graphical-Weingarten}

The model of random quantum channels we are interested in involves random isometries, which can be seen as truncated random Haar unitary matrices. Since our approach to understanding statistics of such channels is the moment method, we shall compute integrals of polynomials in the entries of unitary matrices. The main result here is the Weingarten formula, which was introduced by Weingarten \cite{wei+} in the physics literature and rigorously developed by Collins \cite{col}, and Collins and {\'S}niady \cite{csn}.

\begin{theorem}
\label{thm:Wg}
 Let $n$ be a positive integer and
$i=(i_1,\ldots ,i_p)$, $i'=(i'_1,\ldots ,i'_p)$,
$j=(j_1,\ldots ,j_p)$, $j'=(j'_1,\ldots ,j'_p)$
be $p$-tuples of positive integers from $\{1, 2, \ldots, n\}$. Then, the following integral over the Haar measure of $\mathcal U_n$ can be evaluated as 
\begin{equation}\label{eq:Wg} 
\int_{\mathcal U_n}U_{i_1j_1} \cdots U_{i_pj_p}
\bar U_{i'_1j'_1} \cdots
\bar U_{i'_pj'_p}\ dU=
\sum_{\alpha, \beta\in \mathcal S_{p}}\delta_{i_1i'_{\alpha (1)}}\ldots
\delta_{i_p i'_{\alpha (p)}}\delta_{j_1j'_{\beta (1)}}\ldots
\delta_{j_p j'_{\beta (p)}} \Wg (n,\alpha^{-1}\beta),
\end{equation}
where the function $\Wg$ is called the \emph{Weingarten function} (see the next definition).
If $p\neq p'$ then
\begin{equation} \label{eq:Wg_diff} 
\int_{\mathcal U(n)}U_{i_{1}j_{1}} \cdots
U_{i_{p}j_{p}} \bar U_{i'_{1}j'_{1}} \cdots
\bar U_{i'_{p'}j'_{p'}} \ dU= 0.
\end{equation}
\end{theorem}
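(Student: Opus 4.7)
The plan is to exploit the invariance properties of the integral under the left and right actions of $\mathcal U_n$ on itself. First I would observe that the integrand in \eqref{eq:Wg} can be packaged as a single tensor
\begin{equation*}
T = \int_{\mathcal U_n} U^{\otimes p} \otimes \bar U^{\otimes p'} \, dU \quad \in \quad \mathrm{End}((\mathbb C^n)^{\otimes p}) \otimes \mathrm{End}((\mathbb C^n)^{\otimes p'}),
\end{equation*}
so that the left-hand sides of \eqref{eq:Wg} and \eqref{eq:Wg_diff} are just the components of $T$ in the standard basis. The vanishing statement \eqref{eq:Wg_diff} for $p\neq p'$ follows immediately from left-translation invariance applied to the center: replacing $U$ by $e^{i\theta}U$ multiplies the integrand by $e^{i(p-p')\theta}$, and averaging over $\theta\in[0,2\pi)$ forces $T=0$.

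For the case $p=p'$, the substitutions $U\mapsto AUB$ with $A,B\in\mathcal U_n$ show that $T$ intertwines the diagonal action of $\mathcal U_n$ on both factors $(\mathbb C^n)^{\otimes p}$. By Schur--Weyl duality, the commutant of the diagonal $\mathcal U_n$-action on $(\mathbb C^n)^{\otimes p}$ is spanned by the permutation operators $\rho(\sigma) \in \mathrm{End}((\mathbb C^n)^{\otimes p})$ with $\sigma\in\mathcal S_p$, acting by $\rho(\sigma)(e_{i_1}\otimes\cdots\otimes e_{i_p}) = e_{i_{\sigma^{-1}(1)}}\otimes\cdots\otimes e_{i_{\sigma^{-1}(p)}}$. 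Consequently there exist complex coefficients $\{w(\alpha,\beta)\}_{\alpha,\beta\in\mathcal S_p}$ such that
\begin{equation*}
T = \sum_{\alpha,\beta\in\mathcal S_p} w(\alpha,\beta)\, \rho(\alpha)\otimes\rho(\beta).
\end{equation*}
Extracting matrix entries in the standard basis yields exactly the product of Kronecker deltas in \eqref{eq:Wg}, so it remains to identify $w(\alpha,\beta)=\Wg(n,\alpha^{-1}\beta)$.

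To pin down $w$, I would test $T$ against the basis of permutation operators: contracting $T$ with $\rho(\gamma)^*\otimes\rho(\delta)^*$ produces, via the identity $\int U^{\otimes p}(\rho(\gamma)X\rho(\delta)^{-1})\bar U^{\otimes p} dU$ and the invariance of the Haar measure, an equation of the form $G\cdot w = \rho$-Gram, where $G(\sigma,\tau) = \mathrm{Tr}\,\rho(\sigma^{-1}\tau) = n^{\#\mathrm{cycles}(\sigma^{-1}\tau)}$ is the Gram matrix of the permutation operators. This reduces the problem to inverting a class function depending only on $\alpha^{-1}\beta$, and the Weingarten function is defined precisely as the (pseudo-)inverse of $G$ on the class algebra, giving the claimed formula.

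The main technical obstacle is that $G$ is only genuinely invertible when $n\geq p$; for smaller $n$ the operators $\rho(\sigma)$ fail to be linearly independent, and one must work on the image of $G$ and define $\Wg$ as the Moore--Penrose pseudoinverse. Verifying that the resulting coefficients still satisfy the intertwining equations requires checking that $T$ already lies in the image of $G$, which follows from the Schur--Weyl decomposition: $T$ is a sum of projectors onto isotypic components indexed by partitions $\lambda\vdash p$ with at most $n$ parts, and these projectors are exactly the ones surviving in the range of $G$. Apart from this subtlety, the proof is a bookkeeping exercise transcribing Schur--Weyl duality into index notation.
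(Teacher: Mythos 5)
The paper does not prove this theorem: it is quoted as a known result, with the proof deferred to Weingarten \cite{wei+} and, for the rigorous version, to Collins \cite{col} and Collins--\'Sniady \cite{csn}. Your outline reproduces exactly the argument of those references — the centre trick for $p\neq p'$, the identification of $\int U^{\otimes p}\otimes \bar U^{\otimes p}\,dU$ with the orthogonal projection onto the span of the (vectorized) permutation operators via Schur--Weyl duality, and the determination of the coefficients by inverting the Gram matrix $n^{\#(\sigma^{-1}\tau)}$, which is precisely how $\Wg$ is defined in Definition~\ref{def:Wg} — and you correctly flag the only delicate point, namely that for $n<p$ the permutation operators are linearly dependent and $\Wg$ must be taken as a pseudo-inverse. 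The one cosmetic imprecision is writing $T=\sum w(\alpha,\beta)\,\rho(\alpha)\otimes\rho(\beta)$ where the invariant element is really a sum of outer products of vectorized permutation operators, but after extracting matrix entries this reshuffling of indices yields the same product of Kronecker deltas, so the argument stands.
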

  
For a permutation $\sigma \in \mathcal S_p$, $\# \sigma$ denotes the number of cycles of $\sigma$, and  $|\sigma |$ is the length of $\sigma$, i.e. the minimal number of transpositions that multiply to $\sigma$. Note that the length function defines a distance on $\mathcal S_p$, via $d(\sigma, \tau) = |\sigma^{-1}\tau|$. Let us recall the definition of the unitary Weingarten function.
\begin{definition}\label{def:Wg}
The unitary Weingarten function 
$\Wg(n,\sigma)$
is a combinatorial function which depends on a dimension parameter $n$ and on a permutation $\sigma$
in the symmetric group $\mathcal S_p$. 
It is the inverse of the function $\sigma \mapsto n^{\#  \sigma}$ with respect to the following convolution operation:
\begin{equation}
\forall \sigma,\pi \in \mathcal S_p, \quad \sum_{\tau \in \mathcal S_p} \Wg(n,\sigma^{-1}\tau)n^{\#(\tau^{-1}\pi)} = \delta_{\sigma, \pi}.
\end{equation}
In the large $n$ limit ($p$ is being kept fixed), it has the following asymptotics 
\begin{equation}\label{eq:Wg-asympt}
\Wg(n,\sigma) = n^{-(p + |\sigma|)} (\Mob(\sigma) + O(n^{-2})),
\end{equation}
where the M\"{o}bius function $\mathrm{Mob}$ is multiplicative on the cycles of $\sigma$ and its value on a $r$-cycle is
\be(-1)^{r-1} \mathrm{Cat}_{r-1},\ee
where $\mathrm{Cat_r}$ are the Catalan numbers.
Note that we omit the dimension in the Weingarten function when there is no confusion and write $\Wg(\sigma)$ for $\Wg(n,\sigma)$.
Also, we use the notation $\Mob(\alpha,\beta)=\Mob(\alpha^{-1}\beta)$.
\end{definition}

When applying the above integration formula, especially in the cases where the degree of the polynomial to be integrated is high, one has to deal often with sums indexed by a large set of indices. Computing such sums is a tedious task, so we use here a graphical formulation of the Weingarten formula, introduced in \cite{cne10a}. Here we sketch the main ideas, referring the reader to original work \cite{cne10a} for the technical details. This method has been used recently in relation to channel capacities \cite{cne10b,cfn12,cfn13,fne}, entanglement theory \cite{ane,cnz} and condensed matter physics \cite{cgp}.

The Weingarten graphical calculus builds up on the \emph{tensor diagrams} introduced by theoretical physicists and adds to it the ability to perform averages over diagrams containing Haar unitary matrices. In the graphical formalism, tensors (vectors, linear forms, matrices, etc.) are represented by \emph{boxes}, see Figure \ref{fig:AptA}, left diagram. To each box, one attaches labels of different shapes, corresponding to vector spaces. The labels can be  filled (black) or empty (white) corresponding to spaces or their duals: a $(p,q)$-tensor will be represented by a box with $p$ black labels and $q$ white labels attached. The example in Figure \ref{fig:AptA}, left corresponds to a (square) matrix $A \in \mathcal M_n(\mathbb C) \otimes \mathcal M_k(\mathbb C)$. 

Besides boxes, our diagrams contain \emph{wires}, which connect the labels attached to boxes. Each wire corresponds to a tensor contraction between a vector space $V$ and its dual $V^*$ ($V \times V^* \to \mathbb C$). See Figure \ref{fig:AptA}  for the example of the partial trace. A \emph{diagram} is simply a collection of such boxes and wires and corresponds to an element in a tensor product space (which might be degenerate, i.e.~ the scalars $\mathbb C$).

\begin{figure}[htbp] 
\includegraphics[valign=t]{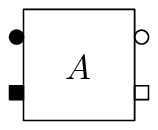} \qquad\qquad\qquad \includegraphics[valign=t]{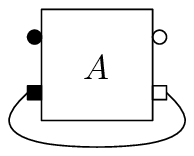}
\caption{Diagram for a matrix $A$ acting on a tensor product space and for its partial trace $[\mathrm{id} \otimes \mathrm{Tr}](A)$, obtained by contracting with a wire the labels corresponding to the second tensor factor.} 
\label{fig:AptA}
\end{figure}

Let us now describe briefly how one computes expectation values of such diagrams containing boxes $U$ corresponding to Haar-distributed random unitary matrices. The idea in \cite{cne10a} was to implement in the graphical formalism the Weingarten formula in Theorem \ref{thm:Wg}. Each pair of permutations $(\alpha,\beta)$ in  (\ref{eq:Wg}) will be used to eliminate $U$ and $\bar U$ boxes and wires will be added between the black, resp. white, labels of the box $U$ with index $i$ and the black, resp. white, labels of the box $\bar U$ with index $\alpha(i)$, resp. $\beta(i)$. In this way, for each pair of permutations, one obtains a new diagram $\mathcal D_{\alpha,\beta}$, called a \emph{removal} of the original diagram corresponding to $\alpha, \beta$. The graphical Weingarten formula is described in the following theorem \cite{cne10a}.

\begin{theorem}\label{thm:graphical-Wg}
If $\mathcal D$ is a diagram containing boxes $U, \bar U$ corresponding to a Haar-distributed random unitary matrix $U \in \mathcal U(n)$, the expectation value of $\mathcal D$ with respect to $U$ can be decomposed as a sum of removal diagrams $\mathcal D_{\alpha, \beta}$, weighted by Weingarten functions:
\begin{equation}\label{eq:graphical-Wg}
\mathbb E_U(\mathcal D)=\sum_{\alpha, \beta} \mathcal D_{\alpha, \beta} \Wg (n, \alpha^{-1}\beta).
\end{equation}
\end{theorem}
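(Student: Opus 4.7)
The plan is to derive Theorem \ref{thm:graphical-Wg} directly from the scalar Weingarten formula, Theorem \ref{thm:Wg}, by expanding the diagram into its tensor components, applying the formula to the random entries, and then re-reading the resulting Kronecker deltas as new wires in the graphical formalism. The content is essentially bookkeeping: the graphical theorem is the scalar theorem in picture language.

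First, I would set up notation. Label the copies of $U$ appearing in $\mathcal D$ by $\ell = 1, \ldots, p$ and the copies of $\bar U$ by $m = 1, \ldots, p'$; if $p \neq p'$, the expectation vanishes by \eqref{eq:Wg_diff}, and the right-hand side of \eqref{eq:graphical-Wg} is by convention empty, so assume $p = p'$. Each $U$-box carries a tuple of black labels $I_\ell$ and a tuple of white labels $J_\ell$ (encoding possibly several tensor factors of the space on which $U$ acts), and similarly each $\bar U$-box carries $(I'_m, J'_m)$. Let $T$ denote the fixed tensor obtained by contracting all non-random boxes of $\mathcal D$ along wires that do not touch any $U$ or $\bar U$ box; its free indices are exactly $\{I_\ell, J_\ell, I'_m, J'_m\}$. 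Then
\begin{equation*}
\mathcal D = \sum_{I, J, I', J'} T(I, J, I', J') \prod_{\ell=1}^{p} U_{I_\ell J_\ell} \prod_{m=1}^{p} \bar U_{I'_m J'_m}.
\end{equation*}

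Second, I would take expectations, exchange $\mathbb E_U$ with the finite sum, and apply \eqref{eq:Wg} to the $U$-dependent factor. This produces
\begin{equation*}
\mathbb E_U(\mathcal D) = \sum_{\alpha,\beta \in \mathcal S_p} \Wg(n, \alpha^{-1}\beta) \sum_{I,J,I',J'} T(I,J,I',J') \prod_{\ell=1}^{p} \delta_{I_\ell I'_{\alpha(\ell)}} \prod_{\ell=1}^{p} \delta_{J_\ell J'_{\beta(\ell)}}.
\end{equation*}
The final step is to recognize the inner sum over indices as the removal diagram $\mathcal D_{\alpha,\beta}$. A compound delta $\delta_{I_\ell I'_{\alpha(\ell)}}$ factorizes as a product of scalar deltas, one per tensor factor attached to the $\ell$-th $U$-box; in the diagrammatic convention of \cite{cne10a} each such scalar delta is drawn as a wire connecting the corresponding black label of the $\ell$-th $U$-box to the matching black label of the $\alpha(\ell)$-th $\bar U$-box, and analogously the $\beta$-deltas produce the parallel bundles of wires on white labels. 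Contracting $T$ against these newly installed wires is, by the very definition of a removal, exactly $\mathcal D_{\alpha,\beta}$. Weighting by $\Wg(n, \alpha^{-1}\beta)$ and summing yields \eqref{eq:graphical-Wg}.

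The only real obstacle is notational: one must check that grouping all tensor legs of each $U$-box into a single compound index is legitimate (equivalently, that a Haar-distributed unitary on $\mathbb C^n = \mathbb C^{n_1} \otimes \cdots \otimes \mathbb C^{n_s}$ is Haar on $\mathcal U(n)$ with $n = n_1 \cdots n_s$, which is tautological), and that the product-of-scalar-deltas reading of a compound delta agrees with the parallel-wire convention of \cite{cne10a}. Once these are in place, no probabilistic or asymptotic input beyond Theorem \ref{thm:Wg} itself is required.
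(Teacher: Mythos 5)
Your proposal is correct and follows exactly the route the paper itself sketches (and that the cited source \cite{cne10a} carries out): expand the diagram as a tensor contraction against the product of $U$ and $\bar U$ entries, apply the scalar Weingarten formula of Theorem \ref{thm:Wg}, and reinterpret the resulting Kronecker deltas as the wires defining the removal diagrams $\mathcal D_{\alpha,\beta}$. The paper states this theorem without a formal proof, but your write-up is a faithful and correct formalization of the argument it describes informally.
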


Since the above Weingarten formula is written as a sum over permutations we review next some additional properties of the symmetric group endowed with the distance $d(\alpha,\beta)= |\alpha^{-1}\beta|$. The function $|\cdot|$ has nice properties, for example, $|\alpha^{-1}|=|\alpha|$ and $|\alpha\beta|=|\beta\alpha|$; we refer the readers to \cite{nsp} for more details. 
For three permutations $\alpha, \beta,\gamma$ the triangle inequality holds:
\be
|\alpha^{-1}\gamma| \leq |\alpha^{-1}\beta| + |\beta^{-1}\gamma|.
\ee
When the bound above is saturated, we say that $\beta$ is on a \emph{geodesic} between $\alpha$ and $\gamma$, 
and write $\alpha - \beta - \gamma$. When $\gamma$ is the full cycle permutation, $\gamma = (p \, \cdots \, 3 \, 2\, 1)$, permutations lying on the geodesic  between $\mathrm{id}$ and $\gamma$ are simply called \emph{geodesics permutations}. In \cite[Proposition 23.23]{nsp}, it is shown that geodesic permutations are in bijection with \emph{non-crossing partitions}. Recall that a partition $\pi$ of $\{1, \ldots, p\}$ is said to be non-crossing if 
\begin{equation}
\forall \, i<j<k<l , \quad i \sim k \text{ and } j \sim l \implies i \sim j \sim k \sim l,
\end{equation}
where $\sim$ denotes the equivalence relation on $\{1, \ldots, p\}$ induced by $\pi$. We denote by $NC(p)$ the set of non-crossing partitions on $p$ elements. Moreover, the isomorphism between geodesic permutations and non-crossing partitions respects many combinatorial properties of the objects, such as the number of cycles (resp. blocks); see \cite[Section 23]{nsp} for more details.

\subsection{Some elements of free probability}
\label{sec:fp}

An excellent reference for the theory of free probability is \cite{nsp}; we recall now only some basic facts from this theory needed in the current paper. 

A  $C^*$ probability space is a unital $C^*$-algebra $\mathcal A$
equipped with a state $\tau$, which gives a norm;
$\|a\|_\tau = \lim_{p \to \infty} (\tau(a^p))^{1/p} $.
Such a  $C^*$ probability space is denoted by $(\mathcal A, \tau,\| \cdot \|_{\tau})$. 

The convergence of the eigenvalues of random matrices can be stated in the language of $C^*$ probability spaces as follows. Note that we define two types of convergence: the convergence in distribution (which is the convergence of all moments) and the strong convergence (which implies, in particular, the convergence of the extreme eigenvalues of the matrices). In this paper, we are interested in the operator norms of random matrices, and the usual convergence in distribution does not guarantee the convergence of the norms in the case when the size of the matrices grows; hence we shall make use of the strong convergence. 

\begin{definition}\label{def:strong-convergence}
Suppose we have $C^*$-probability spaces: 
$(\mathcal A, \tau,\| \cdot \|_{\tau})$ and $(\mathcal A_N, \tau_N, \| \cdot \|_{\tau_N})$ with $N \in \mathbb N$, 
where $\tau$ and $\tau_N$ are faithful traces. 
For $l$-tuple elements $a=(a_1,\ldots, a_l)$ in $\mathcal A$ and $a^{(N)} = (a_1^{(N)},\ldots, a_l^{(N)})$ in $\mathcal A^{(N)}$,
\begin{enumerate}[i)]
\item we say $a^{(N)}$ converges to $a$ in distribution if 
\be
\lim_{N \to \infty} \tau_N[P(a^{(N)},a^{(N) *})] = \tau [P(a,a^*)],
\ee
\item we say $a^{(N)}$ converges to $a$ strongly in distribution if in addition
\be
\lim_{N \to \infty}\| [P(a^{(N)},a^{(N) *})\|_{\tau_N} = \|P(a,a^*)\|_\tau .
\ee
\end{enumerate}
Here, $P$ is any polynomial in non-commuting $2l$ variables.
\end{definition}

The strong asymptotic freeness of random unitary matrices and deterministic matrices has been proven by Collins and Male: 
\begin{proposition}[\cite{collinsmale}] \label{prop:CM}
Suppose we have $C^*$-probability spaces: 
$(\mathcal A, \tau,\| \cdot \|_{\tau})$ and $(\mathcal M_N(\mathbb C), \tau_N, \| \cdot \|_{\tau_N})$ with $N \in \mathbb N$.
Here, $\tau$ is a faithful trace and $\tau_N$ is the usual normalized trace on the $N \times N$ matrix space $\mathcal M_N(\mathbb C)$.
Take 
\begin{itemize}
\item a $p$-tuple of  free Haar unitary elements $u= (u_1, \ldots, u_p)$ in $\mathcal A$, and
\item a $p$-tuple of i.i.d.~ Haar-distributed unitary matrices $U^{(N)}=(U_1^{(N)}, \ldots, U_p^{(N)})$ in $\mathcal M_N(\mathbb C)$.
\end{itemize}
Suppose we are given
\begin{itemize}
\item a $q$-tuple of elements $y=(y_1, \ldots, y_q)$ free from $u$ in $\mathcal A$, and
\item a $q$-tuple of matrices $Y^{(N)}=(Y_1^{(N)},\ldots, Y_q^{(N)})$ independent from $U^{(N)}$ in $\mathcal M_N(\mathbb C)$.
\end{itemize}
such that $Y^{(N)}$ converges to $y$ strongly in distribution.
Then, almost surely $(U^{(N)}, Y^{(N)})$ converges to $(u,y)$ strongly in distribution. 
\end{proposition}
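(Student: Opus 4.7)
The plan is to prove Proposition \ref{prop:CM} by first establishing convergence in distribution and then upgrading to norm convergence via concentration and a linearization argument. Let me denote the combined tuples $a^{(N)} = (U^{(N)}, Y^{(N)})$ and $a = (u,y)$.

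First I would establish the joint convergence in distribution of $a^{(N)}$ to $a$. This is essentially Voiculescu's asymptotic freeness theorem, and I would reprove it using the graphical Weingarten machinery from Theorem \ref{thm:Wg} and Theorem \ref{thm:graphical-Wg}. Given any monomial in the entries of the $U_i^{(N)}, U_i^{(N)*}$ and $Y_j^{(N)}$, applying \eqref{eq:graphical-Wg} produces a sum over pairs of permutations $(\alpha,\beta) \in \mathcal{S}_p^2$ weighted by $\Wg(N,\alpha^{-1}\beta)$; the asymptotics \eqref{eq:Wg-asympt} combined with the bijection between geodesic permutations and non-crossing partitions isolate the leading-order contributions. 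These surviving terms are exactly the non-crossing pairings that produce freeness between the unitary variables $u_i$ and between $\{u_i\}$ and $y$, while the deterministic tuple $Y^{(N)}$ contributes through its mixed moments, which converge by hypothesis. The independence of the $U_i^{(N)}$ and their independence from $Y^{(N)}$ ensures we can factor expectations cleanly.

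Next, to upgrade convergence in distribution to strong convergence, I would use two ingredients. The first is concentration of measure: the Haar measure on $\mathcal U(N)$ satisfies a logarithmic Sobolev / Gromov-Milman concentration inequality, which implies that for any self-adjoint polynomial $P$, the random variable $\|P(a^{(N)}, a^{(N)*})\|$ concentrates around its median with Gaussian tails at scale $1/\sqrt N$. The second ingredient is the linearization trick: to prove that no eigenvalue of $P(a^{(N)}, a^{(N)*})$ escapes the spectrum of $P(a,a^*)$, it suffices to prove it for \emph{matrix-valued linear} polynomials of the form $L = b_0 \otimes 1 + \sum_i b_i \otimes U_i^{(N)} + \sum_j c_j \otimes Y_j^{(N)}$ (plus adjoints), with $b_i,c_j$ fixed self-adjoint matrices. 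This reduction is carried out via the Schur-complement / Anderson linearization technique and transfers the norm problem into one about the edge of a spectrum of a tractable operator.

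Then I would prove a Haagerup-Thorbjørnsen style ``master inequality'' for the linearized object $L$: for any smooth function $f$ supported slightly outside the spectrum of the free limit $\ell = b_0 + \sum b_i \otimes u_i + \sum c_j \otimes y_j$, one has $\mathbb E \tracenorm[f(L)] = O(N^{-2})$. This is proved by a refined Weingarten expansion that keeps track of the subleading $O(N^{-2})$ correction in \eqref{eq:Wg-asympt}; the computation yields an exact identity relating $\mathbb E \tracenorm[f(L)]$ to a differential operator applied to $\tau\otimes\id[f(\ell)]$, and the latter is bounded using the regularity of the free resolvent together with the strong convergence assumption on $Y^{(N)}$. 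Combined with concentration and the moment convergence of step one, this rules out outlying eigenvalues and yields $\limsup \|P(a^{(N)}, a^{(N)*})\| \leq \|P(a,a^*)\|_\tau$ almost surely; the matching lower bound follows from weak convergence and lower semicontinuity of the norm under moment convergence.

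The main obstacle is the master inequality: controlling the $O(N^{-2})$ error in the Weingarten expansion uniformly in the matrix coefficients $b_i, c_j$ (whose dimensions stay fixed while $N \to \infty$) requires careful bookkeeping of non-geodesic permutation pairs and, crucially, exploits that $Y^{(N)}$ converges \emph{strongly} (not merely in distribution) so that the deterministic resolvents appearing in the error estimate remain norm-bounded. Everything else -- convergence in distribution, concentration, linearization -- is comparatively standard machinery, but this edge-of-spectrum estimate is the true analytic heart of the Collins-Male theorem.
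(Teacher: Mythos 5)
The paper does not prove this proposition at all: it is imported verbatim from Collins and Male \cite{collinsmale} and used as a black box, so there is no internal proof to compare yours against. Judged on its own terms, your proposal is a roadmap rather than a proof, and the genuine gap sits exactly where you locate it yourself: the Haagerup--Thorbj{\o}rnsen-type master inequality $\mathbb E\,\tracenorm[f(L)]=O(N^{-2})$ is asserted, not established, and it carries essentially the entire content of the theorem. For GUE matrices that estimate follows from a Gaussian integration-by-parts (Schwinger--Dyson) identity; for Haar unitaries there is no equally clean identity, the subleading corrections in the Weingarten asymptotics \eqref{eq:Wg-asympt} do not organize themselves into ``a differential operator applied to $\tau\otimes\id[f(\ell)]$'' in any straightforward way, and controlling the non-geodesic permutation pairs uniformly at the edge of the spectrum is precisely the hard analytic work. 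Writing that this ``is proved by a refined Weingarten expansion'' does not discharge that obligation, so the argument as written does not prove the statement.

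It is also worth noting that the actual proof in \cite{collinsmale} does not follow the route you sketch. Rather than attacking a unitary master inequality directly, Collins and Male reduce to the Gaussian case: they realize the Haar unitaries through the polar decomposition of Ginibre matrices built from (strongly convergent) GUE blocks, invoke Male's earlier strong-convergence theorem for independent GUE matrices together with strongly convergent deterministic matrices, and then transfer the conclusion to the unitaries by functional calculus. So even as a high-level plan your sketch diverges from the cited source, and the divergence is in the direction of the \emph{harder} path. A secondary quibble: concentration for Haar measure on $\mathcal U(N)$, as opposed to $\mathcal{SU}(N)$, needs a small extra argument because of the center, though this is routine. The substantive issue remains the unproved master inequality.
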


The following useful statement was proven by Male:
\begin{proposition}[Proposition 7.3 in \cite{male}]\label{prop:M}
Suppose we have $C^*$-probability spaces: 
$(\mathcal A, \tau,\| \cdot \|_{\tau})$ and $(\mathcal A_N, \tau_N, \| \cdot \|_{\tau_N})$ with $N \in \mathbb N$, 
where $\tau$ and $\tau_N$ are faithful traces.
Take
\begin{itemize}
\item a $l$-tuple of self-adjoint  elements $z=(z_1,\ldots,z_l)$ in $\mathcal A$, and
\item a $l$-tuple of self-adjoint elements $z^{(N)}=(z_1^{(N)},\ldots,z_l^{(N)})$ in $\mathcal A_N$.
\end{itemize}
If we assume that $z^{(N)}$ converges to $z$ strongly in distribution,
then we have strong convergence in the following sense: 
for any polynomial $P$ in $l$ non-commuting variables with coefficients in $\mathcal M_k (\mathbb C)$,
\be
\lim_{N \to \infty} \| P(z^{(N)}) \|_{\tau_k \otimes \tau_N}=\| P(z) \|_{\tau_k \otimes \tau} 
\ee
\end{proposition}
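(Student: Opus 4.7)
The plan is to split the desired convergence $\|P(z^{(N)})\|_{\tau_k \otimes \tau_N} \to \|P(z)\|_{\tau_k \otimes \tau}$ into a lower bound (which requires only convergence of moments) and an upper bound (where the full strong-convergence hypothesis is essential). A preliminary reduction: since $\|P(z)\|^2 = \|P(z)^* P(z)\|$ and $P^* P$ is itself a polynomial in $z$ with matrix coefficients in $\mathcal{M}_k(\mathbb{C})$, I may assume without loss of generality that $P(z)$ is positive self-adjoint in $\mathcal{M}_k(\mathbb{C}) \otimes \mathcal{A}$.

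For the lower bound, I would invoke the standard spectral identity $\|A\| = \lim_{p \to \infty} \phi(A^p)^{1/p}$ valid for any positive self-adjoint $A$ in a $C^*$-probability space with faithful tracial state $\phi$. Taking $\phi = \tau_k \otimes \tau$ and expanding $P(z^{(N)})^p = \sum_{\alpha_1, \dots, \alpha_p} M_{\alpha_1} \cdots M_{\alpha_p} \otimes z_{\alpha_1}^{(N)} \cdots z_{\alpha_p}^{(N)}$, the quantity $(\tau_k \otimes \tau_N)(P(z^{(N)})^p)$ becomes a finite scalar linear combination of joint moments of $z^{(N)}$ with coefficients $\tau_k(M_{\alpha_1}\cdots M_{\alpha_p})$ independent of $N$. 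By Definition~\ref{def:strong-convergence}(i), these converge to $(\tau_k \otimes \tau)(P(z)^p)$. Sending $N \to \infty$ first and then $p \to \infty$ gives $\liminf_N \|P(z^{(N)})\|_{\tau_k \otimes \tau_N} \geq \|P(z)\|_{\tau_k \otimes \tau}$.

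The upper bound $\limsup_N \|P(z^{(N)})\| \leq \|P(z)\|$ is the main obstacle. My approach would use an Anderson-type linearization trick: any self-adjoint matrix polynomial $P(z) - \lambda$ can be realized as the Schur complement of a self-adjoint \emph{linear} pencil $L_\lambda(z) = C_0(\lambda) \otimes 1 + \sum_{i=1}^{l} C_i \otimes z_i$ with Hermitian matrix coefficients $C_i$ in some enlarged matrix algebra $\mathcal{M}_m(\mathbb{C}) \otimes \mathcal{A}$, such that $\lambda \notin \sigma(P(z))$ if and only if $L_\lambda(z)$ is invertible with controlled inverse. The question reduces to showing the uniform bound $\limsup_N \|L_\lambda(z^{(N)})^{-1}\| < \infty$ for every $\lambda > \|P(z)\|$. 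One handles the linear pencil via its Neumann expansion combined with the key amplification observation that matrix constants $M \otimes 1$ commute with $1_k \otimes z_i^{(N)}$; the relevant amplified algebra therefore factors as $\mathcal{M}_k(\mathbb{C}) \otimes_{\min} C^*(z^{(N)})$, and the nuclearity of $\mathcal{M}_k(\mathbb{C})$ together with Definition~\ref{def:strong-convergence}(ii) transfers scalar-polynomial norm convergence into norm convergence of matrix-coefficient polynomials in the amplification.

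The technical heart of the argument, and the step I expect to be the hardest, is making this transfer work \emph{uniformly} in the spectral parameter $\lambda$ as $\lambda$ ranges over a neighborhood of $(\|P(z)\|, \infty)$; without uniformity one does not rule out spectral outliers of $P(z^{(N)})$ accumulating just above $\|P(z)\|$. Concretely, one must truncate the Neumann series at a degree depending on the gap between $\lambda$ and $\|P(z)\|$, apply the scalar strong-convergence input to the resulting matrix-coefficient polynomial, and then control the tail uniformly using a priori operator-norm estimates on $z^{(N)}$ obtained from the lower bound of the preceding step. This uniform control is essentially the content of Male's original technical machinery.
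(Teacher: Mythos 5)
First, a point of reference: the paper does not prove this proposition at all --- it is quoted verbatim as Proposition 7.3 of Male's paper \cite{male} --- so there is no internal proof to compare your argument against. Judged on its own terms, your proposal is half right. The reduction to $P^*P$ and the lower bound are correct and complete: for a positive element $A$ and a faithful trace $\phi$ the sequence $\phi(A^p)^{1/p}$ is non-decreasing with limit $\|A\|$, each $(\tau_k\otimes\tau_N)\bigl(P^*P(z^{(N)})^p\bigr)$ is a fixed linear combination of joint moments with $N$-independent coefficients, and the order of limits you take ($N$ first, then $p$) is the right one.

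The gap is in the upper bound, and it is twofold. First, you explicitly defer the crucial step --- uniform control of $\|L_\lambda(z^{(N)})^{-1}\|$ in the spectral parameter --- to ``Male's original technical machinery,'' so the proposal is not a proof but a reduction of the statement to the very reference it is meant to replace. Second, the claim that ``nuclearity of $\mathcal M_k(\mathbb C)$ \dots transfers scalar-polynomial norm convergence into norm convergence of matrix-coefficient polynomials'' is circular: norm convergence of matrix-coefficient polynomials is exactly what is being proved, and the minimal tensor norm of $\sum_i M_i\otimes a_i$ is not a function of the scalar norms $\|a_i\|$. The linearization trick is the machinery needed to establish strong convergence of concrete random matrix models (Haagerup--Thorbj{\o}rnsen, Collins--Male); for the mere transfer to matrix amplifications there is a short elementary argument that closes the upper bound with no uniformity issue. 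For a positive block matrix $B=(b_{ij})\in\mathcal M_k(\mathbb C)\otimes\mathcal A$ one has $B\le k\,\mathrm{diag}(b_{11},\dots,b_{kk})\le k\, I_k\otimes[(\mathrm{Tr}_k\otimes\mathrm{id})(B)]$ (conjugate $B$ by the diagonal unitaries $\mathrm{diag}(\omega,\dots,\omega^k)$ and average over the $k$-th roots of unity $\omega$). Applying this to $B_N^p$ with $B_N=P^*P(z^{(N)})$ gives $\|B_N\|^p\le k\,\|(\mathrm{Tr}_k\otimes\mathrm{id})(B_N^p)\|$ for every $p$ and $N$; the right-hand side is the norm of a single scalar-coefficient polynomial in $z^{(N)}$, so by hypothesis (ii) of Definition \ref{def:strong-convergence} it converges to $\|(\mathrm{Tr}_k\otimes\mathrm{id})(B_\infty^p)\|\le k\|B_\infty\|^p$. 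Taking $p$-th roots and letting $p\to\infty$ removes both factors of $k$ and yields $\limsup_N\|B_N\|\le\|B_\infty\|$.
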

Note that in the above result, one can drop the self-adjointness assumption by considering real and imaginary parts of the operators involved.

We prove next a simple lemma about push-forwards of free additive convolution powers of probability measures and we recall a well-known result about the free multiplicative convolution product of Bernoulli distributions $b_t  =(1-t) \delta_0 + t \delta_1$. Recall that given two free elements $a,b$ having distributions $\mu, \nu$, the distributions of $a+b$ and respectively $a^{-1/2}ba^{-1/2}$ are denoted by $\mu \boxplus \nu$, respectively $\mu \boxtimes \nu$, and they are called the free additive (resp.~ multiplicative) convolutions of $\mu$ and $\nu$ (for the latter, we require $a \geq 0$). We denote by $f_\#\mu$ the push-forward of a measure $\mu$ by a measurable function $f$: if the random variable $X$ has distribution $\mu$, then $f(X)$ has distribution $f_\#\mu$.

\begin{lemma}\label{lemma:push-forward}
Let $\mu$ be a compactly supported probability measure on $\mathbb R$ so that,
 for any $T\geq1$, $\mu^{\boxplus T}$ is well-defined.
Then, we have, for any  $a,b \in \mathbb R$
\be(  (x \mapsto ax+b)_\#   \mu)^{\boxplus T} =  (x \mapsto ax+Tb)_\#  (\mu^{\boxplus T}).\ee
\end{lemma}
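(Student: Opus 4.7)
The plan is to prove this identity by passing to $R$-transforms (equivalently, free cumulants), which linearize free additive convolution and transform simply under affine push-forwards. Since both sides of the claimed equality are compactly supported probability measures (the affine images of $\mu$ and $\mu^{\boxplus T}$ are compactly supported because $\mu$ and $\mu^{\boxplus T}$ are), equality of all free cumulants will imply equality of the measures.

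First, I would recall the two scaling rules that the proof rests on. If $\nu = (x \mapsto ax+b)_\# \mu$, then the free cumulants satisfy $\kappa_1(\nu) = a\kappa_1(\mu) + b$ and $\kappa_n(\nu) = a^n \kappa_n(\mu)$ for $n \geq 2$, which at the level of the $R$-transform reads
\begin{equation}
R_\nu(z) = b + a\, R_\mu(a z).
\end{equation}
This is standard: higher free cumulants are translation-invariant and homogeneous of degree $n$ under dilation, while translation shifts only the mean $\kappa_1$. Second, the defining property of $T$-fold free additive convolution (whose existence for $T \geq 1$ is the hypothesis of the lemma) is
\begin{equation}
R_{\mu^{\boxplus T}}(z) = T\, R_\mu(z),
\end{equation}
equivalently $\kappa_n(\mu^{\boxplus T}) = T\kappa_n(\mu)$ for all $n \geq 1$.

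The computation itself is then immediate. Applying the first rule to $\mu$ and then the second rule gives
\begin{equation}
R_{((ax+b)_\#\mu)^{\boxplus T}}(z) = T\bigl(b + a R_\mu(az)\bigr) = Tb + T a R_\mu(az).
\end{equation}
On the other hand, applying the second rule first and then the first rule (with $b$ replaced by $Tb$) gives
\begin{equation}
R_{(x \mapsto ax+Tb)_\#(\mu^{\boxplus T})}(z) = Tb + a \cdot R_{\mu^{\boxplus T}}(az) = Tb + a T R_\mu(az).
\end{equation}
The two $R$-transforms coincide, hence so do the corresponding measures.

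The only subtle point is that $T$ need not be an integer, so I cannot simply argue via sums of $T$ free copies of a random variable; this is precisely why the $R$-transform / free cumulant route is cleaner than a combinatorial one. Once one takes the $R$-transform viewpoint, however, the semigroup $\{\boxplus T : T \geq 1\}$ acts by scalar multiplication on $R_\mu$, and the affine-pushforward formula is exact, so no obstruction arises. I expect no real difficulty beyond invoking these two standard identities from \cite{nsp}.
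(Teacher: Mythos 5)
Your proof is correct and follows essentially the same route as the paper's: both arguments reduce the identity to the behaviour of free cumulants under affine push-forwards ($\kappa_1 \mapsto a\kappa_1+b$, $\kappa_n \mapsto a^n\kappa_n$ for $n\geq 2$) and under $\boxplus T$ (multiplication by $T$), then compare both sides. Your added remark that equality of all cumulants of compactly supported measures forces equality of the measures is a standard point the paper leaves implicit.
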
 
\begin{proof}
First, let $v_\mu$ be an element in the $C^*$-algebra which gives the probability measure $\mu$ so that
$av_\mu+b$ induces the probability measure $\mu_{ax+b}$. 
Then, first by using multi-linear property of cummulant, 
\be
\kappa_n (av_\mu) = \kappa_n (av_\mu, \ldots, av_\mu) = a^n \kappa_n (v_\mu).
\ee
Also, shift does not change cummulants $\kappa_n$ except for the case when $n=1$: 
$\kappa_1(av+b) = \kappa_1(av)+b$. 
Therefore, 
\be
\kappa_n((av_\mu + b)^{\boxplus T}) = T [a^n \kappa_n(v_\mu) + b\delta_{1,n}   ] = \kappa_n (av_\mu^{\boxplus T}) + Tb \delta_{1,n}
= \kappa_n (av_\mu^{\boxplus T} + Tb).
\ee
This completes the proof. 
\end{proof} 

\begin{proposition}\label{prop:bernoulli-boxtimes-boxplus}
The free multiplicative convolution of two Bernoulli distributions $b_s, b_t$ (with $s,t \in[0,1]$) is given by
\eq{ 
b_s \boxtimes b_t &= (1-\min(s,t))\delta_0 + \max(s+t-1,0) \delta_1  \notag\\
&\qquad +\frac{\sqrt{(\varphi^+(s,t)-x)(x-\varphi^-(s,t))}}{2\pi x(1-x)} \mathbf{1}_{[\varphi^-(s,t),\varphi^+(s,t)]}(x)\,dx,
}
where the bounds of the a.c.~ part of the support are given by
\begin{equation}\label{eq:phi-st-pm}
\varphi^\pm(s,t) = s+t - 2st \pm 2 \sqrt{st(1-s)(1-t)}.
\end{equation}
Equivalently, for any $T \geq 1$,
\eq{
 b_s^{\boxplus T} &= \max (0,1-Ts) \delta_0 + \max(0,1-T(1-s)) \delta_{T} \notag\\
&\qquad +\frac {T \sqrt{(\gamma^+(s,T)-x)(x-\gamma^-(s,T))}}{2\pi x (T-x)} \mathbf 1_{[\gamma^-(s,T),\gamma^+(s,T)]}(x)  \, dx,
}
where $\gamma^{\pm}(s,T) = (T-2)s +1 \pm 2 \sqrt{(T-1)s(1-s)}$.
Note that $ \varphi^{\pm}(s,t) =t\gamma^\pm(s,1/t)$.
\end{proposition}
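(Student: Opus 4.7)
The plan is to compute $b_s \boxtimes b_t$ using the $S$-transform and $b_s^{\boxplus T}$ using the $R$-transform, then verify the advertised algebraic identity between the two sets of endpoints.

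For the first formula, I would read off $\psi_{b_t}(z) = tz/(1-z)$, invert to $\chi_{b_t}(z) = z/(t+z)$, and hence $S_{b_t}(z) = (1+z)/(t+z)$. Multiplicativity of the $S$-transform under $\boxtimes$ then yields $S_{b_s \boxtimes b_t}(z) = (1+z)^2/((s+z)(t+z))$, from which one recovers $\chi_{b_s \boxtimes b_t}(z) = z(1+z)/((s+z)(t+z))$ and hence a quadratic equation $(z-1)\eta^2 + (z(s+t)-1)\eta + zst = 0$ satisfied by $\eta = \psi_{b_s \boxtimes b_t}(z)$. Translating this into a quadratic in the Cauchy transform $G(w)$ via $\eta = wG(w) - 1$ with $z = 1/w$, the discriminant of that quadratic vanishes exactly when $w \in \{\varphi^-(s,t), \varphi^+(s,t)\}$; Stieltjes inversion then produces the a.c.~density on $[\varphi^-, \varphi^+]$. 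The atoms at $0$ and $1$ are extracted from the residues of $G$ at $w=0,1$, in accordance with the Bercovici--Voiculescu atom rule for $\boxtimes$, giving mass $1 - \min(s,t)$ at $0$ and $\max(0, s+t-1)$ at $1$.

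For the second formula I proceed in parallel using the $R$-transform. From $G_{b_s}(z) = (1-s)/z + s/(z-1)$, solving the quadratic $G_{b_s}(z) = w$ for $z$ gives $K_{b_s}(w)$ and hence $R_{b_s}(w) = K_{b_s}(w) - 1/w$, picking the branch with the correct $w \to 0$ asymptotics. Then $R_{b_s^{\boxplus T}} = T R_{b_s}$ and $K_{b_s^{\boxplus T}}(w) = T R_{b_s}(w) + 1/w$; squaring out the lone square root in $K_{b_s^{\boxplus T}}(w) = z$ produces a quadratic in $w$ whose discriminant (in $z$) vanishes exactly at $\gamma^\pm(s,T)$, yielding the support, and Stieltjes inversion then gives the a.c.~density on $[\gamma^-,\gamma^+]$. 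The atoms at $0$ and $T$ with masses $\max(0, 1-Ts)$ and $\max(0, 1-T(1-s))$ are read off from the residues of $G$ at $z=0$ and $z=T$, subject to the obvious positivity cutoff. The final identity $\varphi^\pm(s,t) = t \gamma^\pm(s, 1/t)$ is then a one-line algebraic check: $t\bigl[(1/t-2)s + 1 \pm 2\sqrt{(1/t-1)s(1-s)}\bigr] = s + t - 2st \pm 2\sqrt{st(1-s)(1-t)}$.

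The main obstacle is not any single step but the bookkeeping: selecting the correct branches of the various square roots (so that $K(w) \sim 1/w$ as $w \to 0$), performing the case analyses for the atom masses (depending on whether $s+t \leq 1$ in Part~1 and on the signs of $1-Ts$ and $1-T(1-s)$ in Part~2), and checking that the total mass equals $1$ in each case. If one wishes to avoid this duplication, an alternative is to derive one formula from the other via the Nica--Speicher compression identity relating $\mu^{\boxplus T}$ to $\mu \boxtimes b_{1/T}$, but this introduces its own subtleties in the normalization of the compressed trace (notably the atom at $0$ behaves differently than a naive rescaling suggests), so I would prefer the two independent direct computations above and keep the relation between the endpoints as a verification at the end.
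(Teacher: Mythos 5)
Your proposal is correct, but it takes a genuinely different route from the paper. The paper's proof is essentially a two-line citation: the formula for $b_s \boxtimes b_t$ is quoted directly from Voiculescu--Dykema--Nica (Example 3.6.7), and the formula for $b_s^{\boxplus T}$ is then deduced from it via the compression identity $(1-T^{-1})\delta_0 + T^{-1} b_s^{\boxplus T} = b_s \boxtimes \bigl( (1-T^{-1})\delta_0 + T^{-1}\delta_T\bigr)$ from Nica--Speicher (Exercise 14.21) --- i.e.\ precisely the route you mention at the end and choose to avoid. Your alternative is two independent first-principles computations: the $S$-transform calculation for $\boxtimes$ and the $R$-transform calculation for $\boxplus T$, each followed by Stieltjes inversion, with the atoms supplied by the Bercovici--Voiculescu regularity results. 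I checked the key algebra: the quadratic $(z-1)\eta^2+(z(s+t)-1)\eta+zst=0$ has discriminant $z^2(s-t)^2-2z(s+t-2st)+1$, which (after $z=1/w$) vanishes exactly at $w=\varphi^\pm(s,t)$; and squaring out the square root in $K_{b_s^{\boxplus T}}(w)=z$ yields a quadratic in $w$ whose discriminant is $T^2\bigl[z^2-2((T-2)s+1)z+(1-sT)^2\bigr]$, vanishing exactly at $z=\gamma^\pm(s,T)$; the atom masses $1-\min(s,t)$, $\max(s+t-1,0)$, $\max(0,1-Ts)$, $\max(0,1-T(1-s))$ all match the standard atom rules. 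What your approach buys is self-containedness and an independent consistency check (the endpoint identity $\varphi^\pm(s,t)=t\gamma^\pm(s,1/t)$ becomes a verification rather than the mechanism of the proof); what it costs is the branch-selection and normalization bookkeeping you acknowledge, plus the explicit Stieltjes inversion needed to confirm the density prefactors $\frac{1}{2\pi x(1-x)}$ and $\frac{T}{2\pi x(T-x)}$, which your sketch does not carry out but which is routine. Your caution about the atom at $0$ in the compression identity is well founded --- the extra mass $1-T^{-1}$ at the origin on the left-hand side must be subtracted before rescaling --- but the paper's route does go through once that is tracked.
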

\begin{proof}
The first claim is taken from \cite[Example 3.6.7]{vdn}, while the second one follows from the fact that (see \cite[Exercise 14.21]{nsp})
\be (1- T^{-1} )\delta_0 +  T^{-1}  b_s^{\boxplus T} = b_s \boxtimes \left( (1- T^{-1} ) \delta_0 +  T^{-1} \delta_T \right).\ee
\end{proof}

\section{Additivity rates for quantum channels}\label{sec:additivity-rate}

In this section, we discuss how the functional $H_p^{\min}(\cdot)$ behaves with respect to tensor products. The ideas and results developed here will be applied to random quantum channels later. Our inspiration comes from \cite{mon}, where a multiplicative version of the additivity rates was established. 

\subsection{Definition and basic properties}

First, as is explained in Section \ref{sec:additivity-violation},  because of non-additivity properties of quantum channels,
we do not know how $H^{\min}_p(L^{\otimes r})$ behaves as $r$ grows. 
So, we introduce a notion which quantifies $H^{\min}_p(L^{\otimes r})$ in terms of $H^{\min}_p(L)$:
\begin{definition}\label{def:wad}
For a quantum channel $L$ and an entropy parameter $p \in [0, \infty]$, define the \emph{$p$-additivity rate} of $L$ by
\begin{equation}\label{eq:def-wad}
\alpha_p(L) = \sup\left\{ a \in [0,1] \, :\, \liminf_{r \to \infty} \frac{1}{r}H_p^{\min}(L^{\otimes r}) \geq a H_p^{\min}(L)\right\}.
\end{equation}
\end{definition} 

Different characterizations as well as some basic properties of $p$-additivity rates can readily be obtained from basic properties of the entropy functionals. 

\begin{proposition}\label{prop:additivity-rates}
The $p$-additivity rate of a quantum channel $L$ can be characterized in the following equivalent ways:
\begin{align}
\alpha_p(L) &= \sup\left\{ a \in [0,1] \, :\, \forall r \geq 1, \quad \frac{1}{r}H_p^{\min}(L^{\otimes r}) \geq a H_p^{\min}(L)\right\} \\
& = 
\begin{cases}
\displaystyle \lim_{r \to \infty} \frac{H_p^{\min}(L^{\otimes r})}{r H_p^{\min}(L)} &,\quad \text{ if } H_p^{\min}(L) >0 \\
1 &,\quad \text{ if } H_p^{\min}(L) =0.
\end{cases}\\
& = 
\begin{cases}
\displaystyle \inf_{r \geq 1} \frac{H_p^{\min}(L^{\otimes r})}{r H_p^{\min}(L)} &,\quad \text{ if } H_p^{\min}(L) >0 \\
1 &,\quad \text{ if } H_p^{\min}(L) =0.
\end{cases}
\end{align}
\end{proposition}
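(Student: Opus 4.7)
The plan is to reduce everything to Fekete's subadditive lemma. The key input is inequality (2.7) from the preliminaries, $H_p^{\min}(L \otimes K) \leq H_p^{\min}(L) + H_p^{\min}(K)$, which says that the nonnegative sequence $a_r := H_p^{\min}(L^{\otimes r})$ satisfies $a_{r+s} \leq a_r + a_s$. I will invoke Fekete's lemma to conclude
$$\lim_{r \to \infty} \frac{a_r}{r} = \inf_{r \geq 1} \frac{a_r}{r} \in [0, a_1],$$
the upper bound being obtained by taking $r=1$ in the infimum. In particular, the $\liminf$ in Definition~\ref{def:wad} is already a limit and coincides with the infimum, and the common value is bounded above by $H_p^{\min}(L)$.

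\textbf{Case 1: $H_p^{\min}(L) > 0$.} Dividing the Fekete identity by $H_p^{\min}(L)$ shows immediately that the expressions in characterizations (ii) and (iii) are equal, and that their common value $A$ lies in $[0,1]$. The condition defining $\alpha_p(L)$ reads $A \geq a$, so the supremum of admissible $a \in [0,1]$ equals $\min(A,1) = A$, which gives the coincidence of the original definition with (ii) and (iii). For characterization (i), the equivalence
$$\forall r \geq 1,\ \frac{a_r}{r} \geq a\, H_p^{\min}(L) \quad \Longleftrightarrow \quad \inf_{r \geq 1} \frac{a_r}{r} \geq a\, H_p^{\min}(L)$$
is tautological, and by Fekete the right-hand side is the same as the $\liminf$ condition in \eqref{eq:def-wad}; taking suprema identifies (i) with the original definition.

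\textbf{Case 2: $H_p^{\min}(L) = 0$.} Subadditivity gives $0 \leq a_r \leq r H_p^{\min}(L) = 0$, so $a_r = 0$ for all $r \geq 1$. Both the inequality in the original Definition~\ref{def:wad} and that in characterization (i) reduce to $0 \geq 0$, which holds for every $a \in [0,1]$, so both suprema equal $1$, matching the $1$-by-convention values assigned in (ii) and (iii).

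No real obstacle is anticipated: the argument is essentially Fekete's lemma plus bookkeeping. The only delicate points are the $0/0$ case (handled separately in Case~2) and keeping the restriction $a \in [0,1]$ consistent with the a priori bound $A \leq 1$ furnished by subadditivity; these are minor.
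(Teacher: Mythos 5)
Your proof is correct and follows essentially the same route as the paper: both rest on Fekete's subadditive lemma applied to $r \mapsto H_p^{\min}(L^{\otimes r})$ (subadditive by \eqref{eq:H-p-min-sub-additive}), with the zero-entropy case handled separately via $0 \leq H_p^{\min}(L^{\otimes r}) \leq r H_p^{\min}(L) = 0$. Your write-up merely spells out the bookkeeping that the paper leaves implicit.
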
 
\begin{proof}
The statements follow from Fekete's sub-additive lemma \cite[Lemma 1.2.1]{ste} and \eqref{eq:H-p-min-sub-additive}: the $\liminf$ in \eqref{eq:def-wad} is actually a limit and it is equal to the infimum of the sequence $H_p^{\min}(L^{\otimes r}) / r$. The zero entropy case follows from the fact that if the channel $L$ has zero minimum output entropy for some $p$, then the same holds for all tensor powers $L^{\otimes r}$ because $0 \leq H_p^{\min}( L^{\otimes r}) \leq r H_p^{\min} (L) = 0$.
Moreover, such a channel is additive with any channel (see \cite[Lemma 1]{fuk}).
\end{proof} 

\begin{proposition} \label{prop:additivity-rates2}
The $p$-additivity rate functionals have the following set of properties:
\begin{enumerate}
\item The additivity relation $H^{\min}_p(L^{\otimes r}) = r H^{\min}_p(L)$ holds for all $r \geq 1$ if and only if $\alpha_p(L) = 1$. 
\item Monotonicity with respect to tensor powers:
\be\alpha_p(L^{\otimes s}) \geq \alpha_p(L),\ee
for all integer tensor powers $s \geq 1$. 
\item Convex-like behaviour with respect to tensor products:
\be\alpha_p(L \otimes K) \leq v_p(L,K)\left[t \alpha_p(L) + (1-t)\alpha_p(K) \right] \leq v_p(L,K) \max\left\{\alpha_p(L),\alpha_p(K)\right\},\ee
where $v_p(L,K)$ is the relative violation of the minimum $p$-output entropy \eqref{eq:def-relative-violation} and $t  = H_p^{\min}(L) / [H_p^{\min}(L) + H_p^{\min}(K)] \in [0,1]$; if $H_p^{\min}(L) = H_p^{\min}(K) =0$, just put $t=0$. 
\item Additivity violations yield upper bounds:
\begin{equation}\label{eq:additivity-violation-upper-bound-alpha}
\alpha_p(L) \leq \frac{1}{v_p(L,L)}, \quad \forall p \in [0,\infty].
\end{equation}
\end{enumerate}
\end{proposition}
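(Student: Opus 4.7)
The plan is to reduce all four claims to the infimum/limit characterization of $\alpha_p$ from Proposition \ref{prop:additivity-rates}, namely $\alpha_p(L) = \inf_{r \geq 1} H_p^{\min}(L^{\otimes r}) / [r H_p^{\min}(L)]$ in the nonzero-entropy case (the zero case being handled by the convention and the cited Lemma 1 of \cite{fuk}). Once this characterization is in hand, each part is a short manipulation, and I would dispatch the zero-entropy degenerate cases separately at the start.

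For part (1), if $H_p^{\min}(L^{\otimes r}) = r H_p^{\min}(L)$ for every $r$, then every term of the sequence defining the infimum equals $1$, so $\alpha_p(L)=1$; conversely, if $\alpha_p(L)=1$, then the ratio is at least $1$ for all $r$, while sub-additivity \eqref{eq:H-p-min-sub-additive} gives the reverse inequality, forcing equality. For part (2), I would apply the characterization to $L^{\otimes s}$, write
\begin{equation*}
\alpha_p(L^{\otimes s}) = \inf_{r \geq 1} \frac{H_p^{\min}(L^{\otimes rs})}{r H_p^{\min}(L^{\otimes s})} \geq \inf_{r \geq 1} \frac{\alpha_p(L)\,rs\,H_p^{\min}(L)}{r H_p^{\min}(L^{\otimes s})} = \alpha_p(L) \cdot \frac{s H_p^{\min}(L)}{H_p^{\min}(L^{\otimes s})},
\end{equation*}
and then observe that sub-additivity makes the last fraction $\geq 1$.

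Part (3) is the only computational step; the idea is to take the sub-additive bound $H_p^{\min}((L\otimes K)^{\otimes r}) \leq H_p^{\min}(L^{\otimes r}) + H_p^{\min}(K^{\otimes r})$, divide by $r H_p^{\min}(L\otimes K)$, and send $r \to \infty$. By Fekete and the definition of $\alpha_p$, the two summands on the right tend to $\alpha_p(L) H_p^{\min}(L)/H_p^{\min}(L\otimes K)$ and $\alpha_p(K) H_p^{\min}(K)/H_p^{\min}(L\otimes K)$, respectively. Factoring out $v_p(L,K) = [H_p^{\min}(L)+H_p^{\min}(K)]/H_p^{\min}(L\otimes K)$ yields precisely the convex combination with weight $t = H_p^{\min}(L)/[H_p^{\min}(L)+H_p^{\min}(K)]$; the second inequality is the trivial estimate $ta + (1-t)b \leq \max\{a,b\}$.

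Part (4) is immediate from the infimum characterization applied at $r=2$:
\begin{equation*}
\alpha_p(L) \leq \frac{H_p^{\min}(L^{\otimes 2})}{2 H_p^{\min}(L)} = \frac{1}{v_p(L,L)}.
\end{equation*}
The main obstacle, if any, is bookkeeping for degenerate cases where $H_p^{\min}$ vanishes on $L$, $K$, or $L \otimes K$, which I would handle by a brief case split invoking Proposition \ref{prop:additivity-rates} and the convention $t=0$ stated in the proposition; no deeper ideas are required beyond sub-additivity and Fekete's lemma.
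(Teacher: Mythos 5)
Your proposal is correct and follows essentially the same route as the paper: all four parts are reduced to the infimum/limit characterization of $\alpha_p$ from Proposition \ref{prop:additivity-rates}, combined with the sub-additivity \eqref{eq:H-p-min-sub-additive} and the definition of $v_p$ (the paper's own proof is just a terser version of the same argument, spelling out only part (2)). Your treatment of part (3) via dividing the sub-additive bound by $r H_p^{\min}(L\otimes K)$ and passing to the limit is exactly what the paper's remark ``follows from the definition of $v_p$'' compresses, and your handling of the zero-entropy degenerate cases matches the conventions already set up in Proposition \ref{prop:additivity-rates}.
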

\begin{proof}
The first property follows directly from the definition. For the second one, in the case when $H_p^{\min}(L) >0$, write
\begin{equation}
\alpha_p(L^{\otimes s})  = \lim_{r \to \infty} \frac{H_p^{\min}((L^{\otimes s})^{\otimes r})}{r H_p^{\min}(L^{\otimes s})} \geq \lim_{r \to \infty} \frac{H_p^{\min}(L^{\otimes sr})}{sr H_p^{\min}(L)} = \alpha_p(L).
\end{equation}
The last two statements follow from the definition of the relative violation $v_p$. 
\end{proof}

\begin{remark}
In Proposition \ref{prop:additivity-rates2}, we set an upper bound for $\alpha_p(L)$ by using the relative violation $v_p(L,L)$. 
However, in Section \ref{sec:add-rate}, we lower bound $\alpha_p(L\otimes \bar L)$ by using $v_p(L,\bar L)$,
where $\bar L$ is the complex conjugate of $L$.
\end{remark}

\subsection{Examples: the Werner-Holevo and the antisymmetric channels}

In Proposition \ref{prop:additivity-rates2}, we have seen that any $p$-additive channel $L$ has unit additivity rate $\alpha_p(L)=1$. We discuss next some examples of non-additive channels. Below, we shall denote with $A^\top$ the transposition of a matrix $A$.

\begin{example}\label{ex:WH}
The Werner-Holevo channel $W_d:\mathcal M_d(\mathbb C) \to \mathcal M_d(\mathbb C)$,
\be W_d(X) = \frac{1}{d-1} [ \mathrm{Tr}(X) I_d - X^\top]\ee
is the first known example of a quantum channel that violates the additivity of the minimum $p$-output R\'enyi entropy. In \cite{who}, it has been shown that $W_3$ violates the additivity for any value $p > 4.79$. From \cite{who} we have explicitly
\be v_p(W_d,W_d) \geq \frac{2 \log (d-1)}{\log\left\{ (d^2-1)[(1-2/d)/(d-1)^2]^p + [(2-2/d)/(d-1)^2]^p\right\}/(p-1)},\ee
from which we can infer the following upper bounds for additivity rates (see \eqref{eq:additivity-violation-upper-bound-alpha}):
\be \alpha_5(W_3) \leq \frac{1}{v_5(W_3,W_3)} \leq 0.989 \quad \text{and} \quad \alpha_\infty(W_3) \leq \frac{1}{v_\infty(W_3,W_3)} \leq \frac{\log 3}{\log 4}.\ee
\end{example}

\begin{example}\label{ex:GHP}
In \cite{ghp}, the authors construct explicit counterexamples to the additivity relation, for all values $p>2$, by considering the natural embedding of the anti-symmetric subspace $\Lambda^2(\mathbb C^d)$ into $\mathbb C^d \otimes \mathbb C^d$. This yields a channel $A_d:\mathcal M_{d(d-1)/2}(\mathbb C) \to \mathcal M_d(\mathbb C)$. From \cite{ghp}, one has
\be v_p(A_d,A_d) \geq \frac{2 \log 2}{\frac{p}{p-1}\log\left[2 \frac{d}{d-1} \right]}.\ee
From the above relation, using \eqref{eq:additivity-violation-upper-bound-alpha}, one gets, for example
\be \forall p > 2, \quad \alpha_p(A_d)  \leq \frac{1}{v_p(A_d,A_d)} = \frac{p}{2(p-1)} (1+\log_2[d/(d-1)]) \xrightarrow[d \to \infty]{} \frac{p}{2(p-1)} < 1.\ee
\end{example}

\section{Additive bounds for the R\'enyi entropies via (partial) traces and transpositions}\label{sec:additive-bounds}

In this section we introduce several \emph{additive} bounds for the R\'enyi entropies of quantum channels (we focus on $p=2,\infty$) that we obtain by considering the operator norm of the vectorized version of the isometry defining the channel, after applying one or several traces or transpositions. We perform an exhaustive study of this method, concluding that the method yields 5 non-trivial bounds, including the one studied by Montanaro \cite{mon}. The key point is that the bounds we are providing are additive with respect to tensor powers of channels, so they can be used to bound the additivity rates defined in the previous sections. Interesting bounds for the classical capacity of quantum channels can be obtained from these bounds. 

Recall that the $2$, resp.~ $\infty$-minimum output R\'enyi entropies of a quantum channel $L$ are 
\begin{align}
H_2^{\min}(L) &= \min_X \frac{ \log \mathrm{Tr}\left[L(X)^2 \right]}{1-2} = - \max_X \log \mathrm{Tr}\left[L(X)^2 \right]\\
H_\infty^{\min}(L) &= - \max_X \log \|L(X)\|,
\end{align}
where $X$ runs over all the input quantum states. In what follows, we shall write $\Theta(A) = A^\top$ for the transposition map, which is an involution on matrix algebras. Moreover, for bi-partite matrices $B \in \mathcal M_p(\mathbb C) \otimes \mathcal M_q(\mathbb C)$, we write $B^\Gamma$ for the \emph{partial transposition} of $B$ with respect to the second subsystem, 
\be B^\Gamma = [\mathrm{id}_p \otimes \Theta_q](B).\ee
Equivalently, the partial transposition operation can be defined on simple tensors by $(B_1 \otimes B_2)^\Gamma = B_1 \otimes B_2^\top$.

\subsection{Quantities arising from vectorized isometries}

The starting point of our study is the vectorization of the isometry $V : \mathbb C^d \to \mathbb C^n \otimes \mathbb C^k$ defining the channel $L$ as in \eqref{eq:quantum-channel-stinesrping}. To this isometry we associate its vectorization $v \in \mathbb C^n \otimes \mathbb C^k \otimes \mathbb C^d$ (which is a tripartite tensor) by the relation
\begin{equation}\label{eq:v-V}
v = \sum_{i=1}^n \sum_{j=1}^k \sum_{s=1}^d \langle e_i \otimes f_j, V g_s \rangle e_i \otimes f_j \otimes g_s,
\end{equation}
where $\{e_i\}$, $\{f_j\}$, $\{g_s\}$ are orthonormal bases of respectively $\mathbb C^n$, $\mathbb C^k$, $\mathbb C^d$. The Choi matrix $C_L$ of the channel $L$ (see \eqref{eq:choi-matrix}) is related to the third order tensor $v$ by the partial trace operation:
\be C_L = [\mathrm{Tr}_n \otimes \mathrm{id_k} \otimes \mathrm{id_d}](vv^*).\ee
For a graphical representation of the vectorization $v$ and its relation to the Choi matrix $C_L$, see Figure \ref{fig:v-V}. Note also that $\|v\|^2 = \mathrm{Tr}(V^*V) = d$.

\begin{figure}[htbp] 
\includegraphics{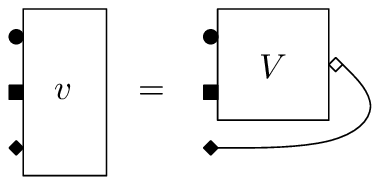} \quad \quad \quad
\includegraphics{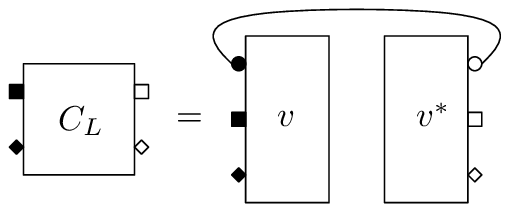}
\caption{The vectorization $v$ of the isometry $V$ defining a quantum channel is an order 3 tensor. The partial trace of the orthogonal projection of $v$ with respect to the first tensor factor gives the Choi matrix of the channel. In both figures, round shaped labels correspond to $\mathbb C^n$, squares  to $\mathbb C^k$, and diamonds to $\mathbb C^d$.} 
\label{fig:v-V}
\end{figure}

We shall now apply different operations to the orthogonal projection on $v$ and take the operator norm of the resulting matrix, in order to obtain a scalar quantity $B(v)$; in the next subsection, we show that some of these quantities are (additive) bounds for the $2$ or the $\infty$  minimum output R\'enyi entropies of quantum channels. We are interested in  three operations: the identity $\mathrm{id}$, the trace $\mathrm{Tr}$, and the transposition $\Theta$. These operators will act on the three tensor factors of $vv^*$, and we shall denote by $B_{QRS} = B_{QRS}(L)$ the quantity
\begin{equation}
B_{QRS} = \| [Q_n \otimes R_k \otimes S_d](vv^*)\|,
\end{equation}
where $Q,R,S \in \{\mathrm{id},\mathrm{Tr}, \Theta\}$. As an illustration, in the case of the Choi matrix, where we apply the trace map on the first factor and the identity on the other two, we have $B_{\mathrm{Tr}, \mathrm{id}, \mathrm{id}}(L) = \|C_L\|$.

We gather in Table \ref{tab:bounds} the $27=3^3$ possibilities we obtain by applying on each of the tensor factors the maps above. We obtain 5 different bounds ($B_{I}$, $B_{C}$, $B_{C\Gamma}$, $B_{Cc\Gamma}$, and $B_{M\Gamma}$) which appear several times in the list, as well as some trivial, constant bounds which do not depend on the channel.

\begin{table}[htdp]
\caption{Bounds obtained by applying traces and transpositions to the orthogonal projection on $v$. The defining occurrences of the bounds appear in red, while the other occurrences appear in blue; trivial, constant bounds are black.}
\begin{center}
\begin{tabular}{|r|c|c|c|l|}
\hline
N$^\circ$ & $\mathbb C^n$ & $\mathbb C^k$ & $\mathbb C^d$ & Norm  \\ \hline \hline 
1. & $\mathrm{id}$ & $\mathrm{id}$ & $\mathrm{id}$ & $=d$ \\ \hline
2. & $\mathrm{id}$ & $\mathrm{id}$ & $\mathrm{Tr}$ & $=1$ \\ \hline
3. & $\mathrm{id}$ & $\mathrm{id}$ & $\Theta$ & $=1$ \\ \hline
4. & $\mathrm{id}$ & $\mathrm{Tr}$ & $\mathrm{id}$ & \textcolor{blue}{$=B_{I}$} \\ \hline
5. & $\mathrm{id}$ & $\mathrm{Tr}$ & $\mathrm{Tr}$ & \textcolor{blue}{$=B_C$} \\ \hline
6. & $\mathrm{id}$ & $\mathrm{Tr}$ & $\Theta$ & \textcolor{red}{$=:B_{Cc\Gamma}$} \\ \hline
7. & $\mathrm{id}$ & $\Theta$ & $\mathrm{id}$ & \textcolor{blue}{$=B_{I}$} \\ \hline
8. & $\mathrm{id}$ & $\Theta$ & $\mathrm{Tr}$ & \textcolor{red}{$=:B_{M\Gamma}$} \\ \hline
9. & $\mathrm{id}$ & $\Theta$ & $\Theta$ & \textcolor{blue}{$=B_C$} \\ \hline
10. & $\mathrm{Tr}$ & $\mathrm{id}$ & $\mathrm{id}$ & \textcolor{red}{$=:B_C$} \\ \hline
11. & $\mathrm{Tr}$ & $\mathrm{id}$ & $\mathrm{Tr}$ & \textcolor{red}{$=:B_{I}$} \\ \hline
12. & $\mathrm{Tr}$ & $\mathrm{id}$ & $\Theta$ & \textcolor{red}{$=:B_{C\Gamma}$} \\ \hline
13. & $\mathrm{Tr}$ & $\mathrm{Tr}$ & $\mathrm{id}$ & $=1$ \\ \hline
14. & $\mathrm{Tr}$ & $\mathrm{Tr}$ & $\mathrm{Tr}$ & $=d$ \\ \hline
\end{tabular}
\qquad
\begin{tabular}{|r|c|c|c|l|}
\hline
N$^\circ$ & $\mathbb C^n$ & $\mathbb C^k$ & $\mathbb C^d$ & Norm  \\ \hline \hline 
15. & $\mathrm{Tr}$ & $\mathrm{Tr}$ & $\Theta$ & $=1$ \\ \hline
16. & $\mathrm{Tr}$ & $\Theta$ & $\mathrm{id}$ & \textcolor{blue}{$=B_{C\Gamma}$} \\ \hline
17. & $\mathrm{Tr}$ & $\Theta$ & $\mathrm{Tr}$ & \textcolor{blue}{$=B_{I}$} \\ \hline
18. & $\mathrm{Tr}$ & $\Theta$ & $\Theta$ & \textcolor{blue}{$=B_C$} \\ \hline
19. & $\Theta$ & $\mathrm{id}$ & $\mathrm{id}$ & \textcolor{blue}{$=B_C$} \\ \hline
20. & $\Theta$ & $\mathrm{id}$ & $\mathrm{Tr}$ & \textcolor{blue}{$=B_{M\Gamma}$} \\ \hline
21. & $\Theta$ & $\mathrm{id}$ & $\Theta$ & \textcolor{blue}{$=B_{I}$} \\ \hline
22. & $\Theta$ & $\mathrm{Tr}$ & $\mathrm{id}$ & \textcolor{blue}{$=B_{Cc\Gamma}$} \\ \hline
23. & $\Theta$ & $\mathrm{Tr}$ & $\mathrm{Tr}$ & \textcolor{blue}{$=B_C$} \\ \hline
24. & $\Theta$ & $\mathrm{Tr}$ & $\Theta$ & \textcolor{blue}{$=B_{I}$} \\ \hline
25. & $\Theta$ & $\Theta$ & $\mathrm{id}$ & $=1$ \\ \hline
26. & $\Theta$ & $\Theta$ & $\mathrm{Tr}$ & $=1$ \\ \hline
27. & $\Theta$ & $\Theta$ & $\Theta$ & $=d$ \\ \hline
\end{tabular}
\end{center}
\label{tab:bounds}
\end{table}

Let us first comment on the equalities in Table \ref{tab:bounds}. These follow from the following basic facts about the operator norm. First, we note that the operator norm is invariant under \emph{global} transposition. Moreover, consider a rank one projection $xx^*$ acting on a bipartite Hilbert space $\mathbb C^p \otimes \mathbb C^q$ (here, $p$ is the product of some, possible empty, subset of $\{n,k,d\}$ and $q=nkd/p$). Then, one has
\begin{equation}\label{eq:equality-id-Tr-Theta-projection}
\|[\mathrm{id}_p \otimes \mathrm{Tr}_q](xx^*)\| = \|[\mathrm{Tr}_p \otimes \mathrm{id}_q](xx^*)\| = \|[\mathrm{id}_p \otimes \Theta_q](xx^*)\| = \|[\Theta_p \otimes \mathrm{id}_q](xx^*)\| = \lambda_1,
\end{equation}
where $\lambda_1$ is the largest Schmidt coefficient of the vector $x \in \mathbb C^p \otimes \mathbb C^q$. The statement above is well known in the case of the partial traces. In the case of the partial transpositions, it is also straightforward, see \cite[Lemma III.3]{hil}. Let us use these simple facts to prove the equality cases in Table \ref{tab:bounds}. 

First, the constant value $d$ appears 3 times in the table, and the equality of the three quantities is precisely equation \eqref{eq:equality-id-Tr-Theta-projection} with the choice $p=1$, $q=nkd$. The same relation \eqref{eq:equality-id-Tr-Theta-projection} implies the equality of the quantities in the rows 2, 3, 13, and 25 with the choice $p=nk$, $q=d$; the common value is the operator norm of the orthogonal projection on the image of $V$ (see row 2), which is 1. Moreover, applying a global transposition to the quantities on rows 15 and 26, we obtain, respectively, the quantities on rows 13 and 2, which are 1; we have shown thus the equality of all the quantities which give 1 in Table \ref{tab:bounds}.

The choice $p=k$, $q=nd$ in \eqref{eq:equality-id-Tr-Theta-projection} yields the equality of the rows 4,7,11,21; the common value is the operator norm output of the identity $\|L(I)\|$, as seen from row 11. Applying a global transposition on the rows 17 and 24, we obtain the rows 11 and 4, giving the same value $B_I:=\|L(I)\|$.

Similarly, the choice $p=n$, $q=kd$ in \eqref{eq:equality-id-Tr-Theta-projection} yields the equality of the rows 5, 9, 10, 19; the common value is the operator norm of the Choi matrix $\|C_L\|$, as seen from row 10. Applying a global transposition on the rows 23 and 18, we obtain the rows 5 and 10, giving the same value $B_C:=\|C_L\|$.

The 6 remaining cases are the ones corresponding to the 6 permutations of the operators $\mathrm{id}$, $\mathrm{Tr}$, and $\Theta$ acting on the 3 legs of the tensor $v$. The fact that each of the quantities $B_{C\Gamma}$, $B_{Cc\Gamma}$, and $B_{M\Gamma}$ appears twice is a consequence of the invariance of the norm by global transposition. The quantity $B_{C\Gamma}$ corresponds to the partial transposition of the Choi matrix of $L$: $B_{C\Gamma} := \|C_L^\Gamma\|$. The quantity  $B_{Cc\Gamma}$ corresponds to the partial transposition of the Choi matrix of the complementary channel  $L^c$: $B_{Cc\Gamma} := \|C_{L^c}^\Gamma\|$. Finally,  $B_{M\Gamma}$ corresponds to Montanaro's bound \cite[Fact 1 and Proposition 4]{mon}: it is the norm of the partial transposition of the orthogonal projection on the image of $V$,  $B_{M\Gamma} := \|M_L^\Gamma\|$, where $M$ is the projection on the image of $V$, $M_L = VV^*$. 

Let us note the important fact that the last two bounds we considered,  $\|C_{L^c}^\Gamma\|$ and $\|M_L^\Gamma\|$, are not defined in terms of the channel $L$, but in terms of the isometry $V$, or its vectorized version $v$ (whereas the first two are defined in terms of the Choi matrix of $L$). In the next lemma, we show that these two quantities do not depend on the actual choice of the isometry $V$ defining the channel, but only on the channel itself, so the notations $B_{Cc\Gamma}(L)$ and $B_{M\Gamma}(L)$  are justified.

\begin{lemma}
Consider a fixed quantum channel $L:\mathcal M_d(\mathbb C) \to \mathcal M_k(\mathbb C)$ and let $V:\mathbb C^d \to \mathbb C^n \otimes \mathbb C^k$ be a Stinespring isometry for $L$, i.e.~ $L(X) = [\mathrm{Tr}_n \otimes \mathrm{id}_k](V X V^*)$. Then, the bounds $B_{Cc\Gamma} = \|C_{L^c}^\Gamma\|$ and $B_{M\Gamma} = \|M_L^\Gamma\|$, defined using $V$, do not depend on the choice of the isometry $V$.
\end{lemma}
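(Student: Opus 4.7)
The plan is to reduce everything to the Stinespring uniqueness theorem: any two isometries $V_1:\mathbb C^d\to\mathbb C^{n_1}\otimes\mathbb C^k$ and $V_2:\mathbb C^d\to\mathbb C^{n_2}\otimes\mathbb C^k$ Stinespring-dilating the same channel $L$ are related by an isometry acting on the environment space. After padding the smaller environment with zeros (or, equivalently, working directly with a non-square isometry), one may take $V_2=(W\otimes I_k)V_1$ for some unitary $W\in\mathcal U(n)$ with $n=\max(n_1,n_2)$. I would invoke this as the single external input and then verify that each of the two bounds in question is invariant under this substitution.

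For $B_{M\Gamma}=\|M_L^\Gamma\|$, the change-of-isometry formula immediately yields
\[
V_2V_2^* \;=\; (W\otimes I_k)\,V_1V_1^*\,(W^*\otimes I_k).
\]
The partial transposition $\mathrm{id}_n\otimes\Theta_k$ commutes with conjugation by $W\otimes I_k$, since the latter acts nontrivially only on the first tensor factor; this is transparent upon expanding in a simple-tensor basis. Hence
\[
(V_2V_2^*)^\Gamma \;=\; (W\otimes I_k)\,(V_1V_1^*)^\Gamma\,(W^*\otimes I_k),
\]
and because conjugation by an isometry preserves the operator norm, $B_{M\Gamma}$ depends only on $L$.

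For $B_{Cc\Gamma}=\|C_{L^c}^\Gamma\|$, substituting $V_2=(W\otimes I_k)V_1$ into the definition $L^c(X)=[\mathrm{id}_n\otimes\mathrm{Tr}_k](VXV^*)$ gives $L^c_{V_2}(X)=W\,L^c_{V_1}(X)\,W^*$. Feeding this into the Choi-matrix formula \eqref{eq:choi-matrix} yields
\[
C_{L^c_{V_2}} \;=\; (W\otimes I_d)\,C_{L^c_{V_1}}\,(W^*\otimes I_d),
\]
and the same combination of elementary facts — commutation between partial transposition on the $\mathbb C^d$-factor and conjugation on the $\mathbb C^n$-factor, followed by isometry-invariance of the operator norm — completes the argument.

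There is no genuine obstacle: the proof is a direct application of Stinespring uniqueness together with two elementary observations. The one mild care-point is the case of differing environment dimensions, which is handled by zero-padding (or by replacing the unitary $W$ above with an isometry between the two environments); neither refinement changes the norm computations, since conjugation by any isometry is norm-preserving.
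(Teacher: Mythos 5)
Your proof is correct and follows essentially the same route as the paper's: the paper relates the vectorization $v$ of an arbitrary Stinespring isometry to a minimal purification $v_0$ of $C_L$ via an isometry $W$ on the environment and then observes that the partial trace, partial transposition, and operator norm are unaffected, which is exactly your "Stinespring uniqueness plus invariance under environment isometries" argument in slightly different clothing. Your extra care about unequal environment dimensions (zero-padding, or using an isometry rather than a unitary $W$) is a harmless refinement that the paper handles implicitly by comparing everything to the minimal purification.
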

\begin{proof}
Let $r$ be the rank of the Choi matrix $C_L$ of $L$ and consider a \emph{minimal} purification $v_0 \in \mathbb C^r \otimes (\mathbb C^k \otimes \mathbb C^d)$ of $C_L$. The vectorized version of the Stinespring isometry $V$ is another purification of $C_L$ (not necessarily minimal). Since any purification of a quantum mixed state is related to a minimal one by an isometry, there exist an isometric operator $W: \mathbb C^r \to \mathbb C^n$ such that $v = (W \otimes I_{kd})v_0$. We now have that 
\be vv^*  = (W \otimes I_{kd}) v_0v_0^*(W \otimes I_{kd})^*,\ee
so, after taking a partial trace, a partial transposition, the operator norm, and using $W^*W = I_r$, we can conclude.
\end{proof}
\medskip

To summarize, we have associated to a quantum channel $L$ the following quantities:
\begin{align}
\label{eq:B-C}B_C(L) &= \| C_L \| \\
\label{eq:B-C-Gamma}B_{C\Gamma}(L) &= \| C_L^\Gamma \| \\
\label{eq:B-C-c-Gamma}B_{Cc\Gamma}(L) &= \| C_{L^c}^\Gamma \| \\
\label{eq:B-M}B_{M\Gamma}(L) &= \| M_L^\Gamma \| \\
\label{eq:B-I}B_I(L) &= \| L(I) \|,
\end{align}
where $C_L$, resp.~ $C_{L^c}$ are the Choi matrices of the channel $L$, resp.~ of the complementary channel $V^c$ and $M_L=VV^*$ is the projection on the image of the Stinespring isometry $V$ defining the channel $L$.

\subsection{Additivity and R\'enyi entropy bounds}

We start our discussion with the proof of the fact the quantities $B_\cdot(L)$ introduced previously \eqref{eq:B-C}-\eqref{eq:B-I} are \emph{multiplicative} with respect to the tensor product operation. Later, in order to be consistent with the entropic quantities, we shall take logarithms of these quantities, making them \emph{additive}. 

\begin{lemma}\label{lem:mult}
Let $Q,R,S$ be arbitrary operations chosen from the set $\{\mathrm{id}, \mathrm{Tr}, \Theta\}$. Then, for any quantum channels $L_1, L_2$, the following multiplicativity relation holds:
\be B_{Q,R,S}(L_1 \otimes L_2) = B_{Q,R,S}(L_1) \cdot B_{Q,R,S}(L_2).\ee
\end{lemma}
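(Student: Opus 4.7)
The plan is to reduce the statement to two well-known multiplicativity properties: the factorization of the three maps $\mathrm{id}, \mathrm{Tr}, \Theta$ over tensor products of matrix algebras, and the multiplicativity of the operator norm under tensor products.

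First, I would relate the vectorization of the isometry defining $L_1 \otimes L_2$ to the vectorizations $v_1, v_2$ of the isometries defining $L_1$ and $L_2$. Concretely, if $V_j : \mathbb C^{d_j} \to \mathbb C^{n_j} \otimes \mathbb C^{k_j}$ is a Stinespring isometry for $L_j$, then $V_1 \otimes V_2$ (after permuting the ambient tensor factors to group together the $n$-legs, the $k$-legs, and the $d$-legs) is a Stinespring isometry for $L_1 \otimes L_2$. Under this permutation $\pi$ of the tensor factors, the vectorization of $V_1 \otimes V_2$ becomes $v_1 \otimes v_2 \in (\mathbb C^{n_1}\otimes \mathbb C^{n_2}) \otimes (\mathbb C^{k_1}\otimes \mathbb C^{k_2}) \otimes (\mathbb C^{d_1}\otimes \mathbb C^{d_2})$, so that the rank-one projection satisfies $v_{L_1\otimes L_2}v_{L_1\otimes L_2}^* = \pi \bigl[(v_1 v_1^*) \otimes (v_2 v_2^*)\bigr]\pi^{-1}$ (modulo the obvious reshuffling). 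Since the operator norm is invariant under such unitary conjugations, I can replace $v_{L_1\otimes L_2}v_{L_1\otimes L_2}^*$ by the shuffled tensor product in the definition of $B_{Q,R,S}$.

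Next, I would observe that each of the three building-block operations factorizes over tensor products of matrix algebras:
\begin{equation}
\mathrm{id}_{ab} = \mathrm{id}_a \otimes \mathrm{id}_b,\qquad \mathrm{Tr}_{ab} = \mathrm{Tr}_a \otimes \mathrm{Tr}_b,\qquad \Theta_{ab} = \Theta_a \otimes \Theta_b.
\end{equation}
The first two are standard; the third is the statement $(A \otimes B)^\top = A^\top \otimes B^\top$. Consequently,
\begin{equation}
[Q_{n_1 n_2} \otimes R_{k_1 k_2} \otimes S_{d_1 d_2}]\bigl((v_1 v_1^*) \otimes (v_2 v_2^*)\bigr) = [Q_{n_1}\otimes R_{k_1}\otimes S_{d_1}](v_1 v_1^*) \otimes [Q_{n_2}\otimes R_{k_2}\otimes S_{d_2}](v_2 v_2^*),
\end{equation}
again up to the permutation $\pi$ of tensor factors, which does not affect the norm.

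Finally, using $\|A \otimes B\| = \|A\|\cdot\|B\|$ for the operator norm on a tensor product of Hilbert spaces, I conclude
\begin{equation}
B_{Q,R,S}(L_1\otimes L_2) = B_{Q,R,S}(L_1)\cdot B_{Q,R,S}(L_2),
\end{equation}
as required. There is no real obstacle: the only substantive work is the bookkeeping of the tensor-factor permutation $\pi$ that regroups the three types of legs; once this is set up, the two multiplicativity properties make the identity immediate.
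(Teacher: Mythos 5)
Your proof is correct and follows essentially the same route as the paper's: identify the vectorization of $V_1\otimes V_2$ with $v_1\otimes v_2$ (up to the leg-regrouping permutation, which the paper treats implicitly), use that $\mathrm{id}$, $\mathrm{Tr}$, and $\Theta$ each factor over tensor products, and conclude by multiplicativity of the operator norm. Your explicit bookkeeping of the permutation $\pi$ is a slightly more careful rendering of the same argument.
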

\begin{proof}
Let $V_{1,2}$ the isometries defining the quantum channels $L_{1,2}$. It is immediate that $V_1 \otimes V_2$ is a  Stinespring isometry for $L_1 \otimes L_2$; the same holds for the vectorized versions $v_{1,2}$. Hence, 
\begin{align}
B_{Q,R,S}(L_1 \otimes L_2) &= \| [Q \otimes R \otimes S]\left( (v_1 \otimes v_2) (v_1 \otimes v_2)^* \right) \| \notag\\
&= \| [Q \otimes R \otimes S]\left( v_1v_1^* \otimes v_2v_2^*\right) \| \notag\\
&= \| [Q \otimes R \otimes S]( v_1v_1^*) \otimes [Q \otimes R \otimes S](v_2v_2^*) \| \notag\\
&= \| [Q \otimes R \otimes S]( v_1v_1^*)\| \cdot \| [Q \otimes R \otimes S](v_2v_2^*) \| \notag\\
&= B_{Q,R,S}(L_1) \cdot B_{Q,R,S}(L_2),
\end{align}
where we have used the multiplicativity of the maps $T \in \{\mathrm{id}, \mathrm{Tr}, \Theta\}$: $T(X \otimes Y) = T(X) \otimes T(Y)$.
\end{proof}

We continue with a useful linear algebra lemma, relating the (partially transposed) Choi matrix of a channel with that of the dual channel. Recall that the dual map $L^*$ of a quantum channel $L:\mathcal M_d(\mathbb C) \to \mathcal M_k(\mathbb C)$ is the unital, completely positive map $L^*:\mathcal M_k(\mathbb C) \to \mathcal M_d(\mathbb C)$ which satisfies the following duality relation with respect to the Hilbert-Schmidt scalar product
\be \forall \, X \in \mathcal M_k(\mathbb C), Y \in \mathcal M_d(\mathbb C), \quad \langle X, L(Y) \rangle = \langle L^*(X), Y \rangle.\ee

\begin{lemma}\label{lem:C-L-star}
Let $L:\mathcal M_d(\mathbb C) \to \mathcal M_k(\mathbb C)$ be a linear map and $L^*:\mathcal M_k(\mathbb C) \to \mathcal M_d(\mathbb C)$ be its dual with respect to the usual, Hilbert-Schmidt, scalar product. Then, 
\begin{align}
C_{L^*} &= F_{k,d} C_L^\top F_{k,d}^*\\
C_{L^*}^\Gamma &= F_{k,d} (C_L^\Gamma)^\top F_{k,d}^*,
\end{align}
where $F_{k,d}:\mathbb C^k \otimes \mathbb C^d \to \mathbb C^d \otimes \mathbb C^k$ is the flip operator, i.e. $F_{k,d}(a \otimes b) = b \otimes a$. In particular, the matrices $C_L$ and $C_{L^*}$ (resp. $C_L^\Gamma$ and $C_{L^*}^\Gamma$) have the same spectrum.
\end{lemma}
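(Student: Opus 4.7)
The plan is to verify the two matrix identities by a direct computation of matrix entries, using only the defining duality $\langle X, L(Y)\rangle_{HS} = \langle L^*(X), Y\rangle_{HS}$ together with the hermiticity-preserving property of $L$ (which holds because $L$ is a quantum channel, hence CP). Concretely, fix $j,j'\in \{1,\dots,d\}$ and $a,b \in \{1,\dots,k\}$. On one side, from the definition of the Choi matrix one immediately reads $(C_{L^*})_{(j,a)(j',b)} = [L^*(f_a f_b^*)]_{jj'}$. On the other side, using that $(F_{k,d} X F_{k,d}^*)_{(j,a)(j',b)} = X_{(a,j)(b,j')}$ for any $X \in \mathcal M_k \otimes \mathcal M_d$, one computes $(F_{k,d}C_L^\top F_{k,d}^*)_{(j,a)(j',b)} = (C_L)_{(b,j')(a,j)} = [L(e_{j'}e_j^*)]_{ba}$.

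The key step is to show these two quantities agree. Applying the duality with $X = f_a f_b^*$ and $Y = e_{j'}e_j^*$ and expanding both Hilbert--Schmidt inner products in coordinates yields the relation $[L^*(f_a f_b^*)]_{jj'} = \overline{[L(e_j e_{j'}^*)]_{ab}}$, while the hermiticity-preserving property $L(X^*) = L(X)^*$ converts the complex conjugate on the right into $[L(e_{j'}e_j^*)]_{ba}$. This yields the first identity. For the partial-transpose version, my plan is to apply $\mathrm{id}\otimes \Theta_k$ to both sides of the first identity, using that conjugation by the swap $F_{k,d}$ interchanges partial transposes on the two legs, and that the full transpose factors as $\top = \Theta \otimes \Theta$ and so commutes with any partial transpose. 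A short manipulation of these commutations produces the claimed identity between $C_{L^*}^\Gamma$ and $(C_L^\Gamma)^\top$; alternatively, one can simply redo the matrix-entry computation, now with the partial transposes applied, to read off the formula.

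The spectral equivalence comes for free: each identity presents $C_{L^*}$, resp.\ $C_{L^*}^\Gamma$, as $U M^\top U^*$ with $U = F_{k,d}$ unitary, and both transposition and unitary conjugation preserve the spectrum. I expect the main obstacle to be purely bookkeeping: the two Choi matrices $C_L$ and $C_{L^*}$ live in the swapped tensor orderings $\mathcal M_k\otimes \mathcal M_d$ and $\mathcal M_d\otimes \mathcal M_k$, so one must be careful about which leg each partial transpose acts on and how the flip permutes indices. Once the index dictionary is set up, the rest of the computation is routine bilinear algebra.
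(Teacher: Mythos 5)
Your proposal is correct and follows essentially the same route as the paper: the paper's proof is exactly this matrix-entry computation (written as a chain of bra--ket identities), with the duality applied as a bilinear trace identity, so your explicit appeal to hermiticity preservation just makes visible a step the paper leaves implicit. One small caution on the second identity: depending on which tensor leg of $C_{L^*}\in\mathcal M_d(\mathbb C)\otimes\mathcal M_k(\mathbb C)$ the symbol $\Gamma$ is taken to transpose, your commutation argument yields either $F_{k,d}(C_L^\Gamma)^\top F_{k,d}^*$ or $F_{k,d}C_L^\Gamma F_{k,d}^*$ --- these differ by a global transposition, which is harmless for the spectral conclusion that the lemma is actually used for.
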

\begin{proof}
We are going to show the first equality, the proof for the partial transpositions being similar. A simple proof of the claim can be obtained using the graphical notation for tensors, see Figure \ref{fig:C-L-star}: the diagram on the right can be obtained by flipping horizontally ($\top$) and vertically ($F_{k,d}$) the input and output labels of the diagram on the left. 
\begin{figure}[htbp] 
\includegraphics{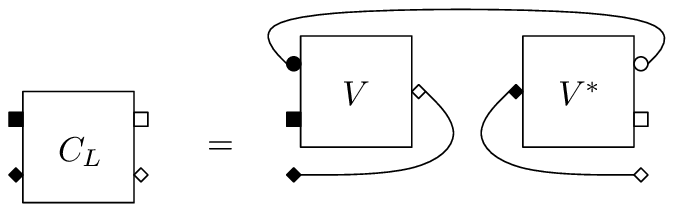} \quad\quad\quad
\includegraphics{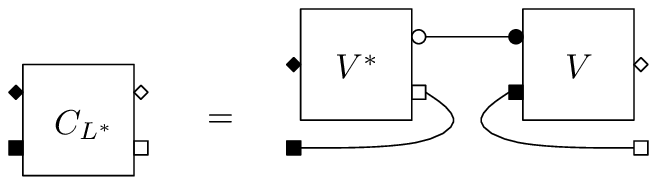}
\caption{Diagrams for the Choi matrix of a channel $L$ (left) and for the Choi matrix of the adjoint channel $L^*$ (right). The channel $L$ is given by a Stinespring isometry $V:\mathbb C^d \to \mathbb C^n \otimes \mathbb C^k$. The round, square, and resp.~diamond shapes denote the vector spaces $\mathbb C^n$, $\mathbb C^k$ and resp.~$\mathbb C^d$.} 
\label{fig:C-L-star}
\end{figure}

Let us now give a detailed, algebraic proof. Consider orthonormal bases $\{e_i\}_{i=1}^d$, $\{f_x\}_{x=1}^k$ of $\mathbb C^d$, resp.~$\mathbb C^k$, and compute
\begin{align}
\nonumber e_i^* \otimes f_x^*C_{L^*}  e_j \otimes f_y &= e_i^* L^*(f_xf_y^*) e_j \\
\nonumber &= \operatorname{Tr}[e_je_i^*  L^*(f_xf_y^*)]\\
\nonumber &= \operatorname{Tr}[L(e_je_i^*) f_xf_y^*]\\
\nonumber &= f_y^* L(e_je_i^*) f_x \\
\nonumber &= f_y^* \otimes e_j^* C_L f_x \otimes e_i\\
\nonumber &= f_x^* \otimes e_i^* C_L^\top f_y \otimes e_j\\
&= e_i^*  \otimes f_x^* F_{k,d} C_L^\top F_{k,d}^* e_j \otimes f_y,
\end{align}
proving the claim.
\end{proof}

We can now prove the main result of this section, the fact that the quantities in \eqref{eq:B-C}-\eqref{eq:B-I} are lower bounds for the minimum output $p$-R\'enyi entropy of the quantum channel $L$.

\begin{proposition}\label{prop:bound}
Let $L:\mathcal M_d(\mathbb C) \to \mathcal M_k(\mathbb C)$ be a quantum channel. Then, for all integers $r \geq 1$, 
\begin{align}
\label{eq:bound-tensor-B-C} H_2^{\min}(L^{\otimes r}) &\geq - r \log \|C_L\| \\
\label{eq:bound-tensor-B-C-Gamma} H_2^{\min}(L^{\otimes r}) &\geq - r \log \|C^\Gamma_L\| \\
\label{eq:bound-tensor-B-C-c-Gamma} H_2^{\min}(L^{\otimes r}) &\geq - r \log \|C^\Gamma_{L^c}\| \\
\label{eq:bound-tensor-B-M} H_\infty^{\min}(L^{\otimes r}) &\geq - r \log \|M_L^\Gamma\| \\
\label{eq:bound-tensor-B-I} H_\infty^{\min}(L^{\otimes r}) &\geq - r \log \|L(I)\| .
\end{align}
\end{proposition}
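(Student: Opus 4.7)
\textbf{Plan for the proof of Proposition \ref{prop:bound}.}

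The plan is to reduce everything to single-letter ($r=1$) inequalities via the multiplicativity of the bounds already established in Lemma \ref{lem:mult}. Indeed, $B_\bullet(L^{\otimes r}) = B_\bullet(L)^r$ for each of the five quantities, so applying a single-letter inequality of the form $H_p^{\min}(K) \geq -\log B_\bullet(K)$ to $K = L^{\otimes r}$ directly yields $H_p^{\min}(L^{\otimes r}) \geq -r \log B_\bullet(L)$. Thus I am reduced to proving five single-letter inequalities.

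For the two $H_\infty$-type bounds, I would bound $\max_\rho \|L(\rho)\|$ directly. The inequality $\|L(\rho)\| \leq \|L(I_d)\|$ is immediate from positivity of $L$ applied to $\rho \leq I_d$. For Montanaro's bound $\|L(\rho)\| \leq \|M_L^\Gamma\|$, I would use the Stinespring representation to write $\|L(\rho)\| = \max_\phi \mathrm{Tr}[(I_n \otimes \phi\phi^*)\, V\rho V^*]$, maximize over states $\rho$ first to reduce this (via $\|V^*AV\| = \|AM_LA\|$ applied to the projection $A = I_n \otimes \phi\phi^*$) to $\max_{\xi,\phi} \mathrm{Tr}[(\xi\xi^* \otimes \phi\phi^*) M_L]$, and then apply the trace-pairing identity $\mathrm{Tr}[XY] = \mathrm{Tr}[X^\Gamma Y^\Gamma]$ to swap $M_L$ for $M_L^\Gamma$ at the cost of replacing $\phi$ by $\bar\phi$; Hermiticity of $M_L^\Gamma$ then gives the bound $\leq \|M_L^\Gamma\|$.

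For the three $H_2$-type bounds, I would first note that $\rho \mapsto \mathrm{Tr}[L(\rho)^2] = \|L(\rho)\|_2^2$ is convex, hence maximized on pure states $\rho = \psi\psi^*$. The key identity
\[
L(\psi\psi^*) = (I_n \otimes \bar\psi^*)\, C_L\, (I_n \otimes \bar\psi)
\]
gives $\mathrm{Tr}[L(\psi\psi^*)^2] = \mathrm{Tr}[C_L P C_L P]$ for the rank-$n$ projection $P = I_n \otimes \bar\psi\bar\psi^*$. The bound with $B_C = \|C_L\|$ then follows from positivity of $C_L$: a spectral-decomposition argument gives $\mathrm{Tr}[C_L P C_L P] \leq \|C_L\|\, \mathrm{Tr}[C_L P]$, and trace preservation yields $\mathrm{Tr}[C_L P] = \mathrm{Tr}[L(\psi\psi^*)] = 1$. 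For $B_{C\Gamma} = \|C_L^\Gamma\|$ I would prove the stronger operator-norm statement $\|L(\psi\psi^*)\| \leq \|C_L^\Gamma\|$, via the Choi representation $\mathrm{Tr}[\phi\phi^* L(\psi\psi^*)] = \mathrm{Tr}[(\phi\phi^* \otimes \bar\psi\bar\psi^*) C_L]$ and the trace-pairing identity, which transforms the right-hand side into $\langle \phi \otimes \psi, C_L^\Gamma(\phi \otimes \psi)\rangle \leq \|C_L^\Gamma\|$; then $\mathrm{Tr}[\sigma^2] \leq \|\sigma\|$ for any density $\sigma$ finishes the job. The bound with $B_{Cc\Gamma}$ uses the Schmidt-decomposition observation (already recorded in Section~\ref{sec:channel}) that $L(\psi\psi^*)$ and $L^c(\psi\psi^*)$ share the same nonzero spectrum, so $H_p^{\min}(L) = H_p^{\min}(L^c)$, and the bound reduces to the $B_{C\Gamma}$ case applied to the complementary channel.

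The main bookkeeping difficulty is to apply the partial-transposition identity $\mathrm{Tr}[XY] = \mathrm{Tr}[X^\Gamma Y^\Gamma]$ on the correct tensor factor at each step, so that the intended matrix among $C_L$, $C_L^\Gamma$, $C_{L^c}^\Gamma$, and $M_L^\Gamma$ appears on the right-hand side; beyond that, all computations are linear-algebraic and routine.
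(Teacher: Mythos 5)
Your proposal is correct, and its overall architecture (reduce to $r=1$ via the multiplicativity of Lemma \ref{lem:mult}, then prove five single-letter inequalities) coincides with the paper's. The treatments of $B_I$ and $B_{Cc\Gamma}$ are identical to the paper's (positivity $L(X)\leq L(I)$, and isospectrality of $L$ and $L^c$ on pure inputs, respectively). Where you genuinely diverge is in the two Choi-matrix bounds and in Montanaro's bound. For $\|C_L\|$ and $\|C_L^\Gamma\|$ the paper uses a ``linearization trick'': it writes $\operatorname{Tr}[L(X)^2]=\operatorname{Tr}[L(X)\otimes L(X)\cdot F_{k,k}]$, pulls one factor back through the dual map to get $\operatorname{Tr}[(X\otimes L(X))\,C_{L^*}^\Gamma]$, bounds this by $\lambda_{\max}(C_{L^*}^\Gamma)$ since $X\otimes L(X)$ is a state, and invokes Lemma \ref{lem:C-L-star} to pass from $C_{L^*}^\Gamma$ to $C_L^\Gamma$; this works for arbitrary mixed inputs with no convexity reduction. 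You instead restrict to pure inputs and use the sandwich identity $L(\psi\psi^*)=(I_k\otimes\bar\psi^*)C_L(I_k\otimes\bar\psi)$ (note your $I_n$ should be $I_k$ -- the first Choi factor is the output space), getting $\operatorname{Tr}[C_LPC_LP]\leq\|C_L\|\operatorname{Tr}[C_LP]=\|C_L\|$ in the positive case and the stronger operator-norm statement $\|L(\psi\psi^*)\|\leq\lambda_{\max}(C_L^\Gamma)$ in the partially transposed case. Your route avoids the dual channel and the flip-operator bookkeeping entirely, at the (small) cost of a convexity argument; the paper's route avoids pure-state reduction but needs the auxiliary isospectrality lemma. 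Finally, for \eqref{eq:bound-tensor-B-M} the paper simply cites \cite[Fact 1 and Proposition 4]{mon}, whereas you reconstruct the argument ($\|V^*AV\|=\|AM_LA\|$ for the projection $A=I_n\otimes\phi\phi^*$, followed by the trace-pairing identity); your reconstruction is sound and is essentially Montanaro's own proof. All steps check out; only the $I_n$ versus $I_k$ slip should be fixed.
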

\begin{proof}
Note that, using the multiplicativity result of Lemma \ref{lem:mult}, we only need to show that the inequalities hold for $r=1$.
Let us start with the first two inequalities, involving the Choi matrix $C_L$. 
The proof begins with a straightforward ``linearization trick'' and proceeds by linear algebra manipulations. For a fixed quantum state $X \in \mathcal M_d^{1,+}(\mathbb C)$, we have
\begin{align}
\nonumber \operatorname{Tr}[L(X)^2] &= \operatorname{Tr}[L(X) \otimes L(X)  \, \cdot \, F_{k,k}]\\
\nonumber &= \operatorname{Tr}\left[ (X\otimes L(X)) \, \cdot \, [L^* \otimes \operatorname{id}](F_{k,k})\right]\\
\label{eq:proof-C-L-Gamma-star}&= \operatorname{Tr}\left[ (X \otimes L(X)) \, \cdot \, C_{L^*}^\Gamma\right]\\
\nonumber &\leq \lambda_{\max}(C_{L^*}^\Gamma) = \lambda_{\max}(C_{L}^\Gamma) \leq \|C_L^\Gamma\|,
\end{align}
where we have used the isospectral property proved in Lemma \ref{lem:C-L-star} and the fact that $X \otimes L(X)$ is positive semidefinite and has unit trace. Taking the supremum over all input states $X$ yields \eqref{eq:bound-tensor-B-C-Gamma}. To show \eqref{eq:bound-tensor-B-C}, apply in \eqref{eq:proof-C-L-Gamma-star} the transposition operation $\Theta$ on the second factor of the tensor product; this will remove the partial transposition on the matrix $C_{L^*}^\Gamma$, and the claim will follow. 

The inequality \eqref{eq:bound-tensor-B-C-c-Gamma} follows from \eqref{eq:bound-tensor-B-C-Gamma} and the fact that $H_2^{\min}(L) = H_2^{\min}(L^c)$. Finally, \eqref{eq:bound-tensor-B-M} has been proved in \cite[Fact 1 and Proposition 4]{mon}, while \eqref{eq:bound-tensor-B-I} follows trivially from the matrix inequality $L(X) \leq L(I)$ (see also \cite{fri}). 
\end{proof}

\begin{remark}
The bounds above are tight: the maximally depolarizing channel $\Delta(X) = \mathrm{Tr}(X) I/d$ saturates bounds \eqref{eq:bound-tensor-B-C} and \eqref{eq:bound-tensor-B-C-Gamma}, while the identity channel $\mathrm{id}(X) = X$ saturates bounds \eqref{eq:bound-tensor-B-C-c-Gamma}, \eqref{eq:bound-tensor-B-M}, and \eqref{eq:bound-tensor-B-I}.
\end{remark}

\subsection{Bounds for additivity rate}
The additive bounds derived above can be used to bound the $p$-additivity rates of channels, as we show in the following result. Note that our starting point is the value at $p=2$ (see Proposition \ref{prop:bound}) so our method will give interesting results for values of $p$ in the interval $[0,2]$. The extension of the inequalities for $p >2$ is less interesting, since the derived inequalities contain additional factors which depend on $p$. Indeed, for any quantum state $\rho$ and $p >2$, 
\begin{equation}\label{eq:ineq-H-p-2}
\frac{p}{2(p-1)} H_2(\rho) \leq H_p(\rho) \leq H_2(\rho),
\end{equation}
which is the best general bound for the $p$-R\'enyi entropy in terms of the $2$-R\'enyi entropy.
\begin{proposition}\label{prop:bound-alpha}
For a quantum channel $L$ having no pure outputs, the $p$-additvity rate \eqref{eq:def-wad} is lower bounded, for all $p \in [0,2]$, by the following quantity
\begin{equation}\label{eq:bound-alpha}
\alpha_p(L) \geq \hat \alpha_p(L) := \frac{- \log B}{H_p^{\min}(L)},
\end{equation}
where $B = \min\{\|C_L\|,\|C_L^\Gamma\|,\|C_{L^c}^\Gamma\|,\|M_L^\Gamma\|, \|L(I)\|\}$.
\end{proposition}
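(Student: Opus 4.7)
The strategy is simply to assemble three earlier results: the infimum characterization of $\alpha_p$ from Proposition \ref{prop:additivity-rates}, the five additive bounds from Proposition \ref{prop:bound}, and the monotonicity of $H_p$ in $p$ from Lemma \ref{lem:H-decreasing-in-p}. The assumption that $L$ has no pure outputs guarantees $H_p^{\min}(L) > 0$ for all $p \in [0, 2]$ (since the rank of any output is at least $2$), so the ratio $\hat{\alpha}_p(L) = -\log B / H_p^{\min}(L)$ is well-defined and the non-degenerate branch of the infimum formula in Proposition \ref{prop:additivity-rates} applies.

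The plan is as follows. First, reduce the problem: by Proposition \ref{prop:additivity-rates},
\begin{equation*}
\alpha_p(L) = \inf_{r \geq 1} \frac{H_p^{\min}(L^{\otimes r})}{r \, H_p^{\min}(L)},
\end{equation*}
so it suffices to prove that $H_p^{\min}(L^{\otimes r}) \geq -r \log B$ for every $r \geq 1$ and every $p \in [0, 2]$. Second, deploy Proposition \ref{prop:bound}: three of the five inequalities there give $H_2^{\min}(L^{\otimes r}) \geq -r \log C$ for $C \in \{\|C_L\|, \|C_L^\Gamma\|, \|C_{L^c}^\Gamma\|\}$, and the remaining two give $H_\infty^{\min}(L^{\otimes r}) \geq -r \log C$ for $C \in \{\|M_L^\Gamma\|, \|L(I)\|\}$. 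Third, propagate these to arbitrary $p \in [0, 2]$ via Lemma \ref{lem:H-decreasing-in-p}: for $p \leq 2$,
\begin{equation*}
H_p^{\min}(L^{\otimes r}) \geq H_2^{\min}(L^{\otimes r}) \quad \text{and} \quad H_p^{\min}(L^{\otimes r}) \geq H_\infty^{\min}(L^{\otimes r}),
\end{equation*}
the first inequality lifting the three $H_2$-bounds and the second lifting the two $H_\infty$-bounds.

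Combining, for every $p \in [0,2]$ and every $r \geq 1$,
\begin{equation*}
H_p^{\min}(L^{\otimes r}) \geq - r \log \min\bigl\{\|C_L\|,\, \|C_L^\Gamma\|,\, \|C_{L^c}^\Gamma\|,\, \|M_L^\Gamma\|,\, \|L(I)\| \bigr\} = -r \log B.
\end{equation*}
Dividing both sides by $r \, H_p^{\min}(L) > 0$ and taking the infimum over $r$ yields $\alpha_p(L) \geq -\log B / H_p^{\min}(L) = \hat{\alpha}_p(L)$, proving the claim.

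There is no serious obstacle here; the result is essentially an orchestration of Proposition \ref{prop:bound} and Lemma \ref{lem:H-decreasing-in-p} through the infimum formula of Proposition \ref{prop:additivity-rates}. The only subtlety worth flagging is the restriction $p \in [0,2]$, which is dictated by the fact that the sharpest additive bounds we have are at $p=2$ and $p=\infty$; monotonicity of $H_p$ in $p$ only flows in the direction of smaller $p$, so these additive bounds cannot be transferred to $p > 2$ without incurring the dimension-dependent loss in \eqref{eq:ineq-H-p-2}.
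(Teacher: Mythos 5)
Your proof is correct and matches the paper's intent exactly: the paper states this proposition without proof, treating it as an immediate consequence of Proposition \ref{prop:bound} combined with the infimum characterization in Proposition \ref{prop:additivity-rates} and the monotonicity of $H_p$ in $p$ from Lemma \ref{lem:H-decreasing-in-p}, which is precisely the orchestration you carry out. Your remarks on why the no-pure-outputs hypothesis is needed and why the restriction to $p\in[0,2]$ arises are both accurate.
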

\begin{remark}
When $p>2$, the quantity $B$ above has to be replaced by 
\begin{equation}
\min\{\|C_L\|^{c_p}, \|C_L^\Gamma\|^{c_p}, \|C_{L^c}^\Gamma\|^{c_p},\|M_L^\Gamma\|, \|L(I)\| \},
\end{equation}
where $c_p = p/(2p-2)$ is a correction exponent which appears because one has to use in this case \eqref{eq:ineq-H-p-2}.
 \end{remark}

We investigate next how the lower bound for additivity rates behaves with respect to tensor products. 

\begin{proposition}\label{prop:alpha-Gamma-v-p}
Let $L$ and $K$ be two quantum channels with the property that their tensor product $L \otimes K$ has no pure output. Then, for all $p \in [0,2]$, the following inequality holds:
\begin{equation}\label{eq:bound-violation}
\hat \alpha_p(L \otimes K) \leq v_p(L,K) \left[ t \hat \alpha_p(L) + (1-t) \hat \alpha_p(K) \right],
\end{equation}
where $v_p(L,K)$ is the relative violation of the minimum $p$-output entropy \eqref{eq:def-relative-violation} and $t  = H_p^{\min}(L) / [H_p^{\min}(L) + H_p^{\min}(K)] \in [0,1]$ (note that $H_p^{\min}(L)$ and $H_p^{\min}(K)$ cannot be both null).
\end{proposition}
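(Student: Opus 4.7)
The plan is to unfold the definitions on both sides of \eqref{eq:bound-violation}, reduce everything to a statement about the multiplicative bound $B$, and then exploit the multiplicativity established in Lemma \ref{lem:mult} together with an elementary inequality about minima.

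First I would compute the right-hand side by direct substitution. Writing $h_L := H_p^{\min}(L)$ and $h_K := H_p^{\min}(K)$, and recalling $t = h_L/(h_L+h_K)$, the bracket in \eqref{eq:bound-violation} collapses as
\begin{equation*}
t \hat\alpha_p(L) + (1-t)\hat\alpha_p(K) = \frac{h_L}{h_L+h_K}\cdot \frac{-\log B(L)}{h_L} + \frac{h_K}{h_L+h_K}\cdot \frac{-\log B(K)}{h_K} = \frac{-\log[B(L)B(K)]}{h_L+h_K}.
\end{equation*}
Multiplying by $v_p(L,K) = (h_L+h_K)/H_p^{\min}(L\otimes K)$ (which is well defined since by hypothesis $L\otimes K$ has no pure output, so $H_p^{\min}(L\otimes K)>0$), the right-hand side of \eqref{eq:bound-violation} equals
\begin{equation*}
\frac{-\log[B(L)B(K)]}{H_p^{\min}(L\otimes K)}.
\end{equation*}
Meanwhile, the left-hand side is $\hat\alpha_p(L\otimes K) = -\log B(L\otimes K)/H_p^{\min}(L\otimes K)$. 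Since the common denominator is positive, the claim is equivalent to the purely multiplicative inequality
\begin{equation*}
B(L\otimes K) \geq B(L)\,B(K).
\end{equation*}

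The second and final step is to verify this inequality. Denote the five candidates by $b^{(i)}(\cdot)$, $i=1,\ldots,5$, corresponding to $\|C_\cdot\|$, $\|C_\cdot^\Gamma\|$, $\|C_{\cdot^c}^\Gamma\|$, $\|M_\cdot^\Gamma\|$, and $\|\cdot(I)\|$, so that $B(L) = \min_i b^{(i)}(L)$. Lemma \ref{lem:mult} gives $b^{(i)}(L\otimes K) = b^{(i)}(L)\,b^{(i)}(K)$ for each $i$ (the last one, $\|L(I)\|$, is immediate from $(L\otimes K)(I) = L(I)\otimes K(I)$, and the others fall under the Lemma). All quantities are non-negative, hence for every $i$,
\begin{equation*}
b^{(i)}(L\otimes K) = b^{(i)}(L)\,b^{(i)}(K) \geq \bigl(\min_j b^{(j)}(L)\bigr)\bigl(\min_j b^{(j)}(K)\bigr) = B(L)\,B(K).
\end{equation*}
Taking the minimum over $i$ on the left yields $B(L\otimes K)\geq B(L)B(K)$, and taking $-\log$ and dividing by $H_p^{\min}(L\otimes K) > 0$ reverses the inequality and delivers \eqref{eq:bound-violation}.

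There is essentially no hard step here; the only subtlety worth flagging is that the inequality $\min_i(a_i b_i) \geq (\min_i a_i)(\min_i b_i)$ can go either way in general, but works in our favour because each factor is non-negative and the \emph{same} index $i$ is used on both sides of the intermediate inequality (i.e., multiplicativity is index-preserving). This is exactly what Lemma \ref{lem:mult} guarantees: the five bounds are individually multiplicative, so the minimum of their products dominates the product of their minima.
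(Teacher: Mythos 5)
Your proof is correct and follows essentially the same route as the paper's: both reduce the claim to the submultiplicativity $B(L)B(K)\leq B(L\otimes K)$ and then compare $-\log$ of the two sides over the common positive denominator $H_p^{\min}(L\otimes K)$. The only difference is that the paper declares the inequality $B_LB_K\leq B_{L\otimes K}$ "obvious," whereas you spell out the (correct) justification via the index-preserving multiplicativity from Lemma \ref{lem:mult} and the elementary fact that a minimum of products of non-negative terms dominates the product of the minima.
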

\begin{proof}
Let $B_L$, $B_K$, and $B_{L \otimes K}$ be the bounds associated to the channels $L$, $K$, and $L \otimes K$, as in  Proposition \ref{prop:bound-alpha}; they obviously satisfy $B_L B_K \leq B_{L \otimes K}$. Starting from the right-hand-side of the inequality to be proven, we have then
\begin{equation}
v_p(L,K) \left[ t \hat \alpha_p(L) + (1-t) \hat \alpha_p(K) \right] = \frac{- \log(B_L B_K)}{H_p^{\min}(L \otimes K)} \geq \frac{- \log B_{L \otimes K}}{H_p^{\min}(L \otimes K)} = \hat \alpha_p(L \otimes K).
\end{equation}
\end{proof}

\subsection{Examples: the Werner-Holevo and the antisymmetric channels}
In this section we compute the bounds derived earlier for the Werner-Holevo channel discussed in Example \ref{ex:WH} and for the antisymmetric channel from Examples \ref{ex:GHP}.

Recall that the Werner-Holevo channel $W_d:\mathcal M_d(\mathbb C) \to \mathcal M_d(\mathbb C)$ is defined by
\begin{equation}
W_d(X) = \frac{1}{d-1} [ \mathrm{Tr}(X) I_d - X^\top].
\end{equation}

First, note that $W_d(I) = I$, hence $\|W_d(I)\| = 1$. The Choi matrix and its partial transpose read
\begin{align}
C_{W_d} &=  \frac{1}{d-1} [ I_{d^2} - F_d] \\
C_{W_d}^\Gamma &=  \frac{1}{d-1} [ I_{d^2} - E_d],
\end{align}
where $E_d$ is the (un-normalized) maximally entangled state \eqref{eq:maximally-entangled} and $F_d$ denotes the flip operator $F_d(x \otimes y) = y \otimes x$, for $x, y \in \mathbb C^d$. The corresponding bounds are easily computed from the above relations: $\|C_{W_d} \| = 2/(d-1)$ and $\|C_{W_d}^\Gamma \| = 1$; in particular, note that the matrix $C_{W_d}^\Gamma$ is not positive semidefinite, i.e.~ the Werner-Holevo channels is not PPT. A minimal purification of the Choi matrix $C_{W_d}$ is given by the four-partite vector $v_{W_d} \in (\mathbb C^d)^{\otimes 4}$ (see Figure \ref{fig:minimal-purification-WH} for a graphical representation)
\begin{equation}\label{eq:def-v-W-d}
v_{W_d} = \frac{1}{\sqrt{2(d-1)}}v =  \frac{1}{\sqrt{2(d-1)}}\sum_{i,j=1}^d e_i \otimes e_j \otimes e_j \otimes e_i - e_i \otimes e_j \otimes e_i \otimes e_j.
\end{equation}

\begin{figure}[htbp] 
\includegraphics{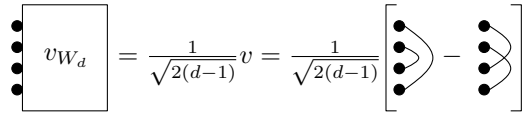} 
\caption{A minimal purification of the Choi matrix of a Werner-Holevo channel.} 
\label{fig:minimal-purification-WH}
\end{figure}

Using the above purification, the remaining two bounds are seen to be equal: $\|C_{W_d^c}^\Gamma\| = \|M_{W_d}^\Gamma\| = 2/(d-1)$. Note that for the vector $v_{W_d}$, the top two factors of the four-partite tensor product correspond to the ``environment dimension'', the third one corresponds to the output, while the fourth one corresponds to the input. In conclusion, the additivity rates of the Werner-Holevo channel are bounded, for all $p \in [0,2]$, as follows:
\begin{equation}
\alpha_p(W_d) \geq \frac{-\log(2/(d-1))}{\log (d-1)} = 1-\frac{\log 2}{\log(d-1)}.
\end{equation}

Let us now move to the case of the antisymmetric channel $A_d$ introduced by Grudka, Horodecki, and Pankowski in \cite{ghp}. If $P:\mathbb C^d \otimes \mathbb C^d \to \mathbb C^{{d \choose 2}}$ is the projection operator on the antisymmetric subspace, i.e.~ $P=Q^*$, where $Q: \Lambda^2(\mathbb C^d) \cong \mathbb C^{{d \choose 2}} \to \mathbb C^d \otimes \mathbb C^d$ is the canonical embedding of the antisymmetric space $\Lambda^2(\mathbb C^d)$ into $\mathbb C^d \otimes \mathbb C^d$, then 
\begin{equation}
v_{A_d} = \frac{1}{\sqrt 2} (I_d \otimes I_d \otimes P) v =  \frac{1}{\sqrt 2} (I_d \otimes I_d \otimes P)\sum_{i,j=1}^d e_i \otimes e_j \otimes e_j \otimes e_i - e_i \otimes e_j \otimes e_i \otimes e_j.
\end{equation}
Noticing the similarities between the formula above and \eqref{eq:def-v-W-d}, one can easily show that $\|C_{A_d}^\Gamma\| = \|C_{A_d^c}^\Gamma\| = 1$, while $\|A_d(I)\| = \|C_{A_d}\| = \|M_{A_d}^\Gamma\| = (d-1)/2$. We conclude that the lower bounds discussed in this paper are trivial for the antisymmetric channel and we leave the question of determining the additivity rates of this channel open.

\section{Partially transposed random Choi matrices and their norm}
\label{sec:random-Choi}

In this section and in the next one, we analyze the behavior of the bounds \eqref{eq:B-C}-\eqref{eq:B-I} for random quantum channels $L$. 
Here, we compute the bound \eqref{eq:B-C-Gamma}, while in the next section we compute the four others (when possible) and we compare them, in the spirit of Proposition \ref{prop:bound-alpha}.

This section contains one of the main results of our work, Theorem \ref{thm:strong-convergence}, which contains a formula for the asymptotic operator norm of the partially transposed Choi matrices of random quantum channels. More precisely, we consider a sequence of random quantum channels $L_n : \mathcal M_{d_n}(\mathbb C) \to \mathcal M_k(\mathbb C)$ where $d_n \sim tkn$, defined by
\begin{equation}\label{eq:random-quantum-channel}
L_n(X)=[\mathrm{Tr}_{\mathbb C^n} \otimes \mathrm{id}_k](V_n XV_n^*),
\end{equation}
where $V_n : \mathbb C^{d_n} \to \mathbb C^n \otimes \mathbb C^k$ is the random isometry.
We write $C_n :=C_{L_n}$ for the Choi matrix of the quantum channel $L_n$. The proof is split into three parts, delimited by subsections, and uses the moment method. We start by computing the asymptotic moments of the random matrices $C_n^\Gamma$, using the graphical Weingarten calculus introduced in Section \ref{sec:graphical-Weingarten}. Then, in the second subsection, we identify a probability measure having the exact moments computed in the first part; the measure is given in terms of the free additive convolution operation from free probability theory. Finally, we show that the random matrices $C_n^\Gamma$ converge strongly, which gives us the desired norm convergence. 

Let us first start with some general considerations about the random matrix we are studying. 
Associated to the channel $L$ is the \emph{partially transposed Choi matrix} of $L$, 
\be C_L^\Gamma = [\mathrm{id}_k \otimes \Theta_d] [L \otimes \mathrm{id_d}](E_d) = [L \otimes \mathrm{id_d}](F_d),\ee
which is the random matrix we are interested in (see Figure \ref{fig:C-Gamma-n} for a graphical representation). Before investigating the eigenvalue distribution of the random matrix $C_L^\Gamma$, let us first comment on its distribution as a matrix. Since, in this paper, Choi matrices are not normalized ($\mathrm{Tr} C_L^\Gamma = \mathrm{Tr} C_L = d$), the matrix $d^{-1}C_L$ is a (random) density matrix (mixed quantum state). In the literature, several ensembles of random density matrices have been considered: the induced measures \cite{zso}, the Bures measure \cite{hal}, or measures associated to graphs \cite{cnz}, just to name a few (see also \cite{zpnc} for a random matrix theory perspective). We would like to argue at this point that the distribution of the Choi matrix we consider is not related to the induced measures introduced in \cite{zso}. Indeed, it is easy to see that the distribution of $C_L$ involves more than one column of the unitary matrix $U$, while, in order to define the induced measures, one needs just one column of a unitary operator (or a random point on the unit sphere, or a normalized Gaussian vector). A rigorous argument for this fact is given in Section \ref{sec:PPT}, Remark \ref{rk:random-density-matrices-PPT}, using PPT thresholds.

\subsection{Exact moments}\label{sec:moment-cal}

In the next proposition, we compute the moments of the partially transposed Choi matrix $C_n^\Gamma$, for a fixed value of $n$. We make use of the graphical calculus formalism introduced in Section \ref{sec:graphical-Weingarten}.

\begin{proposition}\label{prop:moments-CnGamma}
For any integer dimensions $n,k,d$, the moments of the random matrix $C_n^\Gamma \in \mathcal M_{kd}^{sa}(\mathbb C)$ are given by
\begin{equation}\label{eq:moments-CnGamma}
\forall p \geq 1, \quad\mathbb E \frac{1}{kd}\operatorname{Tr}(C_n^\Gamma)^p = (kd)^{-1}\sum_{\alpha,\beta \in \mathcal S_p} n^{\#\alpha}k^{\#(\gamma^{-1}\alpha)}d^{\#(\gamma\beta)}\operatorname{Wg}_{nk}(\alpha^{-1}\beta).
\end{equation}
In the above equation, $\#(\cdot)$ denotes the number of cycles of a permutation, $\gamma = (p   \cdots 3 \, 2 \, 1) \in \mathcal S_p$ denotes the full cycle permutation and $\operatorname{Wg}$ is the Weingarten function \cite{col}, see Definition \ref{def:Wg}.
\end{proposition}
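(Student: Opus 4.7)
The plan is to apply the Weingarten formula (Theorem 2.1) directly to the explicit entry-level description of $C_n^\Gamma$. Writing the isometry $V_n$ as the first $d$ columns of a Haar unitary $U \in \mathcal U(nk)$, a short computation from the definition of the Choi matrix yields
\begin{equation*}
(C_n^\Gamma)_{(j,u),(j',v)} = \sum_{i=1}^n U_{(i,j),v}\,\overline{U_{(i,j'),u}},
\end{equation*}
where the indices $j,j' \in [k]$ and $u,v \in [d]$. Here the crucial observation is that the partial transposition on $\mathbb C^d$ has the effect of swapping $u \leftrightarrow v$ in the labels of $U$ and $\overline U$, compared to the untransposed $C_n$.

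Next I expand $\operatorname{Tr}(C_n^\Gamma)^p$ as a sum over $p$-tuples of indices $(i_m, j_m, u_m)_{m=1}^p$ and apply Theorem 2.1. Each pair of permutations $(\alpha,\beta)\in\mathcal S_p\times\mathcal S_p$ contributes a product of Kronecker deltas weighted by $\operatorname{Wg}_{nk}(\alpha^{-1}\beta)$. Reading off the resulting free-index count in each of the three vector spaces:
\begin{itemize}
\item the ``environment'' indices $i_m \in [n]$ satisfy $i_m=i_{\alpha(m)}$ and contribute a factor $n^{\#\alpha}$;
\item the ``output'' indices $j_m\in[k]$ satisfy $j_m=j_{\alpha(m)+1}=j_{\gamma^{-1}\alpha(m)}$, contributing $k^{\#(\gamma^{-1}\alpha)}$;
\item the ``input'' indices $u_m\in[d]$ satisfy $u_{m+1}=u_{\beta(m)}$; a change of variables rewrites this as $u_m=u_{\beta\gamma(m)}$, and since $\beta\gamma$ is conjugate to $\gamma\beta$ this yields $d^{\#(\gamma\beta)}$.
\end{itemize}
Summing over $(\alpha,\beta)$ and dividing by $kd$ produces exactly the formula \eqref{eq:moments-CnGamma}.

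Equivalently, one can arrive at the same identity via the graphical Weingarten calculus of Theorem 2.2: draw the tensor diagram for $\operatorname{Tr}(C_n^\Gamma)^p$ using $p$ copies of the vectorization $v$ and its conjugate $\bar v$, wire them according to the trace cycle, and implement the partial transposition on $\mathbb C^d$ by crossing the input wires. Each removal $\mathcal D_{\alpha,\beta}$ then produces three families of loops, in the colors corresponding to $\mathbb C^n$, $\mathbb C^k$, and $\mathbb C^d$, and counting these loops reproduces the three exponents above.

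The calculation itself is essentially bookkeeping; the only real subtlety is tracking the orientation of the cyclic shift $\gamma$ in the $\mathbb C^d$ factor. Without the partial transposition the input loops would be counted by $\gamma^{-1}\beta$, but the wire-crossing induced by $\Theta_d$ reverses this cycle and replaces it by $\gamma\beta$. This is the step one has to perform most carefully, and verifying it with the graphical picture (rather than by index juggling) is the safest way to avoid a sign/orientation error.
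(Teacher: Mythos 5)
Your proposal is correct and follows essentially the same route as the paper: the paper's proof is exactly the graphical Weingarten computation you sketch at the end (with the three loop families counted by $\#\alpha$, $\#(\gamma^{-1}\alpha)$, and $\#(\gamma\beta)$), and your primary entry-level index computation is the same Weingarten expansion written algebraically, with the key orientation point — that the partial transposition turns the input-loop count from $\#(\gamma^{-1}\beta)$ into $\#(\gamma\beta)$ — handled correctly.
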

\begin{proof}
The proof is an application of the graphical Weingarten formula from Theorem \ref{thm:graphical-Wg}. We have depicted in Figure \ref{fig:C-Gamma-n} the diagram for the random matrix $C_n^\Gamma$, which we are investigating. Note that we have replaced the random isometry $V_n$ by a random, Haar-distributed, unitary matrix $U_n \in \mathcal U_{nk}$. Moreover, we choose, for the sake of simplicity, not to represent labels corresponding to the added dimensions $nk/d$, which do not play any role in the moment computations which follow, since the contractions of the corresponding wires multiply the result by the scalar $1$.
\begin{figure}[htbp] 
\includegraphics{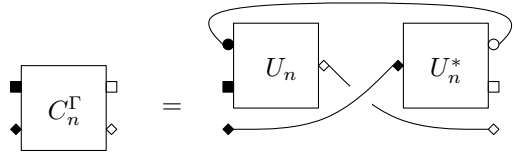} 
\caption{Diagram for the random matrix $C_n^\Gamma$. The box $U_n$ corresponds to a random Haar unitary matrix. We do not depict the labels corresponding to the environment of size $kn/d$, which does not play any role in the computations.} 
\label{fig:C-Gamma-n}
\end{figure}
We are interested in computing, for all integer $p \geq 1$, the moment $\mathbb E \operatorname{Tr}(C_n^\Gamma)^p$. The diagram corresponding to this real number is depicted in Figure \ref{fig:Tr-C-n}.  We now use formula \eqref{eq:graphical-Wg} to compute the expectation, with respect to the random unitary matrix $U_n$:
\be \mathbb E \operatorname{Tr}(C_n^\Gamma)^p  = \sum_{\alpha, \beta \in \mathcal S_p} \mathcal D_{\alpha,\beta} \operatorname{Wg}_{kn}(\alpha^{-1}\beta),\ee
where $\mathcal D_{\alpha,\beta}$ is the diagram obtained by erasing the $U_n$ and $\bar U_n$ boxes and connecting the black (resp. white) decorations of the $i$-th $U_n$ box with the corresponding black (resp. white) decorations of the $\alpha(i)$-th (resp. $\beta(i)$-th) $\bar U_n$ box. The resulting diagram $\mathcal D_{\alpha,\beta}$ is a collection of loops corresponding to different vector spaces, as follows (see Figure \ref{fig:D-alpha-beta}):
\begin{enumerate}
\item $\# \alpha$ loops of dimension $n$, corresponding to round-shaped labels. The round decorations are initially connected with the identity permutation and the graphical expansion connects them with the permutation $\alpha$. The resulting number of loops is $\#\alpha = \#(\mathrm{id}^{-1}\alpha)$;
\item $\#(\gamma^{-1} \alpha)$ loops of dimension $k$, corresponding to square-shaped labels. The square decorations are initially connected with the permutation $\gamma$ (that is, $i \mapsto i-1$) and the graphical expansion connects them with the permutation $\alpha$. The resulting number of loops is $\#(\gamma^{-1} \alpha)$;
\item $\#(\gamma\beta)$ loops of dimension $d$, corresponding to diamond-shaped labels. The diamond decorations are initially connected with the permutation $\gamma^{-1}$ (i.e. $i \mapsto i+1$) and the graphical expansion connects them with the permutation $\beta$. The resulting number of loops is $\#((\gamma^{-1})^{-1} \beta) = \#(\gamma\beta)$.
\end{enumerate}
\begin{figure}[htbp] 
\includegraphics{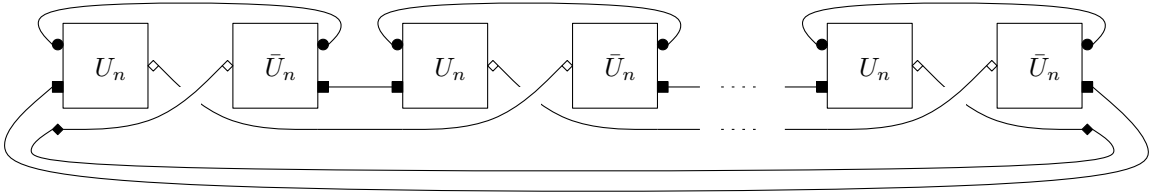} 
\caption{Diagram for the trace of the $p$-th power of the partially transposed Choi matrix $C_n^\Gamma$. The diagram contains $p$ copies of the matrix from Figure \ref{fig:C-Gamma-n}.} 
\label{fig:Tr-C-n}
\end{figure}
\begin{figure}[htbp] 
\includegraphics{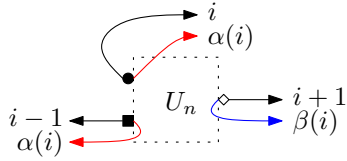} 
\caption{Diagram representing the $i$-th $U_n$ box in the graphical expansion. The box itself no longer exists and the diagram $\mathcal D_{\alpha, \beta}$ consists of a collection of loops.} 
\label{fig:D-alpha-beta}
\end{figure}
\end{proof}

\subsection{Limiting spectral distribution}
\label{sec:limiting-spectrum-C-Gamma}
We consider now the asymptotic behavior of the partially transposed Choi matrix $C_n^\Gamma$, in the following regime:

\medskip

\noindent\textbf{Asymptotic regime}
\begin{itemize}
\item $k$ is fixed
\item $n \to \infty$
\item $d \to \infty$ with $d \sim tkn$ for some fixed ratio $t \in (0,1)$.\end{itemize}

\medskip

In this regime (which we call the ``fixed $k$ regime''), the output dimension $k$ and the input/(total space) ratio $t$ are treated as parameters that we shall consider fixed in what follows. We compute next the asymptotic moments of the random matrix $C_n^\Gamma$ and we identify a probability measure having these exact moments. 
Recall that the dilation operator $D_t$ is defined by $D_t[\mu] = (x \mapsto tx)_\# \mu$.

\begin{theorem}\label{thm:limit-moment}
In the fixed $k$ asymptotic regime, the moments of the random matrix $C_n^\Gamma$ converge towards the moments of the following probability measure $\mu^{(C\Gamma)}_{k,t}$:
\begin{equation}\label{eq:def-mukt-1}
\mu^{(C\Gamma)}_{k,t} = D_t \left[ \left((1-s)\,\delta_{-1} +s \, \delta_{+1} \right)^{\boxplus 1/t}\right],
\end{equation}
where 
$D_\cdot$ is the dilation operator, $\boxplus$ is the additive free convolution and $s:=(k+1)/(2k)$. More explicitly, we can write
\begin{align}
 \label{eq:def-mukt-2} t \cdot \mu^{(C\Gamma)}_{k,t} &= \max(t-s,0)\delta_{-1} + \max(s+t-1,0) \delta_1 + \notag \\
 &\qquad  \frac{\sqrt{(2\varphi^+(s,t)-1-x)(x-2\varphi^-(s,t)+1)}}{2\pi (1-x)(x+1)} \mathbf{1}_{[2\varphi^-(s,t)-1,2\varphi^+(s,t)-1]}(x)dx,
\end{align}
where $\varphi^\pm(s,t)$ were defined in \eqref{eq:phi-st-pm}.
\end{theorem}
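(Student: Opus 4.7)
The plan is to apply the moment method. The starting point is the exact formula of Proposition \ref{prop:moments-CnGamma} combined with the Weingarten asymptotics $\Wg_{nk}(\sigma) = (nk)^{-p-|\sigma|}(\Mob(\sigma) + O((nk)^{-2}))$. Substituting $d \sim tkn$ turns the $p$-th normalized moment $\mathbb E \frac{1}{kd}\operatorname{Tr}(C_n^\Gamma)^p$ into a sum over pairs $(\alpha,\beta) \in \mathcal S_p^2$ of monomials in $n, k, t$ weighted by Möbius factors, in which the exponent of $n$ equals $p - 1 - (|\alpha| + |\alpha^{-1}\beta| + |\gamma\beta|)$. By two applications of the triangle inequality for the Cayley distance, this is at most $0$, with equality precisely for the geodesic chain $\mathrm{id} - \alpha - \beta - \gamma^{-1}$. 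This isolates the leading-order terms; the remaining $O((nk)^{-2})$ tail of the Weingarten expansion is subleading by one more power of $n$.

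On the geodesic, $\beta$ corresponds bijectively to a non-crossing partition $\pi_\beta \in NC(p)$ (with respect to the cyclic order of $\gamma^{-1}$) and $\alpha$ to a refining partition $\pi_\alpha \leq \pi_\beta$. The surviving $t$-exponent is $|\beta| = p - \#\pi_\beta$ and the remaining $k$-exponent simplifies to $\#(\gamma^{-1}\alpha) + |\alpha| - p - 1$. The key combinatorial identity I would then prove is
\begin{equation*}
\#(\gamma^{-1}\alpha_\pi) \;=\; 1 + \#\{V \in \pi : |V| \text{ even}\},
\end{equation*}
valid for the geodesic permutation $\alpha_\pi$ of any $\pi \in NC(p)$; this can be obtained by induction on $\#\pi$ after peeling off an innermost interval block, or by a direct orbit count on the bipartite diagram encoding the product of $\gamma^{-1}$ with $\alpha_\pi$. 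Using $|\alpha| = p - \#\pi_\alpha$, this turns the $k$-exponent into $-\#\{V \in \pi_\alpha : |V| \text{ odd}\}$, which factorizes across the blocks of $\pi_\alpha$.

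Third, I would use the standard identification on the geodesic, $\Mob_{\mathcal S_p}(\alpha^{-1}\beta) = \Mob_{NC(p)}(\pi_\alpha, \pi_\beta) = \prod_{V \in \pi_\beta}\Mob_{NC(|V|)}(\pi_\alpha|_V, 1_V)$, to split the inner sum block by block:
\begin{equation*}
\sum_{\pi_\alpha \leq \pi_\beta} k^{-\#\{V \in \pi_\alpha : |V| \text{ odd}\}} \Mob_{NC(p)}(\pi_\alpha, \pi_\beta) = \prod_{V \in \pi_\beta} \sum_{\tau \in NC(|V|)} k^{-\#\{W \in \tau : |W| \text{ odd}\}} \Mob_{NC(|V|)}(\tau, 1_V).
\end{equation*}
Each inner factor is the Möbius inverse (in $NC(|V|)$) of the multiplicative extension of the moment sequence $m_n = k^{-[n \text{ odd}]} = (1-s)(-1)^n + s$ of the measure $\nu = (1-s)\delta_{-1}+s\delta_{+1}$ with $s = (k+1)/(2k)$, so by the free moment–cumulant inversion it equals the free cumulant $\kappa_{|V|}(\nu)$.

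Finally, I would match the resulting expression $\sum_{\pi \in NC(p)} t^{p-\#\pi}\prod_{V \in \pi} \kappa_{|V|}(\nu)$ with the $p$-th moment of $\mu^{(C\Gamma)}_{k,t}$. Applying Lemma \ref{lemma:push-forward} to the decomposition $\nu = (x \mapsto 2x-1)_\# b_s$ shows that $\mu^{(C\Gamma)}_{k,t} = D_t[\nu^{\boxplus 1/t}] = (x \mapsto 2tx - 1)_\# b_s^{\boxplus 1/t}$, and the scaling rules for free cumulants under $D_t$ and $\boxplus 1/t$ give $\kappa_n(\mu^{(C\Gamma)}_{k,t}) = t^{n-1}\kappa_n(\nu)$, so the moment–cumulant formula reproduces exactly the Weingarten limit. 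The alternative absolutely-continuous form \eqref{eq:def-mukt-2} then follows by pushing forward the explicit density for $b_s^{\boxplus 1/t}$ from Proposition \ref{prop:bernoulli-boxtimes-boxplus} through $x \mapsto 2tx-1$. The main obstacle is the combinatorial identity for $\#(\gamma^{-1}\alpha_\pi)$; once that is established, everything else slots in through standard Möbius inversion in $NC(n)$ together with bookkeeping of the push-forward and dilation.
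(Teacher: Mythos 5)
Your proposal is correct and follows essentially the same route as the paper's proof: isolate the leading power of $n$ via the triangle inequality to restrict to the geodesic $\mathrm{id}-\alpha-\beta-\gamma^{-1}$, recognize $k^{e(\alpha)-\#\alpha}$ as the $\alpha$-moment of $\nu_k=(1-s)\delta_{-1}+s\delta_{+1}$, apply free moment–cumulant inversion to get $t^p m_p(\nu_k^{\boxplus 1/t})$, and obtain the density by push-forward. The only cosmetic difference is that the paper cites \cite[Lemma 2.1]{bn12} for the identity $\#(\gamma^{-1}\alpha)=1+e(\alpha)$ rather than reproving it, and phrases the Möbius inversion directly on geodesic permutations rather than block-by-block in $NC(p)$.
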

\begin{proof}
First, start from the exact moment formula \eqref{eq:moments-CnGamma} and isolate the powers of the parameter $n \to \infty$, using also the Weingarten function asymptotic \eqref{eq:Wg-asympt}. The exponent of $n$ in the fixed $k$ asymptotic regime is:
\begin{align}
\text{power of $n$ in \eqref{eq:moments-CnGamma}} &= -1+\#\alpha + \#(\gamma \beta)-p - |\alpha^{-1} \beta| \notag\\
& = p-1 -( |\alpha| + |\alpha^{-1} \beta| + |\beta^{-1} \gamma^{-1}|) \notag\\
&   \leq p-1 - |\gamma^{-1}| = 0,
\end{align}
where we have only made use of the geodesic inequality
\be  |\alpha| + |\alpha^{-1} \beta| + |\beta^{-1} \gamma^{-1}| \geq |\gamma^{-1}|,\ee
which is saturated for permutations $\alpha,\beta$ lying on the geodesic $\id - \alpha - \beta - \gamma^{-1}$.
Then,
\begin{align}
\lim_{n \to \infty} \mathbb E \frac{1}{kd}\operatorname{Tr}(C_n^\Gamma)^p 
&=(tk^2)^{-1} \sum_{\id - \alpha - \beta - \gamma^{-1}}  k^{\#(\gamma^{-1} \alpha)} (tk)^{\#(\gamma\beta)} k^{-p-|\alpha^{-1}\beta|} \Mob (\alpha^{-1}\beta) \nonumber \\
&= (tk^2)^{-1} \sum_{\id - \alpha - \beta - \gamma^{-1}}  k^{1+ e(\alpha)} (tk)^{1+|\beta|} k^{- \# \alpha - |\beta|} \Mob (\alpha^{-1}\beta) \nonumber \\
\label{eq:asymptotic-moments}&= \sum_{\id - \beta - \gamma^{-1}}   t^{|\beta|} \sum_{\id - \alpha - \beta} k^{ e(\alpha)- \# \alpha } \Mob (\alpha^{-1}\beta).
\end{align}
Above, $e(\cdot)$ denotes the number of cycles of even size of a permutation, and we use the fact \cite[Lemma 2.1]{bn12} that 
\be
1+e(\alpha) = \#(\gamma^{-1} \alpha)
\ee
for permutations $\alpha \in \mathcal S_p$ lying on the geodesic $\id - \alpha - \gamma^{-1}$.

Next, we shall identify a probability measure $\mu^{(C\Gamma)}_{k,t}$ having moments as in equation \eqref{eq:asymptotic-moments}. The main tool here will be the free  moment-cumulant formulas \cite[Lecture 11]{nsp}. 
Consider the following probability measure
\be
\nu_k := \frac {k-1}{2k}\delta_{-1} +\frac {k+1}{2k} \delta_{+1}.
\ee
Since the $\alpha$-moment of $\nu_k$ is ($\|c\|$ denotes the number of elements in a cycle $c$ of a permutation)
\begin{equation}
m_\alpha(\nu_k) =\prod_{c \in \alpha}m_{\|c\|}(\nu_k) = \prod_{c \in \alpha} \left( \frac{k-1}{2k}(-1)^{\|c\|} + \frac{k+1}{2k} \right) = k^{ e(\alpha)- \# \alpha},
\end{equation}
we have that
\be
\eqref{eq:asymptotic-moments} &=& \sum_{\id - \beta - \gamma^{-1}}   t^{|\beta|} \sum_{\id - \alpha - \beta} m_\alpha(\nu_k) \Mob (\alpha^{-1}\beta)\notag\\
&=&  t^p \sum_{\id - \beta - \gamma^{-1}}   t^{-\# \beta} \kappa_\beta (\nu_k) 
= t^p m_p \left(\nu_k^{\boxplus 1/t }\right),
\ee
which is precisely the $p$-th moment of the probability measure 
\be \mu^{(C\Gamma)}_{k,t}=D_t\left[ \nu_k^{\boxplus 1/t} \right]. \ee
This proves the first statement. 

We now show the second statement. By using Lemma \ref{lemma:push-forward}  we have (recall that $b_s = (1-s)\delta_0 + s \delta_1$ is the Bernoulli distribution)
\be
\mu^{(C\Gamma)}_{k,t} &=& \{x\mapsto tx\}_{\#}  \left( \{x \mapsto 2x-1\}_{\#} b_s \right)^{\boxplus 1/t}\notag \\
&=& \{x\mapsto tx\}_{\#} \{x \mapsto 2x-1/t\}_{\#} \left(  b_s \right)^{\boxplus 1/t}
= \{x \mapsto 2tx-1\}_{\#} \left(  b_s \right)^{\boxplus 1/t}
\ee
Therefore, by using Proposition \ref{prop:bernoulli-boxtimes-boxplus} with $T=1/t$ we have
\be
\mu^{(C\Gamma)}_{k,t} &=& \max (0,1-s/t) \delta_{-1} + \max(0,1-(1-s)/t) \delta_{1} \notag\\
&+&\frac {1/t \sqrt{(\gamma_+ - \frac{x+1}{2t})(\frac{x+1}{2t}-\gamma_-)}}{2\pi (\frac{x+1}{2t}) (\frac 1t - \frac{x+1}{2t})} 
\mathbf 1_{[2t\gamma_--1,2t\gamma_+-1]}   \,\frac{ dx}{2t} 
\ee
where $\varphi^{\pm}(s,t)= t \gamma_\pm$. This completes the proof. 
\end{proof}

\begin{remark}\label{rk:atoms} 
The atoms (possibly) appearing in equation \eqref{eq:def-mukt-2} can be interpreted as follows. First, note that the partially transposed Choi matrix $C_L^\Gamma$ is equal, up to a unitary conjugation, to the matrix $(P_n \otimes I_k)(I_n \otimes F_{k,k})(P_n \otimes I_k),$ where $P_n \in \mathcal M_{nk}(\mathbb C)$ is an orthogonal projection of rank $d \sim tnk$ and $F_{k,k}$ is the flip operator. Since the eigenvalues of the flip operator are $+1$, resp.~$-1$, with multiplicities $k(k+1)/2$, resp.~$k(k-1)/2$, by the interlacing theorem for eigenvalues of Hermitian matrices, we get that the matrix has (among others) eigenvalues 
\begin{align*}
+1& \quad \text{ with multiplicity at least } \quad nk^2 \cdot \max\left(t+\frac{k+1}{2k}-1,0 \right)\\
-1& \quad \text{ with multiplicity at least } \quad nk^2 \cdot \max\left(t+\frac{k-1}{2k}-1,0 \right)
\end{align*}
values which corresponds, at the limit, to the atoms at $\pm 1$ of the measure $\mu^{(C\Gamma)}_{k,t}$ from \eqref{eq:def-mukt-2}.
\end{remark}

\subsection{Strong convergence}

In the previous subsection, we showed that the random variable $C_n^\Gamma$ has the same asymptotic moments as the probability measure $\mu^{(C\Gamma)}_{k,t}$. This type of convergence is not sufficient for our purposes, since it does not deal with extremal eigenvalues. We will improve this result with the following proposition. In particular, we will obtain the convergence of the \emph{norm} of the random matrix $C_n^\Gamma$, which is, ultimately, the quantity we are interested in (see Proposition \ref{prop:bound}). For an arbitrary probability measure $\mu$, we denote by $\|\mu\|$ its $L^\infty$ norm, i.e. the $L^\infty$ norm of any random variable $X$ having distribution $\mu$.

We start with a technical lemma, showing the strong convergence (in the sense of Definition \ref{def:strong-convergence}) for a family of deterministic matrices of growing dimension. 

\begin{lemma}\label{lem:strong-convergence-deterministic-matrices}
Let $\{E_{ij}\}_{i,j=1}^k$ be the matrix units of $\mathcal M_k(\mathbb C)$ and define the orthogonal projection
\be \mathcal M_n(\mathbb C) \otimes \mathcal M_k(\mathbb C) \ni P_n = \mathrm{diag}(\underbrace{1, \ldots, 1}_{d_n \text{ times}}, \underbrace{0, \ldots, 0}_{nk - d_n \text{ times}}),\ee
where  $k \in \mathbb N$ and $t \in (0,1)$ are fixed and $d_n$ is a multiple of $k$ such that $d_n \sim tnk$ when $n \to \infty$. Then, the $k^2+1$ tuple of random variables $P_n, \{I_n \otimes E_{ij}\}_{i,j=1}^k$ converge strongly, as $n \to \infty$, towards respectively $p, \{e_{ij}\}_{i,j=1}^k \in \mathcal P_t \otimes \mathcal M_k(\mathbb C)$, where $p=\pi_t \otimes I_k$, $e_{ij} = 1 \otimes E_{ij}$, and $\mathcal P_t$ is the algebra generated by a projection $\pi_t$ of trace $t$ and the identity $1$.
\end{lemma}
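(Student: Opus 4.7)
The main idea is to reduce the tuple to a single non-trivial variable. Since $d_n$ is a multiple of $k$, we can write $d_n = k m_n$ with $m_n \sim tn$, and the diagonal projection $P_n$ factors as $P_n = Q_n \otimes I_k$, where $Q_n \in \mathcal M_n(\mathbb C)$ is the diagonal projection of rank $m_n$ onto the first $m_n$ standard basis vectors of $\mathbb C^n$. In particular, $P_n$ commutes with every $I_n \otimes E_{ij}$.

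Using this commutation together with the idempotency $P_n^2 = P_n$, any monomial in the noncommuting variables $P_n, \{I_n \otimes E_{ij}\}_{i,j=1}^k$ can be rewritten as $P_n^a \, (I_n \otimes M) = Q_n^a \otimes M$ for some $a \in \{0,1\}$ and some word $M \in \mathcal M_k(\mathbb C)$. Hence an arbitrary noncommutative polynomial in the tuple reduces to an affine expression of the form $I_n \otimes B + Q_n \otimes A$, with $A,B \in \mathcal M_k(\mathbb C)$, and in the limit algebra $\mathcal P_t \otimes \mathcal M_k(\mathbb C)$ it becomes $1 \otimes B + \pi_t \otimes A$. It therefore suffices to verify strong convergence for expressions of this shape.

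The remaining step is to show that the single projection $Q_n$ converges strongly to $\pi_t$ in the $C^*$-probability space $(\mathcal P_t, \tau)$. Its spectral distribution is $(1 - m_n/n)\delta_0 + (m_n/n)\delta_1$, which converges weakly to the law $(1-t)\delta_0 + t \delta_1$ of $\pi_t$, while $\|Q_n\| = 1 = \|\pi_t\|$ is immediate, giving strong convergence for a single bounded self-adjoint element. Applying Proposition \ref{prop:M} with $l=1$, $z^{(N)} = Q_n$, $z = \pi_t$, and coefficients in $\mathcal M_k(\mathbb C)$, we obtain $\|I_n \otimes B + Q_n \otimes A\| \to \|1 \otimes B + \pi_t \otimes A\|$ (up to the harmless flip of tensor factors). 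Convergence in distribution is immediate from
\[ (\tau_n \otimes \tau_k)(I_n \otimes B + Q_n \otimes A) = \tau_k(B) + (m_n/n)\,\tau_k(A) \longrightarrow \tau_k(B) + t\,\tau_k(A), \]
which matches $(\tau \otimes \tau_k)(1 \otimes B + \pi_t \otimes A)$ since $\tau(\pi_t) = t$. I do not expect any serious obstacle in this argument: the only real content is the algebraic reduction of a polynomial in the full tuple to an affine function of the single projection $Q_n$, after which Male's proposition does all the analytic work.
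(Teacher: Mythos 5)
Your argument is correct and rests on the same observation as the paper's proof: since $d_n$ is a multiple of $k$, $P_n=Q_n\otimes I_k$ commutes with the $I_n\otimes E_{ij}$, so every polynomial in the tuple collapses to $I_n\otimes B+Q_n\otimes A$ (the paper phrases this as the decomposition $F=G+H$ into monomials without/with $p$ and the block evaluation $F(I_k,\cdot)^{\oplus d_n/k}\oplus F(0_k,\cdot)^{\oplus n-d_n/k}$). The only difference is the last step, where you invoke Proposition \ref{prop:M} for the single strongly convergent projection $Q_n$ (strong convergence being immediate here since $\mathrm{sp}(Q_n)=\{0,1\}=\mathrm{sp}(\pi_t)$ for large $n$, not merely because $\|Q_n\|=1$), whereas the paper computes both norms directly as $\max\bigl(\|(G+H)(I_k,\{E_{ij}\})\|,\|G(I_k,\{E_{ij}\})\|\bigr)$; both routes are valid.
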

\begin{proof}
The result follows from the block structure of the matrices $I_n \otimes E_{ij}$ and the fact that $P_n$ respects this block structure ($d_n$ is a multiple of $k$). We first show the convergence in distribution. In this proof, we denote by $\mathrm{tr}$ the \emph{normalized} trace, e.g.~ $\mathrm{tr}_k I_k = 1$.
For a fixed monomial $F \in \mathbb C \langle p, \{e_{ij}\} \rangle$ in $1+k^2$ non-commutative random variables, we have
\eq{ F(P_n, \{I_n \otimes E_{ij}\})& = F(I_k^{\oplus d_n/k} \oplus 0_k^{\oplus n-d_n/k}, \{E_{ij}^{\oplus n}\}) \notag\\
& = F(I_k, \{E_{ij}\})^{\oplus d_n/k} \oplus F(0_k, \{E_{ij}\})^{\oplus n-d_n/k},
}
and thus 
\begin{equation}\label{eq:trace-n-k}
\lim_{n \to \infty} [\mathrm{tr}_n \otimes \mathrm{tr}_k] F(P_n, \{E_{ij}\}) = t \, \mathrm{tr}_kF(I_k,  \{E_{ij}\}) + (1-t) \, \mathrm{tr}_kF(I_k,  \{E_{ij}\})\mathbf{1}_{p \notin F}.
\end{equation}
Note that since $F$ is a monomial, $p \notin F$ means that $F$ does not contain a factor $p$. 

On the other hand, we have
\be F(p,\{e_{ij}\}) = 
\begin{cases}
\pi_t \otimes F(I_k, \{E_{ij}\}), &\qquad \text{ if } p \in F,\\
1 \otimes F(I_k, \{E_{ij}\}), &\qquad \text{ if } p \notin F.
\end{cases}\ee
If $\tau$ is the trace of $\mathcal P_t$, then we have
\begin{align}
[\tau \otimes \mathrm{tr}_k]&F(p,\{e_{ij}\}) = t \, \mathrm{tr}_kF(I_k,  \{E_{ij}\}) \mathbf{1}_{p \in F} + \mathrm{tr}_kF(I_k,  \{E_{ij}\}) \mathbf{1}_{p \notin F} \notag\\
&= t \, \mathrm{tr}_kF(I_k,  \{E_{ij}\}) \mathbf{1}_{p \in F} + t \, \mathrm{tr}_kF(I_k,  \{E_{ij}\}) \mathbf{1}_{p \notin F} + (1-t) \, \mathrm{tr}_kF(I_k,  \{E_{ij}\}) \mathbf{1}_{p \notin F}\notag\\
&= t \, \mathrm{tr}_kF(I_k,  \{E_{ij}\}) + (1-t) \, \mathrm{tr}_kF(I_k,  \{E_{ij}\}) \mathbf{1}_{p \notin F}
\end{align}
which, together with \eqref{eq:trace-n-k}, allows to conclude the proof of the convergence in distribution for monomials, and, using linearity, for arbitrary non-commutative polynomials. 

Let us now show the norm convergence in Definition \ref{def:strong-convergence}, for a fixed polynomial $F$. Using again the block structure of the matrices $I_n \otimes E_{ij}$ and of $P_n$ (recall that $d_n$ is a multiple of $k$), we get that
\begin{align}
\|F(P_n, \{I_n \otimes E_{ij}\})\| &= \max ( \|F(I_k, \{E_{ij}\})\| , \|F(0_k, \{E_{ij}\})\| ) \notag\\
&= \max ( \|(G+H)(I_k,\{E_{ij}\})\| , \|G(I_k,\{E_{ij}\})\| ),
\end{align}
where $F=G+H$ is the decomposition of $F$ into monomials which do not contain the first variable ($G$) and monomials which do ($H$). Using the same decomposition, we have
\begin{align}
\|F(p, \{e_{ij}\})\| &= \|1 \otimes G(I_k,\{E_{ij}\}) + \pi_t \otimes  H(I_k,\{E_{ij}\})\| \notag \\
&= \| \pi_t \otimes  (G+H)(I_k,\{E_{ij}\}) + (1-\pi_t)  \otimes  G(I_k,\{E_{ij}\}) \|  \notag\\
&= \max ( \|(G+H)(I_k,\{E_{ij}\})\| , \|G(I_k,\{E_{ij}\})\| ),
\end{align}
finishing the proof.
\end{proof}

\begin{theorem}\label{thm:strong-convergence}
The random matrix $C_n^\Gamma$ converges \emph{strongly} towards an element having distribution $\mu^{(C\Gamma)}_{k,t}$ defined in \eqref{eq:def-mukt-1}-\eqref{eq:def-mukt-2}. Hence, we have the following norm convergence: almost surely,
\begin{equation}\label{eq:norm-mukt-1}
\lim_{n \to \infty} \|C_n^\Gamma\| = \|\mu^{(C\Gamma)}_{k,t}\| =
\begin{cases}
2 \varphi^+(s,t)-1, &\quad \text{ if } t+s<1\\
1, &\quad \text{ if } t+s \geq 1,
\end{cases}
\end{equation}
where $s=(k+1)/(2k)$ and $\varphi^+(s,t)$ was defined in \eqref{eq:phi-st-pm}. 
More explicitly, the above quantity is written as 
\begin{equation}\label{eq:norm-mukt-2}
2 \varphi^+(s,t)-1 = \frac{1-2t}{k} + 2\sqrt{\left(1-\frac 1 {k^2}\right) t(1-t)}.
\end{equation}
\end{theorem}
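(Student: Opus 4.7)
\textbf{Proof proposal for Theorem \ref{thm:strong-convergence}.}

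The plan is to combine the moment formula of Theorem \ref{thm:limit-moment} with the Collins--Male strong convergence theorem (Proposition \ref{prop:CM}) and its matrix-valued amplification (Proposition \ref{prop:M}). The key point is to realize $C_n^\Gamma$, up to isospectral transformations, as a matrix-valued non-commutative polynomial in a single Haar unitary together with a family of deterministic matrices whose strong limit we already control.

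First I would use the identification noted in Remark \ref{rk:atoms}: $C_n^\Gamma$ has the same non-zero spectrum as
\[
Y_n := (P_n \otimes I_k)(I_n \otimes F_{k,k})(P_n \otimes I_k) \in \mathcal M_{nk}(\mathbb C) \otimes \mathcal M_k(\mathbb C),
\]
where $P_n = U_n \tilde P_n U_n^*$ with $U_n \in \mathcal U(nk)$ Haar-distributed and $\tilde P_n$ the deterministic rank-$d_n$ projection onto the first $d_n$ coordinates. Expanding $F_{k,k} = \sum_{i,j=1}^{k} E_{ij}\otimes E_{ji}$ rewrites
\[
Y_n = \sum_{i,j=1}^k \bigl[U_n \tilde P_n U_n^{*} (I_n \otimes E_{ij}) U_n \tilde P_n U_n^{*}\bigr] \otimes E_{ji},
\]
which displays $Y_n$ as a non-commutative polynomial in the variables $(U_n, U_n^{*}, \tilde P_n, \{I_n \otimes E_{ij}\}_{i,j})$ from $\mathcal M_{nk}(\mathbb C)$, with coefficients in $\mathcal M_k(\mathbb C)$.

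Second, Lemma \ref{lem:strong-convergence-deterministic-matrices} provides strong convergence of the deterministic family $(\tilde P_n, \{I_n \otimes E_{ij}\})$ towards $(p, \{e_{ij}\})$ in $\mathcal P_t \otimes \mathcal M_k(\mathbb C)$. Plugging this into Proposition \ref{prop:CM} yields, almost surely, strong convergence of the augmented family $(U_n, \tilde P_n, \{I_n \otimes E_{ij}\})$ towards $(u, p, \{e_{ij}\})$, with $u$ a Haar unitary free from $(p, \{e_{ij}\})$. Proposition \ref{prop:M}, applied to the matrix-valued polynomial above, converts this into the norm convergence $\|Y_n\| \to \|Y_\infty\|$ almost surely, where $Y_\infty$ denotes the polynomial evaluated at the strong limit. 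Because $C_n^\Gamma$ and $Y_n$ share their non-zero spectrum, $\|C_n^\Gamma\| = \|Y_n\| \to \|Y_\infty\|$ almost surely. Combined with Theorem \ref{thm:limit-moment}, which determines all moments of the self-adjoint element $Y_\infty$, this forces the spectral distribution of $Y_\infty$ to be $\mu^{(C\Gamma)}_{k,t}$ and hence $\|Y_\infty\| = \|\mu^{(C\Gamma)}_{k,t}\|$.

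Finally, the explicit formulas \eqref{eq:norm-mukt-1}--\eqref{eq:norm-mukt-2} are read off from \eqref{eq:def-mukt-2}. When $s+t \geq 1$, $\mu^{(C\Gamma)}_{k,t}$ has an atom at $+1$, and the factorization $\varphi^{+}(s,t) = (\sqrt{s(1-t)} + \sqrt{t(1-s)})^2$ combined with AM--GM ($st + (1-s)(1-t) \geq 2\sqrt{st(1-s)(1-t)}$) gives $\varphi^{+}(s,t) \leq 1$, so the norm equals $1$. When $s+t<1$, there are no atoms and, since $s \geq 1/2 > t$ forces $(1-2s)(1-2t) \leq 0$, the right endpoint $2\varphi^{+}(s,t)-1$ dominates in absolute value over the left endpoint $2\varphi^{-}(s,t)-1$; substituting $s=(k+1)/(2k)$ then yields \eqref{eq:norm-mukt-2}. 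The main obstacle in this strategy is essentially bookkeeping: setting up the matrix-valued polynomial in the correct tensor factor so that Proposition \ref{prop:M} applies cleanly. Once that identification is in place, the free-probability machinery delivers the norm convergence immediately.
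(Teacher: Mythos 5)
Your proposal follows essentially the same route as the paper: the same realization of $C_n^\Gamma$ (up to null eigenvalues and a global unitary conjugation) as an $\mathcal M_k(\mathbb C)$-coefficient non-commutative polynomial in $U_n$, the coordinate projection, and the matrices $I_n \otimes E_{ij}$, followed by Lemma \ref{lem:strong-convergence-deterministic-matrices}, Proposition \ref{prop:CM}, and Proposition \ref{prop:M}, and then the same endpoint analysis of $\mu^{(C\Gamma)}_{k,t}$. The one step you omit is that Lemma \ref{lem:strong-convergence-deterministic-matrices} is stated (and its block-diagonal proof only works) when the rank $d_n$ of the deterministic projection is a multiple of $k$, whereas the theorem only assumes $d_n \sim tkn$; the paper handles this by sandwiching $d_n$ between the nearest multiples $d_n^{\pm}$ of $k$ (both still $\sim tnk$) and using eigenvalue interlacing for the corresponding compressions. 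With that small repair your argument matches the paper's proof.
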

\begin{proof}
Firstly, let $\varepsilon_n = |\frac {d_n}{nk}-t|$ so that 
\be
(t-\varepsilon_n)nk \leq d_n\leq (t+\varepsilon_n) nk.
\ee
Then, we set $d_n^- = \lfloor n (t-\varepsilon_n) \rfloor k$ and $d_n^+=\lceil n(t+\varepsilon_n) \rceil k$ so that 
we can use Lemma \ref{lem:strong-convergence-deterministic-matrices} for $d_n^-$ and $d_n^+$.
Hence, the following proofs are applied to $d_n^-$ and $d_n^+$ but
on the other hand by using interlacing theorem, 
we can obtain the desired statement for the original sequence $d_n$.

Secondly, we prove the strong convergence. 
Recall that the channel $L$ has the following Stinespring representation
\begin{align}
L(X) &= [\mathrm{id}_k \otimes \mathrm{Tr}_n](VXV^*)\notag \\
&= [\mathrm{id}_k \otimes \mathrm{Tr}_n](UWXW^*U^*),
\end{align}
where $U \in \mathcal U(nk)$ is a unitary operator and $W: \mathbb C^d \to \mathbb C^k \otimes \mathbb C^n $ is the isometric embedding: $W=[I_d \, 0_{d \times (nk-d)}]^\top$. With this notation, we have
\be C_n^\Gamma = \sum_{i,j=1}^d [\mathrm{id}_k \otimes \mathrm{Tr}_n](UWE_{ij}W^*U^*) \otimes E_{ji} \in \mathcal M_k(\mathbb C) \otimes \mathcal M_d(\mathbb C).\ee
Define
\begin{align}
\nonumber \mathcal M_n(\mathbb C) \otimes \mathcal M_k(\mathbb C) \otimes \mathcal M_k(\mathbb C) \ni D_n &= (P_d U^*  \otimes I_k ) \left[ I_n \otimes  \sum_{i,j=1}^k E_{ij} \otimes E_{ji} \right] (UP_d \otimes I_k) \\
\label{eq:D-n} &= \sum_{i,j=1}^k  \left[ P_d U^*  \left( I_n \otimes  E_{ij} \right)  UP_d \right] \otimes E_{ji},
\end{align}
where $P_d = WW^*=I_d \oplus 0_{nk-d}$ is an orthogonal projection of rank $d$. It is easy to see that the matrices $C_n^\Gamma$ and $D_n$ have the same spectrum, up to null eigenvalues, so we will focus on showing the strong convergence property for $D_n$. Note that the expression in \eqref{eq:D-n} is a polynomial with $\mathcal M_k(\mathbb C)$ coefficients, in the variables $P_d$, $U$ and $I_n \otimes E_{ij}$. Using Lemma \ref{lem:strong-convergence-deterministic-matrices} and Proposition \ref{prop:CM}, we conclude that the tuple $(U,U^*, P_d, \{I_n \otimes E_{ij}\}_{i,j=1}^k)$ converges strongly, as $n \to \infty$, to a limit that we choose not to specify. 
We then apply Proposition \ref{prop:M} to obtain the strong convergence of $D_n$ to a limit element $x$. Since the distribution of the random matrix $C_n^\Gamma$ has been shown to converge to the measure $\mu^{(C\Gamma)}_{k,t}$, we conclude that, almost surely 
\be \lim_{n \to \infty} \left\| C_n^\Gamma \right\|_\infty =\|\mu^{(C\Gamma)}_{k,t}\|.\ee

Thirdly, notice that the norm $\|C_n^\Gamma\|$ is the maximum between the largest element of the support of the measure $\mu^{(C\Gamma)}_{k,t}$ and minus the smallest element of the support of $\mu^{(C\Gamma)}_{k,t}$. Let us show now that the latter quantity is almost smaller or equal that the former, finishing the proof. 
Indeed, in the case where $\mu^{(C\Gamma)}_{k,t}$ has an atom at $-1$, $t$ should be larger than $s$. But in that case, we also have $s+t-1>2s-1>0$, since $s=(k+1)/(2k)>1/2$, so $\mu^{(C\Gamma)}_{k,t}$ also has an atom at $1$. In the case when $t<1-s$, we have that $t\leq 1/2$ and thus $2\varphi^+(s,t) - 1 \geq -(2\varphi^-(s,t) - 1)$, showing that the maximum is also attained on the positive part of the support.
\end{proof}

\section{Other bounds for random quantum channels}
\label{sec:other-bounds-random-channels}

In this section, we compute the remaining four bounds from Section \ref{sec:additive-bounds}, in the case of random quantum channels. In the first three subsections, we study respectively the bounds \eqref{eq:B-C}, \eqref{eq:B-C-c-Gamma}, \eqref{eq:B-M}, while in the last one we compare these three bounds with \eqref{eq:B-C-Gamma}, which was computed in the previous section. Our conclusion is that the bound \eqref{eq:B-C-Gamma} coming from the partially transposed Choi matrix of the channel $L$ seems to give the sharpest estimate for minimum $p$-R\'enyi output entropies. Moreover, from a practical standpoint, we have a closed formula for the bound in \eqref{eq:B-C-Gamma} and we have shown strong convergence of the random matrices towards the corresponding probability distribution. 

The computations in this section are similar to the ones in Section \ref{sec:random-Choi}, so we refer the reader to that section for some of the details. We would like to mention already that in some cases, we are not able to obtain such precise estimates as in the previous section; in such situations, we state only some partial results, in the asymptotic limit where the output dimension $k$ is large.

We consider first the bound \eqref{eq:B-I} and we show that, in the case of random quantum channels, it is trivial. Hence, we shall not mention it in the remainder of the paper. Indeed, the (random) channels we are interested in have input dimension larger than output dimension: $L:\mathcal M_d(\mathbb C) \to \mathcal M_k(\mathbb C)$, with $d \to \infty$ and $k$ fixed. Using $d \geq k$, we have (see also \cite{fri})
\begin{equation}
\|L(I_d)\| \geq \frac{\mathrm{Tr} L(I_d)}{k} = \frac{\mathrm{Tr} I_d}{k} = d/k \geq 1,
\end{equation}
and thus the bound in \eqref{eq:B-I} reads $H_\infty^{\min}(L^{\otimes r}) \geq  0 \geq - r \log \|L(I_d)\|$, which is a trivial statement.

\subsection{Choi matrices}  
In this section, we discuss on the limiting eigenvalue distributions of the Choi matrix $C_L$ when taken randomly. As before, we consider a sequence of random quantum channels $L_n$ as in \eqref{eq:random-quantum-channel}, where the parameters scale as in Section \ref{sec:limiting-spectrum-C-Gamma}. We denote by $C_n \in \mathcal M_k(\mathbb C) \otimes \mathcal M_d(\mathbb C)$ the Choi matrix of $L_n$, which is a random quantum channel. 

\begin{proposition}\label{prop:random-Choi}
The random Choi matrix $C_n$ converges \emph{strongly} towards an element having distribution 
\begin{equation}\label{eq:mu-C}
\mu^{(C)}_{k,t} =  D_{kt} \left[ b_{k^{-2}}^{\boxplus 1/t} \right].
\end{equation}
Hence, we have the following norm convergence: almost surely,
\begin{equation}\label{eq:norm-mu-C}
\lim_{n \to \infty} \|C_n\| = \|\mu^{(C)}_{k,t}\| = 
\begin{cases}
k \varphi^+ (k^{-2}, t) & \text{if } t + k^{-2} < 1\\
k & \text{if }   t + k^{-2} \geq 1,
\end{cases}\end{equation}
where the function $\varphi^+$ was defined in \eqref{eq:phi-st-pm}. 
\end{proposition}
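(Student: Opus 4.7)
The plan is to mirror the three-step strategy used in Section~\ref{sec:random-Choi} for $C_n^\Gamma$, adapting it to the non-transposed Choi matrix $C_n$, and then extracting the explicit norm via Proposition~\ref{prop:bernoulli-boxtimes-boxplus}.

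First I would run the graphical Weingarten calculus to get an exact moment formula for $\mathbb{E}\,\mathrm{Tr}(C_n)^p$. The diagram is identical to that in Figure~\ref{fig:C-Gamma-n} except that the two output legs of $L$ are no longer crossed, so when closing up the $p$-th power trace the square ($k$) labels and the diamond ($d$) labels are connected by the \emph{same} cyclic permutation $\gamma^{-1}$. Repeating the removal argument of Proposition~\ref{prop:moments-CnGamma}, this replaces the factor $k^{\#(\gamma^{-1}\alpha)}$ appearing there with $k^{\#(\gamma\alpha)}$, yielding
\begin{equation*}
\mathbb{E}\frac{1}{kd}\mathrm{Tr}(C_n)^p = (kd)^{-1}\sum_{\alpha,\beta \in \mathcal{S}_p} n^{\#\alpha} k^{\#(\gamma\alpha)} d^{\#(\gamma\beta)} \mathrm{Wg}_{nk}(\alpha^{-1}\beta).
\end{equation*}

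Next I would take the limit $n \to \infty$ with $d \sim tnk$. The power-of-$n$ analysis is identical to the one in Theorem~\ref{thm:limit-moment} and again selects geodesic pairs $\mathrm{id} - \alpha - \beta - \gamma^{-1}$. The key simplification that replaces the ``even-cycle'' identity $\#(\gamma^{-1}\alpha) = 1 + e(\alpha)$ is now the much simpler $\#(\gamma\alpha) = 1 + |\alpha|$ on the geodesic (immediate from saturation of the triangle inequality $|\alpha| + |\alpha^{-1}\gamma^{-1}| = p-1$). After cancellation, the limit of the $p$-th normalized moment becomes
\begin{equation*}
\sum_{\mathrm{id}-\alpha-\beta-\gamma^{-1}} t^{|\beta|}\, k^{2|\alpha|-p}\, \mathrm{Mob}(\alpha^{-1}\beta).
\end{equation*}
To identify this as the $p$-th moment of $\mu^{(C)}_{k,t} = D_{kt}[b_{k^{-2}}^{\boxplus 1/t}]$, I would use that the Bernoulli distribution $b_{k^{-2}}$ has $\alpha$-moment $k^{-2\#\alpha} = k^{2|\alpha|-2p}$, invert the moment-cumulant relation to express $\kappa_\beta(b_{k^{-2}})$ as a Möbius sum over $\alpha \leq \beta$, and apply the standard formula $m_p(\nu^{\boxplus T}) = \sum_{\beta \in NC(p)} T^{\#\beta}\kappa_\beta(\nu)$ with $T = 1/t$, followed by the dilation by $kt$. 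The exponents match exactly.

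Strong convergence, and from it the norm convergence, then follows essentially verbatim from the proof of Theorem~\ref{thm:strong-convergence}. I would write
\begin{equation*}
C_n \;\cong\; \sum_{i,j=1}^k \bigl[P_d U^*(I_n \otimes E_{ij}) U P_d\bigr] \otimes E_{ij}
\end{equation*}
(up to null eigenvalues), note that this is a noncommutative polynomial with $\mathcal{M}_k(\mathbb{C})$ coefficients in the variables $U$, $U^*$, $P_d$, $\{I_n \otimes E_{ij}\}$, apply Lemma~\ref{lem:strong-convergence-deterministic-matrices} together with Propositions~\ref{prop:CM} and~\ref{prop:M} to get strong convergence, and combine with the moment identification to deduce $\|C_n\| \to \|\mu^{(C)}_{k,t}\|$ a.s. Finally, the explicit norm formula falls out of Proposition~\ref{prop:bernoulli-boxtimes-boxplus} applied to $b_{k^{-2}}^{\boxplus 1/t}$: the a.c.\ support ends at $\gamma^+(k^{-2},1/t) = \varphi^+(k^{-2},t)/t$, while an atom of mass $(t + k^{-2} - 1)/t$ appears at $1/t$ precisely when $t + k^{-2} \geq 1$; scaling by $kt$ yields $k\varphi^+(k^{-2},t)$ in the atomless case and $k$ otherwise. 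Since $C_n \geq 0$ makes $\mathrm{supp}\,\mu^{(C)}_{k,t} \subset [0,\infty)$, no sign comparison is needed (unlike the $C_n^\Gamma$ case). The only real work is the combinatorial identification in step two, but it is strictly simpler than in Section~\ref{sec:random-Choi}; I anticipate no serious obstacle.
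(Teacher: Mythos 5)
Your proposal is correct and follows essentially the same route as the paper: the graphical Weingarten moment formula (your version with $k^{\#(\gamma\alpha)}d^{\#(\gamma\beta)}$ equals the paper's $k^{\#(\gamma^{-1}\alpha)}d^{\#(\gamma^{-1}\beta)}$ after the substitution $\alpha\to\alpha^{-1}$, $\beta\to\beta^{-1}$), the geodesic selection and moment--cumulant identification of $D_{kt}\bigl[b_{k^{-2}}^{\boxplus 1/t}\bigr]$, strong convergence via Lemma \ref{lem:strong-convergence-deterministic-matrices} and Propositions \ref{prop:CM}--\ref{prop:M} with $E_{ji}$ replaced by $E_{ij}$, and the support read off from Proposition \ref{prop:bernoulli-boxtimes-boxplus}. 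Your observation that the even-cycle identity is replaced by the simpler geodesic relation $\#(\gamma\alpha)=1+|\alpha|$ is exactly what the paper does.
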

\begin{proof}
The starting point of the proof is the following moment formula for the random matrix $C_n$, obtained by graphical Weingarten calculus (see Theorem \ref{thm:graphical-Wg}). The powers of $n$, $d \sim tkn$ and $k$ in the formula below count loops and can be inferred from Figure \ref{fig:Choi-moment}. For any integer $p \geq 1$, we have (remember that $\gamma \in \mathcal S_p$ denotes the full cycle permutation $i \mapsto i-1$)
\begin{figure}[htbp] 
\includegraphics{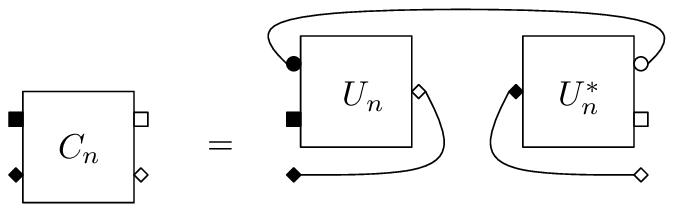} \qquad\qquad
\includegraphics{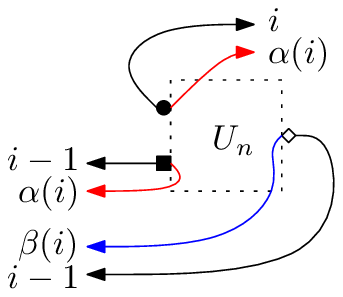} 
\caption{Diagrams for the Choi matrix $C_n$ and for the $i$-th $U_n$ box in the graphical expansion of $\trace \left [ C_n^p\right]$.} 
\label{fig:Choi-moment}
\end{figure}
\be\label{eq:moment-choi} 
(kd)^{-1} \Ex \trace \left [ C_n^p\right] 
=  (kd)^{-1} \sum_{\alpha,\beta \in \mathcal S_p}  n^{\#\alpha} k^{\#(\gamma^{-1}\alpha)} d^{\#(\gamma^{-1} \beta)} \Wg_{kn}(\alpha^{-1}\beta).
\ee

In the above equation, the surviving terms as $n\to\infty$ are the ones which contain the largest power of $n$ (we use below $d \sim tkn$ and the Weingarten function asymptotic from \eqref{eq:Wg-asympt}):
\be
\text{power of $n$ in }\eqref{eq:moment-choi} &=& -1+ \# \alpha +\#(\gamma^{-1} \beta) -p - |\alpha^{-1} \beta| \notag\\
 & \leq& p-1- (|\alpha| + |\alpha^{-1} \beta| + |\beta^{-1} \gamma|) \leq 0 .
\ee
 The above bound is saturated if and only if the triangle inequality $|\alpha| + |\alpha^{-1} \beta| + |\beta^{-1} \gamma| \geq |\gamma|$ is saturated, i.e.~ the permutations $\alpha, \beta$ are on the geodesic $\mathrm{id} - \alpha - \beta - \gamma$.
 This implies that 
\be
\lim_{n\to\infty} (kd)^{-1} \Ex \trace \left [ C_n^p\right] 
&=& (tk^2)^{-1}\sum_{\mathrm{id} - \alpha - \beta - \gamma} t^{p-|\gamma^{-1} \beta| }\,
k^{2p-|\gamma^{-1} \alpha|-|\gamma^{-1} \beta|-p-|\alpha^{-1}\beta|} \,\Mob(\alpha,\beta) \notag\\
&=&   k^{-p}\sum_{\mathrm{id} - \alpha - \beta - \gamma} t^{ |\beta| }\,
k^{2|\alpha|} \,\Mob(\alpha,\beta) \notag \\
&=&   (tk)^{p}\sum_{\mathrm{id} - \alpha - \beta - \gamma} t^{ - \#\beta }\,
k^{-2\#\alpha} \,\Mob(\alpha,\beta) .
\ee
Above, we have used some properties of geodesics, for example, $|\gamma^{-1}\beta| = |\gamma| - |\beta| = p-1-|\beta|$. 

Next, we fix $\beta \in \mathcal S_p$ and use the moment-cumulant formula \cite[Proposition 11.4 (2)]{nsp} in free probability:
\be
\sum_{\mathrm{id} - \alpha - \beta}  k^{-2\# \alpha}  \Mob (\alpha, \beta) = \sum_{\mathrm{id} - \alpha - \beta}  m_\alpha(b_{k^{-2}})  \Mob (\alpha, \beta) = \kappa_\beta(b_{k^{-2}})  ,
\ee
where $b_{k^{-2}}$ is the Bernoulli distribution $b_{k^{-2}} =(1-k^{-2}) \delta_0 + k^{-2}\delta_1$.  
Hence, using the multi-linearity of the free cumulant, we have
\begin{align}
\lim_{n\to\infty} (kd)^{-1} \Ex \trace \left [ C_n^p\right] 
&=(tk)^p \sum_{\mathrm{id}-\beta-\gamma}  t^{-\#\beta} \kappa_\beta(b_{k^{-2}}) \notag\\
&=(tk)^p \sum_{\mathrm{id}-\beta-\gamma}  \kappa_\beta(b_{k^{-2}}^{\boxplus 1/t}) \notag \\
&=(tk)^p m_p \left(b_{k^{-2}}^{\boxplus 1/t}\right) = m_p \left( D_{kt} \left[ b_{k^{-2}}^{\boxplus 1/t} \right] \right),
\end{align}
which shows that the random matrices $C_n$ converge in moments to the probability measure $\mu^{(C)}_{k,t}$ defined in \eqref{eq:mu-C}. The statement about the support of $\mu^{(C)}_{k,t}$ follows from Proposition \ref{prop:bernoulli-boxtimes-boxplus} with $T=1/t$ and $s=k^{-2}$. The proof of the strong convergence is identical to the one from Theorem \ref{thm:strong-convergence}, up to the occasional flipping of indices, due to the partial transposition (e.g.~ in \eqref{eq:D-n}, one should replace the matrix element $E_{ji}$ by $E_{ij}$); we leave the details to the reader. 
\end{proof}

Note that the limiting probability measure $\mu_{k,t}^{(C)}$ from \eqref{eq:mu-C} is supported on $[0,\infty)$; this is a consequence of the fact that the Choi matrices $C_n$ are positive semidefinite, since the quantum channels $L_n$ are completely positive. 

\subsection{Partially transposed Choi matrices of complementary channels}
\label{sec:limiting-spectrum-C-c-Gamma}

In this subsection, we discuss the limiting eigenvalue distributions of the partially transposed Choi matrix of the complementary channel $C_{L^c}^\Gamma$, our goal being to estimate the quantity \eqref{eq:B-C-c-Gamma} in the case where $L$ is a random quantum channel. More precisely, we consider a sequence of random quantum channels $L_n$ and we denote by $C^{c\Gamma}_n$ the corresponding partially transposed Choi matrices of the channels $L_n^c$. Unfortunately, we are not able to identify, at fixed $k$ and $t$, the limiting eigenvalue distribution of the random matrix $C_n^{c\Gamma}$ when $n \to \infty$; we have to settle in this case for another asymptotic regime, where $1 \ll k \ll n$. This regime is obtained by taking first the limit $n \to \infty$, followed by the limit $k \to \infty$. 

Before stating the result, we recall the notion of \emph{Kreweras complement} for non-crossing partitions. The Kreweras complement of $\alpha \in NC(p)$ is another non-crossing partition, denoted $\alpha^\mathrm{Kr} \in NC(p)$, defined in the following way \cite[Definition 9.21]{nsp}. First, expand the domain of partitions to $\{\bar 1, 1,  \bar2,2 \ldots, \bar p,  p\}$; let then  $\alpha^\mathrm{Kr} \in NC(\bar 1,\bar 2, \ldots, \bar p) \cong NC(p)$ be the \emph{largest} non-crossing partition such that $\alpha \cup \alpha^\mathrm{Kr}$ is still a non-crossing partition on $\{\bar 1, 1,  \bar2,2 \ldots, \bar p,  p\}$. Given a geodesic permutation $\mathrm{id}-\alpha-\gamma$, we define the geodesic permutation $\alpha^\mathrm{Kr} \in \mathcal S_p$ by identifying, as usual, geodesic permutations with non-crossing partitions; more precisely, we have $\alpha^\mathrm{Kr} = \alpha^{-1} \gamma$ (see \cite[Remark 23.24]{nsp}). Note that the above construction of Kreweras complement is slightly different from the one in \cite[Definition 9.21]{nsp}
in that $\bar i$ is left to $i$ for $i = 1,\ldots,p$ because  $\gamma = (p,p-1,\ldots,1)$ in our paper. 

\begin{lemma}\label{lem:twisting}
For $ \gamma_{2q} = (2q, 2q-1, \ldots, 1) \in \mathcal S_{2q}$, 
$\gamma_1= (2q-1,2q-3,\ldots,1)$ and $\gamma_2= (2q,2q-2,\ldots,2)$,
take $\tilde \alpha_i \in \mathcal S_{2q}$ two permutations on the geodesics $\mathrm{id} - \alpha_i- \gamma_i $ with $i=1,2$ (in particular, $\alpha_1$ acts only on odd numbers, while $\alpha_2$ acts only on even numbers). Then,
\be
\# \left(\gamma_{2q}^{-1} (\tilde \alpha_1 \oplus \tilde \alpha_2)\right) = \#(\alpha_1\gamma_q^{-1}  \alpha_2 ) = \# \left(\alpha_1\left(\alpha_2^\mathrm{Kr}\right)^{-1}  \right),
\ee
where the permutations $\alpha_{1,2} \in  \mathcal S_q$ are associated with $\tilde \alpha_{1,2}$ by deleting the trivial fixed points, via the correspondences $(2i-1) \to i$, resp.~ $2i \to i$, for $i=1,2,\ldots, q$.
\end{lemma}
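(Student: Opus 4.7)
\textbf{Proof plan for Lemma \ref{lem:twisting}.} The second equality is essentially definitional: since $\alpha_2$ lies on the geodesic $\mathrm{id} - \alpha_2 - \gamma_q$, the Kreweras complement identity recalled before the lemma statement gives $\alpha_2^{\mathrm{Kr}} = \alpha_2^{-1}\gamma_q$, so $(\alpha_2^{\mathrm{Kr}})^{-1} = \gamma_q^{-1}\alpha_2$ and $\alpha_1(\alpha_2^{\mathrm{Kr}})^{-1} = \alpha_1\gamma_q^{-1}\alpha_2$. Hence the second equality holds as an equality of permutations, not merely of cycle numbers.

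For the first equality, set $\sigma := \gamma_{2q}^{-1}(\tilde\alpha_1\oplus\tilde\alpha_2)$ and observe that $\tilde\alpha_1\oplus\tilde\alpha_2$ preserves the parity of its argument, while $\gamma_{2q}^{-1}$ is the shift $i\mapsto i+1\pmod{2q}$ and thus exchanges parities. Consequently $\sigma$ sends odd numbers to even numbers and even numbers to odd numbers, so every cycle of $\sigma$ has even length and $\#\sigma = \tfrac{1}{2}\#(\sigma^2)$. The plan is therefore to compute $\sigma^2$ restricted separately to odds and to evens, show that under the renamings $2i-1\mapsto i$ and $2i\mapsto i$ the two restrictions are conjugate permutations of $\mathcal S_q$ both with cycle structure equal to that of $\alpha_1\gamma_q^{-1}\alpha_2$, and conclude $\#(\sigma^2) = 2\,\#(\alpha_1\gamma_q^{-1}\alpha_2)$.

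The direct calculation is straightforward bookkeeping: starting from an odd element $2i-1$, one successively applies $\tilde\alpha_1$ (staying odd via $\alpha_1$), $\gamma_{2q}^{-1}$ (shift to even), $\tilde\alpha_2$ (staying even via $\alpha_2$), and $\gamma_{2q}^{-1}$ (shift back to odd), which under the odd renaming produces the map $i\mapsto \gamma_q^{-1}\alpha_2\alpha_1(i)$. An analogous computation starting from an even element $2i$ yields, under the even renaming, the map $i\mapsto \alpha_1\gamma_q^{-1}\alpha_2(i)$. These two permutations of $\mathcal S_q$ are conjugates of each other (one obtained from the other by cyclic permutation of the factors), so they share the same cycle count; combined with the factor of $\tfrac{1}{2}$ from the parity-swapping observation this gives $\#\sigma = \#(\alpha_1\gamma_q^{-1}\alpha_2)$.

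The only delicate point is handling the modular arithmetic correctly at the endpoints (the case $\alpha_2(i)=q$, where $\gamma_{2q}^{-1}(2q)=1$ wraps around to the odd block via the permutation $2q\mapsto 1$), but this is exactly what the full cycle $\gamma_q$ encodes on the renamed index set, so the formulas extend uniformly. I expect the only real obstacle to be notational—keeping the two renamings, the two cycles $\gamma_1,\gamma_2$, and the shift $\gamma_{2q}^{-1}$ consistently aligned—rather than conceptual.
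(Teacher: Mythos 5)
Your proposal is correct and follows essentially the same route as the paper: the second equality is definitional via $\alpha_2^{\mathrm{Kr}}=\alpha_2^{-1}\gamma_q$, and the first rests on the parity-alternation of $\sigma=\gamma_{2q}^{-1}(\tilde\alpha_1\oplus\tilde\alpha_2)$ together with the computation that two applications of $\sigma$ send $2i$ to $2\alpha_1(\alpha_2(i)+1)$. The paper phrases this as a direct bijection between cycles of $\sigma$ and cycles of $\alpha_1\gamma_q^{-1}\alpha_2$ by tracking the even elements, whereas you package the same computation through $\#\sigma=\tfrac12\#(\sigma^2)$ and the conjugacy of the odd and even restrictions; both are valid and the difference is purely presentational.
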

\begin{proof}
We show $\# \left(\gamma_{2q}^{-1} (\tilde \alpha_1 \oplus \tilde \alpha_2)\right) = \#(\alpha_1\gamma_q^{-1}  \alpha_2 )$, the other equality following from the definition of the Kreweras complement for geodesic permutations. First, note that the respective actions on $\tilde \alpha_{1,2}$ are 
$\tilde \alpha_1(2i) = 2i$, $\tilde \alpha_1(2i-1) = 2 \alpha_1(i)-1$, resp.~ $\tilde \alpha_2(2i) = 2\alpha_2(i)$, $\tilde \alpha_2(2i-1) = 2 i-1$, for $i=1,2,\ldots, q$. An arbitrary element $i \in [q]$ is mapped by the permutation $\alpha_1\gamma_q^{-1}  \alpha_2$ to $\alpha_1(\alpha_2(i)+1)$. Let us compute the image of the corresponding element $2i \in [2q]$ through the permutation $\gamma_{2q}^{-1} (\tilde \alpha_1 \oplus \tilde \alpha_2)$:
\be 2i \xrightarrow{\tilde \alpha_1 \oplus \tilde \alpha_2} 2\alpha_2(i) \xrightarrow{\gamma_{2q}^{-1}} 2\alpha_2(i) +1,\ee
which is an odd number. Another application of $\gamma_{2q}^{-1} (\tilde \alpha_1 \oplus \tilde \alpha_2)$ yields
\be 2\alpha_2(i) +1 = 2(\alpha_2(i) +1)-1 \xrightarrow{\tilde \alpha_1 \oplus \tilde \alpha_2} 2 \alpha_1(\alpha_2(i)+1) - 1 \xrightarrow{\gamma_{2q}^{-1}} 2 \alpha_1(\alpha_2(i)+1),\ee
which establishes a bijection between the cycles of $\gamma_{2q}^{-1} (\tilde \alpha_1 \oplus \tilde \alpha_2)$ and $\alpha_1\gamma_q^{-1}  \alpha_2$, finishing the proof.
\end{proof}

\begin{remark}\label{rem:meanders}
In the proof of Lemma \ref{lem:twisting}, $\gamma_{2q}^{-1} (\tilde \alpha_1 \oplus \tilde \alpha_2)$ is 
the number of loops in meanders \cite{DGG} constructed by the two non-crossing partitions $\alpha_{1,2}$, see \cite{FSn,fbe}.
\end{remark}

To state the next result, we introduce the ``symmetric square root'' operator $R$ acting on probability measures supported on the positive real line:
\begin{equation}\label{eq:ssr}
R[\mu] := \frac 1 2 (\sqrt x)_\# \mu + \frac 1 2(-\sqrt x)_\# \mu.
\end{equation} 

\begin{proposition}\label{prop:random-Choi-c-Gamma}
In the asymptotic regime $1 \ll k \ll n$, the random matrix $C_n^{c\Gamma}$ converges in moments to the probability measure
\begin{equation}\label{eq:mu-C-c-Gamma}
\mu^{(Cc\Gamma)}_{t} =  (R \circ D_{t}) \left[ b_{t}^{\boxplus 1/t} \right],
\end{equation}
for which we have
\begin{equation}\label{eq:norm-mu-C-c-Gamma}
 \|\mu^{(Cc\Gamma)}_{t}\| = 
\begin{cases} 
2\sqrt{t(1-t)}& \text{if } t \leq 1/2\\
1&\text{if } t>1/2.
\end{cases} 
\end{equation}
\end{proposition}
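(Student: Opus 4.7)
The plan is to adapt the moment method of Proposition~\ref{prop:moments-CnGamma}/Theorem~\ref{thm:strong-convergence} to the iterated asymptotic $n\to\infty$ followed by $k\to\infty$. Since $L^c$ is obtained from $L$ by interchanging the environment $\mathbb C^n$ and the output $\mathbb C^k$, the graphical Weingarten calculus yields
\begin{equation*}
(nd)^{-1}\,\Ex\trace[(C_n^{c\Gamma})^p] =  (nd)^{-1}\sum_{\alpha,\beta\in\mathcal S_p} k^{\#\alpha}\,n^{\#(\gamma^{-1}\alpha)}\,d^{\#(\gamma\beta)}\,\Wg_{nk}(\alpha^{-1}\beta),
\end{equation*}
i.e.\ the same formula as in Proposition~\ref{prop:moments-CnGamma} with $k$ and $n$ swapped in the first two loop factors.

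Taking first $n\to\infty$ at fixed $k$ and $t$, with $d\sim tnk$ and the Weingarten asymptotic \eqref{eq:Wg-asympt}, the exponent of $n$ in each summand equals $p-2-(|\gamma^{-1}\alpha|+|\alpha^{-1}\beta|+|\gamma\beta|)$. Using $|\gamma\beta|=|\beta^{-1}\gamma^{-1}|$ and a triple triangle inequality, this is bounded above by $p-2-|\gamma^{-2}|$. Since $\gamma^{-2}$ is a single $p$-cycle when $p$ is odd but splits into two disjoint $q$-cycles when $p=2q$ is even, the bound equals $-1$ in the odd case and $0$ in the even case. Thus odd moments vanish at the limit, while the even moments $p=2q$ receive contributions only from the geodesic factorizations of $\gamma^{-2}$.

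For $p=2q$ I would invoke Lemma~\ref{lem:twisting}: the geodesic factorizations of $\gamma^{-2}=\gamma_1\oplus\gamma_2$ are in bijection with pairs of non-crossing partitions $(\alpha_1,\alpha_2)\in[\mathrm{id},\gamma_q^{-1}]^2$ on the odd and even halves of $\{1,\ldots,2q\}$, via the Kreweras-complement identity $\#\bigl(\gamma_{2q}^{-1}(\tilde\alpha_1\oplus\tilde\alpha_2)\bigr)=\#\bigl(\alpha_1(\alpha_2^{\mathrm{Kr}})^{-1}\bigr)$. Rewriting the $k$- and $d$-loop counts together with the $\Mob$ weight in this $(\alpha_1,\alpha_2)$ parametrization, extracting the leading order in $k$, and applying the moment-cumulant formula of free probability with the free cumulants $\kappa_n(b_t)$ of the Bernoulli $b_t=(1-t)\delta_0+t\delta_1$ (which records the contribution $t$ per cycle coming from the rank-$tnk$ projection), the sum reorganizes as
\begin{equation*}
\sum_{\beta\in NC(q)} t^{q-\#\beta}\,\kappa_\beta(b_t) = t^q\,m_q\!\bigl(b_t^{\boxplus 1/t}\bigr) = m_q\!\bigl(D_t[b_t^{\boxplus 1/t}]\bigr).
\end{equation*}
Combined with vanishing odd moments and the elementary identity $M_{2q}(R[\mu])=M_q(\mu)$ valid for $\mu$ on $[0,\infty)$, this identifies the limit law as $(R\circ D_t)[b_t^{\boxplus 1/t}]$.

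Finally, \eqref{eq:norm-mu-C-c-Gamma} is a direct computation via Proposition~\ref{prop:bernoulli-boxtimes-boxplus} with $s=t$ and $T=1/t$: the edges $\gamma^\pm(t,1/t)=2(1-t)\pm 2(1-t)$ locate the absolutely continuous support at $[0,4(1-t)]$, and an atom at $1/t$ of mass $(2t-1)/t$ appears exactly when $t>1/2$, with $1/t>4(1-t)$ in that range. Hence $\|D_t[b_t^{\boxplus 1/t}]\|=4t(1-t)$ for $t\le 1/2$ and $=1$ for $t>1/2$; taking square roots through $R$ yields $2\sqrt{t(1-t)}$ and $1$ respectively. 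The main obstacle, as I see it, is the third paragraph: one must show cleanly that at leading order in $k$ the $\Mob$ factor and the loop counts decouple across the two halves of the Lemma~\ref{lem:twisting} decomposition, assembling exactly into the free cumulants of $b_t$ under the Kreweras-complement pairing. The meander interpretation recalled in Remark~\ref{rem:meanders} should provide the combinatorial skeleton needed for this bookkeeping.
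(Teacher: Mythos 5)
Your proposal follows essentially the same route as the paper: the same Weingarten moment formula with the roles of $n$ and $k$ swapped, the same two-step limit (first $n\to\infty$, forcing $p$ even and $\alpha,\beta$ onto the geodesic of $\gamma^2$), the same use of Lemma \ref{lem:twisting} and the Kreweras complement for the $k\to\infty$ step, the same free moment--cumulant reorganization into $t^q m_q(b_t^{\boxplus 1/t})$, and the same support computation via Proposition \ref{prop:bernoulli-boxtimes-boxplus} (your atom analysis for $t>1/2$ is correct and is actually spelled out more explicitly than in the paper). The one step you flag as the ``main obstacle'' is resolved in the paper by a specific order of summation that you should note: after restricting to the geodesic $\mathrm{id}-\gamma\beta-\gamma\alpha-\gamma^2$, the direct-sum decompositions $\gamma\alpha=\alpha_1\oplus\alpha_2$ and $\gamma\beta=\beta_1\oplus\beta_2$ make the M\"obius weight and the $d$-power factorize \emph{exactly} (at fixed $k$), so the sum over $\beta$ is performed first and yields $\kappa_{\alpha_1}(b_t)\,\kappa_{\alpha_2}(b_t)$ with no approximation; the only quantity coupling the two halves is then the single $k$-exponent $\#\bigl(\gamma^{-1}(\tilde\alpha_1\oplus\tilde\alpha_2)\bigr)$, which Lemma \ref{lem:twisting} rewrites as $\#\bigl(\alpha_1(\alpha_2^{\mathrm{Kr}})^{-1}\bigr)$, and maximizing it forces $\mathrm{id}-\alpha_1-\alpha_2^{\mathrm{Kr}}$ at leading order in $k$. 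In particular the meander combinatorics of Remark \ref{rem:meanders} is not the tool for this step --- it is the reason the fixed-$k$ sum \eqref{eq:ptc-moment2} cannot be evaluated in closed form, and it disappears precisely in the $k\to\infty$ limit. With that ordering made explicit, your argument is complete.
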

\begin{proof}
As usual, the first step in the proof is a moment formula, valid at any fixed dimensions $n,k,d$. Using the graphical Weingarten formula (see Figure \ref{fig:Choi-c-Gamma-moment}), we have
\begin{figure}[htbp] 
\includegraphics{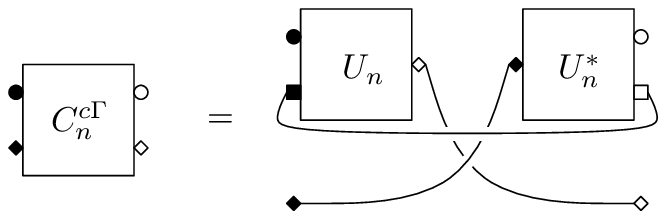} \qquad\qquad
\includegraphics{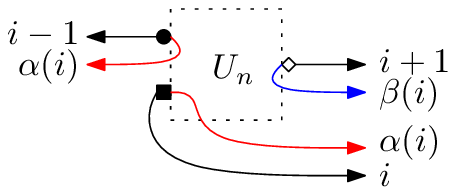} 
\caption{Diagrams for the partially transposed Choi matrix $C_n^{c\Gamma}$ of the complementary channel and for the $i$-th $U_n$ box in the graphical expansion of $\trace \left [ (C_n^{c\Gamma})^p\right]$.} 
\label{fig:Choi-c-Gamma-moment}
\end{figure}
\begin{equation}\label{eq:moments-C-c-Gamma}
(nd)^{-1} \Ex\trace \left[ \left(C_n^{c\Gamma}\right)^p \right]  
= (nd) ^{-1}\sum_{\alpha,\beta \in \mathcal S_p} n^{\#(\gamma^{-1} \alpha)} k^{\#\alpha} d^{\#(\gamma\beta)} \Wg_{kn}(\alpha^{-1}\beta).
\end{equation}
We compute the asymptotic moments of $C_n^{c\Gamma}$ by taking two successive limits, first in $n$ and then in $k$. The power of $n$ in the moment formula above reads
\begin{align}
\text{power of $n$ in \eqref{eq:moments-C-c-Gamma}} &=   -2 + p-|\gamma^{-1} \alpha| + p- |\gamma\beta| -p - |\alpha^{-1}\beta| \notag \\
&=  p-2 - (|\gamma \beta| + |\beta^{-1}\alpha|+ |\alpha^{-1}\gamma| ) \notag \\
&\leq p-2 - |\gamma^2| \leq 0,
\end{align}
where we have used the triangle inequality and the fact that $\gamma^2$ has either one or two cycles, depending on whether $p$ is respectively odd or even. The above bound is saturated if and only if $p$ is an even number and the permutations $\alpha, \beta$ lie on the geodesic $\gamma^{-1}  - \beta- \alpha - \gamma$. Hence, for even integers $p=2q$, $q \geq 1$, we have  
\be
\lim_{n\to\infty}(nd)^{-1} \Ex\trace \left[ \left(C_n^{c\Gamma}\right)^{p} \right]  
&=&  (tk)^{-1}\sum_{\gamma^{-1}  - \beta- \alpha - \gamma} k^{2p- |\alpha| -|\gamma \beta|-p-|\alpha^{-1}\beta|} \, t^{p-|\gamma\beta|} \Mob(\alpha,\beta)  \notag\\
&=&  (tk)^{-1}\sum_{\gamma^{-1} - \beta- \alpha - \gamma} k^{p- |\alpha| - |\gamma\alpha|} \, t^{p-|\gamma\beta|} \Mob(\alpha,\beta)  \notag\\
&=&  (tk)^{-1}\sum_{\gamma^{-1} - \beta- \alpha - \gamma} k^{\#\alpha - |\gamma\alpha|} \, t^{\#(\gamma\beta)} \Mob(\alpha,\beta) . 
\label{eq:ptc-moment1}
\ee
Then, we shift our permutations and work on the geodesic: $\mathrm{id} - \gamma\beta -  \gamma\alpha - \gamma^2$.
Importantly, $\gamma^2$ decomposes as $\gamma^2 = \gamma_1 \oplus \gamma_2$ with $\gamma_1 = (2q-1, 2q-3,\ldots,1)$ and $\gamma_2 = (2q, 2q-2, \ldots, 2)$. Permutations on the geodesic   $\mathrm{id} - \gamma\beta -  \gamma\alpha - \gamma^2$ admit the same decomposition, so we write $\gamma\alpha = \alpha_1 \oplus \alpha_2$ and $\gamma \beta  = \beta_1 \oplus \beta_2$. Using the moment-cumulant formula, we obtain
\begin{align}
\eqref{eq:ptc-moment1}
&=(tk)^{-1}\sum_{\substack{\mathrm{id} - \beta_1 - \alpha_1 - \gamma_1\\\mathrm{id} - \beta_2 - \alpha_2 - \gamma_2}} 
k^{\#(\gamma^{-1} (\alpha_1 \oplus \alpha_2)) - |\alpha_1 \oplus \alpha_2|} \, t^{ \#(\beta_1 \oplus \beta_2)} \,
\Mob(\alpha_1,\beta_1)\,\Mob(\alpha_2,\beta_2)\notag\\
&=(tk)^{-1}\sum_{\substack{\mathrm{id} - \alpha_1 - \gamma_1\\\mathrm{id} - \alpha_2 - \gamma_2}} 
k^{\#(\gamma^{-1} (\alpha_1 \oplus \alpha_2)) - |\alpha_1| - |\alpha_2|} \left[\sum_{\mathrm{id} - \beta_1 - \alpha_1} t^{\#\beta_1} \Mob(\alpha_1,\beta_1) \right]  \notag \\
& \qquad \qquad \times \left[\sum_{\mathrm{id} - \beta_2 - \alpha_2} t^{\#\beta_2} \Mob(\alpha_2,\beta_2) \right]  \notag\\
&=(tk)^{-1}\sum_{\substack{\mathrm{id} -  \alpha_1 - \gamma_1\\\mathrm{id} - \alpha_2 - \gamma_2}} 
k^{\#(\gamma^{-1} (\alpha_1 \oplus \alpha_2)) - |\alpha_1| -|\alpha_2|} \, \kappa_{\alpha_1} (b_t)  \, \kappa_{\alpha_2}(b_t). 
\label{eq:ptc-moment2}
\end{align}
Unfortunately, we are not able to identify a probability measure having these moments. One of the reasons for this is the relation between the sum above and the combinatorics of meanders \cite{DGG}, see Remark \ref{rem:meanders}. We are thus taking the second limit, $k \to \infty$. To do so, we calculate the power of $k$ in \eqref{eq:ptc-moment2} by using Lemma \ref{lem:twisting}. In what follows, we abuse notation by writing, as in Lemma \ref{lem:twisting},  $\alpha_{1,2}$ for the permutations in $\mathcal S_q$ obtained by restricting the previous $\alpha_{1,2} \in \mathcal S_{2q}$ on odd, resp.~ even numbers; note that by doing this, the quantities $|\alpha_{1,2}|$ remain invariant. We get:
\be
\text{power of $k$ in \eqref{eq:ptc-moment2}} &=& -1+ \#( \alpha_1(\alpha_2^\mathrm{Kr})^{-1})  - |\alpha_1| - |\alpha_2| \notag\\
 &=& |\alpha_2^\mathrm{Kr}|  - ( |\alpha_1| + |\alpha_1^{-1} \alpha_2^\mathrm{Kr} | ) \leq 0, 
\ee
where the bound is saturated if and only if $\alpha_1, \alpha_2 \in \mathcal S_q$ are on the geodesic $\mathrm{id} - \alpha_2 - \alpha_1^\mathrm{Kr}$.
Therefore, using repeatedly the moment cumulant formula, we obtain 
\begin{align}
\lim_{k\to\infty} \eqref{eq:ptc-moment2}
&= t^{-1} \sum_{\mathrm{id} - \alpha_1 - \alpha_2^\mathrm{Kr} - \gamma_q} \kappa_{\alpha_1}(b_t) \, \kappa_{\alpha_2} (b_t) 
= t^{-1} \sum_{\mathrm{id} -  \alpha_2^\mathrm{Kr} - \gamma_q}  m_{\alpha_2^\mathrm{Kr}} (b_t)  \,\kappa_{\alpha_2}(b_t) \notag\\
&=  t^{q} \sum_{\mathrm{id} -  \alpha_2^\mathrm{Kr} - \gamma_q} t^{- \#(\alpha_2)}  \,\kappa_{\alpha_2}(b_t)
=  t^{q} \sum_{\mathrm{id} -  \alpha_2^\mathrm{Kr}  - \gamma_q}\kappa_{\alpha_2}(b_t^{\boxplus 1/t}) 
= m_q \left(D_t \left[ b_t^{\boxplus 1/t} \right] \right).  
\end{align}
Finally, we have, for all integers $q \geq 1$,
\be \lim_{k\to\infty}\lim_{n\to\infty} (nd)^{-1} \Ex\trace \left[ \left(C_{n}^{c\Gamma}\right)^{2q} \right]  
= m_q\left(D_t \left[b_t^{\boxplus 1/t}\right]\right), \ee
and
\be \lim_{k\to\infty}\lim_{n\to\infty} (nd)^{-1} \Ex\trace \left[ \left(C_{n}^{c\Gamma}\right)^{2q+1} \right]  
= 0.\ee
To identify the distribution $\mu_{t}^{(Cc\Gamma)}$ as in \eqref{eq:mu-C-c-Gamma},
we have to modify $D_t \left[b_t^{\boxplus 1/t}\right]$ by $(x\to \sqrt x)_{\#}$ to have right powers for the moments and 
make it symmetric as the odd moments vanish. This process is done by \eqref{eq:ssr}.
To specify the support of $D_t \left[b_t^{\boxplus 1/t}\right] $, 
we use Proposition \ref{prop:bernoulli-boxtimes-boxplus} with $T=1/t$ and $s=t$, and since $\varphi^+(t,t)  = 4t(1-t)$, we obtain the conclusion about the support of $\mu_{t}^{(Cc\Gamma)} $, equation \eqref{eq:norm-mu-C-c-Gamma}.
\end{proof}

\subsection{Partially transposed random projections}
In this subsection, we turn to the study of the bound \eqref{eq:B-M} for random quantum channels. Using the fact that $M=VV^*$ is a (random) projection, we are actually interested in the operator norm of the partial transposition of a random projection -- this was Montanaro's point of view in \cite{mon}, where this question was studied, in a different asymptotic regime. For a sequence of random quantum channels $L_n$, we compute the limiting eigenvalue distribution $\mu_{k,t}^{(M\Gamma)}$ of the partially transposed random projection $M_n^\Gamma$, which is expressed in a free probabilistic framework. Unfortunately, we are not able to compute the support of $\mu_{k,t}^{(M\Gamma)}$, so we have to settle with some partial information: as in the previous section, we consider the limit $\mu_{t}^{(M\Gamma)} = \lim_{k \to \infty} \mu_{k,t}^{(M\Gamma)}$, and we compute the support of this simpler measure. Below, we write $\mu \boxminus \nu := \mu \boxplus D_{-1}[\nu]$.

Recall that the \emph{semicircular} probability distribution of mean $m$ and standard deviation $\sigma$ is given by
\begin{equation}
\mathrm{SC}_{m,\sigma} = \frac{\sqrt{4\sigma^2-(x-m)^2}}{2\pi \sigma^2} \mathbf{1}_{[m-2\sigma ,m+ 2\sigma]}(x) dx.
\end{equation}

\begin{proposition}\label{prop:random-M-Gamma}
The partially transposed random projection $M_n^\Gamma$ converges, in moments, towards the probability measure 
\begin{equation}\label{eq:mu-M}
\mu^{(M\Gamma)}_{k,t} =  D_{1/k} \left[ b_t^{\boxplus \frac {k(k+1)}2}  \boxminus b_t^{\boxplus \frac {k(k-1)}2}\right].
\end{equation}
The probability measures $\mu_{k,t}^{(M\Gamma)}$ converge, in distribution, as $k \to \infty$, towards
\begin{equation}\label{eq:mu-M-t}
\lim_{k \to \infty} \mu_{k,t}^{(M\Gamma)} = \mu_{t}^{(M\Gamma)} = \mathrm{SC}_{t,\sqrt{t(1-t)}},
\end{equation}
a semi-circular distribution with mean $t$ and varaince $t(1-t)$. In particular, we have
\begin{equation}\label{eq:norm-mu-M-t}
\| \mu_{t}^{(M\Gamma)} \| = t+2\sqrt{t(1-t)}.
\end{equation}
\end{proposition}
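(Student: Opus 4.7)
The plan is to follow the three-step template used for Theorem \ref{thm:limit-moment} and Proposition \ref{prop:random-Choi}: (i) derive an exact Weingarten moment formula for $\mathrm{Tr}((M_n^\Gamma)^p)$, (ii) extract its $n\to\infty$ asymptotics via a geodesic argument, and (iii) identify the resulting sum with the moments of $\mu_{k,t}^{(M\Gamma)}$ by matching free cumulants. The $k\to\infty$ statement and the norm computation then follow at once from the free CLT together with the support of the semicircular law.

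Writing $M_n=UP_dU^*$ for a Haar unitary $U\in\mathcal U(nk)$ and the rank-$d$ coordinate projection $P_d$, the graphical Weingarten calculus of Theorem \ref{thm:graphical-Wg} applied to a diagram for $\mathrm{Tr}((M_n^\Gamma)^p)$ (analogous to Figure \ref{fig:C-Gamma-n}, but with the ``input'' strand now closed into a self-loop contributing $d=\mathrm{Tr}(P_d)$ per cycle of $\beta$) yields an exact formula of the type
\begin{equation*}
\mathbb E\,\mathrm{Tr}((M_n^\Gamma)^p)=\sum_{\alpha,\beta\in\mathcal S_p}n^{\#(\gamma^{-1}\alpha)}k^{\#(\gamma\alpha)}d^{\#\beta}\,\Wg_{nk}(\alpha^{-1}\beta),
\end{equation*}
where the assignment of $\gamma$ versus $\gamma^{-1}$ to $n$ and $k$ is determined by the choice of partial-transpose convention. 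Inserting the Weingarten asymptotic \eqref{eq:Wg-asympt} and using the triple triangle inequality $|\gamma^{-1}\alpha|+|\alpha^{-1}\beta|+|\beta|\geq p-1$ shows that the power of $n$ in each summand is at most $1$, with equality exactly on the geodesic $\id-\beta-\alpha-\gamma$. Normalising by $nk$ to pass to $\mathrm{tr}$ and using the Kreweras-type identity $\#(\gamma\alpha)=1+e(\alpha)$ on this geodesic (the analog of \cite[Lemma 2.1]{bn12} invoked in Theorem \ref{thm:limit-moment}) together with $|\alpha|=|\beta|+|\alpha^{-1}\beta|$, the inner $\beta$-sum collapses by the free moment--cumulant formula $\sum_{\beta\leq\alpha}t^{\#\beta}\Mob(\alpha,\beta)=\kappa_\alpha(b_t)$, and one arrives at
\begin{equation*}
\lim_{n\to\infty}\mathbb E\,\mathrm{tr}((M_n^\Gamma)^p)=\sum_{\alpha:\,\id-\alpha-\gamma}k^{e(\alpha)-|\alpha|}\,\kappa_\alpha(b_t).
\end{equation*}

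The right-hand side factorises over the cycles of $\alpha$ as $\sum_\alpha\prod_{c\in\alpha}\tilde\kappa_{|c|}$, with per-cycle weight $\tilde\kappa_\ell$ equal to $k^{2-\ell}\kappa_\ell(b_t)$ when $\ell$ is even and $k^{1-\ell}\kappa_\ell(b_t)$ when $\ell$ is odd; hence it is the $p$-th moment of the probability measure whose free cumulants are precisely $\tilde\kappa_\ell$. Using linearity of free cumulants under $\boxplus$, the sign rule $\kappa_\ell(D_{-1}[\nu])=(-1)^\ell\kappa_\ell(\nu)$, and the dilation rule $\kappa_\ell(D_{1/k}[\nu])=k^{-\ell}\kappa_\ell(\nu)$, the $\ell$-th free cumulant of $D_{1/k}[b_t^{\boxplus k(k+1)/2}\boxminus b_t^{\boxplus k(k-1)/2}]$ evaluates to $k^{-\ell}\bigl(\tfrac{k(k+1)}{2}+(-1)^\ell\tfrac{k(k-1)}{2}\bigr)\kappa_\ell(b_t)$, which matches $\tilde\kappa_\ell$ by the elementary identities $k(k+1)/2+k(k-1)/2=k^2$ (for even $\ell$) and $k(k+1)/2-k(k-1)/2=k$ (for odd $\ell$). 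The limiting law is therefore $\mu_{k,t}^{(M\Gamma)}$, proving \eqref{eq:mu-M}.

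For \eqref{eq:mu-M-t}, the cumulant formula above gives $\kappa_1(\mu_{k,t}^{(M\Gamma)})=t$ and $\kappa_2(\mu_{k,t}^{(M\Gamma)})=t(1-t)$ independently of $k$, while $\kappa_\ell(\mu_{k,t}^{(M\Gamma)})=O(k^{-1})$ for every $\ell\geq 3$; the free CLT (or, equivalently, termwise passage to the limit in the free moment--cumulant formula) then identifies the $k\to\infty$ limit as the semicircle $\mathrm{SC}_{t,\sqrt{t(1-t)}}$, and the norm \eqref{eq:norm-mu-M-t} is immediate from its support $[t-2\sqrt{t(1-t)},\,t+2\sqrt{t(1-t)}]$, since the right endpoint dominates in absolute value for every $t\in[0,1]$. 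I expect the principal difficulty to lie in the Weingarten bookkeeping at fixed $k$: unlike in the $C_n^\Gamma$ situation, the power of $k$ on geodesic terms is $e(\alpha)-|\alpha|$, which is generically negative, so the genuine $k$-dependence of $\mu_{k,t}^{(M\Gamma)}$ comes from contributions that are subleading in $k$ but still leading in $n$---these must be carefully retained at fixed $k$ and not conflated with the subleading-in-$n$ terms that genuinely vanish in the limit.
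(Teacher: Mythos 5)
Your proposal is correct and follows essentially the same route as the paper: the same Weingarten moment formula with exponents $n^{\#(\gamma^{-1}\alpha)}k^{\#(\gamma\alpha)}d^{\#\beta}$, the same geodesic reduction and collapse of the $\beta$-sum to $\kappa_\alpha(b_t)$ via the moment--cumulant formula, and the same per-cycle cumulant matching with $D_{1/k}\bigl[b_t^{\boxplus k(k+1)/2}\boxminus b_t^{\boxplus k(k-1)/2}\bigr]$. The only cosmetic difference is in the $k\to\infty$ step, where you pass to the limit of the free cumulants (free CLT) while the paper counts the dominant geodesic permutations in $NC_{1,2}(p)$ directly -- a route the paper itself notes, in a remark, is equivalent.
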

\begin{proof}
We start with a moment formula, obtained via graphical Weingarten calculus (see Figure \ref{fig:M-Gamma-moment} for an explanation of the exponents appearing below): for any integer $p \geq 1$, 
\begin{figure}[htbp] 
\includegraphics{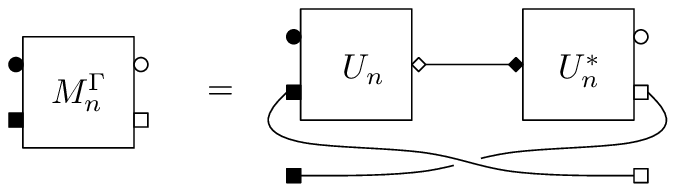} \qquad\qquad
\includegraphics{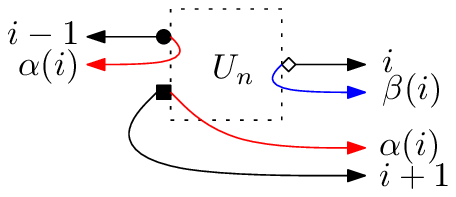} 
\caption{Diagrams for the partially transposed projection $M_n^\Gamma$ and for the $i$-th $U_n$ box in the graphical expansion of its $p$-th moment.} 
\label{fig:M-Gamma-moment}
\end{figure}
\begin{equation}\label{eq:moment-M-Gamma}
\frac{1}{nk} \Ex \trace \left[ (M_n^\Gamma)^p \right] = \frac{1}{nk} \sum_{\alpha,\beta \in \mathcal S_p} n^{\#(\gamma^{-1}\alpha)} k^{\#(\gamma\alpha)} d^{\#\beta}  \Wg (\alpha^{-1}\beta). 
\end{equation}
To obtain the surviving terms as $n \to \infty$, we maximize the power of $n$ for each term in the equation above (recall that $d \sim tkn$)
\begin{align}
\text{power of $n$ in \eqref{eq:moment-M-Gamma}} &= -1 + \#(\gamma^{-1}\alpha) + \#\beta -p-|\alpha^{-1}\beta|  \notag\\
&= p-1 - ( |\beta| + |\beta^{-1}\alpha|+|\alpha^{-1}\gamma|) \notag \\
&\leq p-1 - |\gamma| = 0.
\end{align}
The bound above is saturated if and only if the permutations $\alpha,\beta \in \mathcal S_p$ lie on the geodesic $\mathrm{id} - \beta - \alpha - \gamma$. This implies that
\eq{
\lim_{n\to\infty}(nk)^{-1} \Ex \trace \left[ \left(M_n^\Gamma \right)^p \right] 
&= k^{-1} \sum_{\mathrm{id} - \beta - \alpha - \gamma} k^{\#(\gamma\alpha)} (tk)^{\#\beta} k^{-p-|\alpha^{-1}\beta|}\Mob(\alpha,\beta) \notag\\
&= k^{-p} \sum_{\mathrm{id} - \beta - \alpha - \gamma} k^{e(\alpha) + \#\alpha } t^{\#(\beta)}\Mob(\alpha,\beta)\notag\\
&= k^{-p} \sum_{\mathrm{id}  - \alpha - \gamma} k^{e(\alpha) + \#\alpha } \kappa_\alpha(b_t),
\label{eq:moment-ptp1}
}
where we have made use again of the fact that, for a geodesic permutation $\mathrm{id} - \alpha - \gamma$, we have $\#(\gamma\alpha) = 1+ e(\alpha)$, where $e(\alpha)$ denotes the number of cycles of $\alpha$ having even length (see \cite[Lemma 2.1]{bn12} for a proof). Importantly, the general term in the sum above is a function which is multiplicative on the cycles of $\alpha$:
\eq{
k^{e(\alpha) + \#(\alpha) } \kappa_\alpha(b_t)  = \prod_{c \in \alpha} f(\|c\|) 
}
where $\|c\|$ is the length of a cycle $c$, and $f$ is defined by 
\be
f(r) =  \kappa_r(b_t) \times \begin{cases}
k  & \text{if $r$ is odd} \\
k^2 &\text{if $r$ is even}.
\end{cases}
\ee
On the other hand, given two probability measures $\nu_1$ and $\nu_2$, we have
\be
\kappa_r (\nu_1 \boxminus \nu_2) =\kappa_r (\nu_1 ) + (-1)^r \kappa_r (\nu_2) 
\ee
and we notice in fact that 
\be
f(r) = \kappa_r \left(b_t^{\boxplus \frac {k(k+1)}2}\right) + (-1)^r \kappa_r \left(b_t^{\boxplus \frac {k(k-1)}2}\right).  
\ee
Using the moment-cumulant formula, we prove our first claim:
\be \lim_{n\to\infty}(nk)^{-1} \Ex \trace \left[ \left(M_n^\Gamma \right)^p \right] = m_p\left( D_{1/k} \left[ b_t^{\boxplus \frac {k(k+1)}2}  \boxminus b_t^{\boxplus \frac {k(k-1)}2}\right] \right).\ee
Since we are not able to analytically describe the support of the measure $\mu_{k,t}^{(M\Gamma)}$ above, we take limit $k\to\infty$ in \eqref{eq:moment-ptp1}. For each geodesic permutation $\mathrm{id} - \alpha - \gamma$, we claim that
\[
\text{power of $k$ in \eqref{eq:moment-ptp1}}  = -p+e(\alpha) +\#(\alpha) \leq 0.
\]
Indeed, assume that $\alpha$ has $q$ fixed points, $0 \leq q \leq p$. Then, $e(\alpha) \leq (p-q)/2$ and $\#\alpha \leq q + (p-q)/2$, which proves the inequality. Permutations $\alpha$ saturating the bound must have \emph{exactly} $q$ fixed points and $(p-q)/2$ cycles of even length, which implies that the non-trivial cycles have length 2. 
We denote the set of non-crossing partitions having only blocks of length 1 and 2 by  $NC_{1,2}(p)$ (see also \cite{aub} for another instance where this set was related to non-centered semicircular distributions). 
Then,
\be \lim_{k\to\infty}\eqref{eq:moment-ptp1} = \sum_{\alpha \in NC_{1,2}(p)} \kappa_\alpha(b_t) =  \sum_{\alpha \in NC(p)} \kappa_\alpha \left( \mathrm{SC}_{t,\sqrt{t(1-t)}} \right)  = m_p \left( \mathrm{SC}_{t,\sqrt{t(1-t)}} \right), \ee
because the first two free cumulants are respectively $\kappa_{1} (b_t) = \kappa_1 \left( \mathrm{SC}_{t,\sqrt{t(1-t)}} \right) =t$ and $\kappa_2(b_t) = \kappa_2 \left( \mathrm{SC}_{t,\sqrt{t(1-t)}} \right) =t-t^2$.
 \end{proof}
 
\begin{remark}
The convergence in distribution in the result above was also found in \cite{anv}, using operator-valued free probabilistic methods.
\end{remark}
 
\begin{remark}
Equation \eqref{eq:mu-M-t} can also be proved using the free Central Limit Theorem \cite[Theorem 8.10]{nsp}. Indeed, reorder the terms in $\mu_{k,t}^{(M\Gamma)}$ to write
\be\mu_{k,t}^{(M\Gamma)} = D_{1/\sqrt 2}D_{1/\sqrt{k(k-1)/2}} \left[\left(b_t  \boxminus b_t\right)^{\boxplus \frac {k(k-1)}2} \right] \boxplus D_{1/k}[b_t^{\boxplus k} ].\ee
The first part above is responsible for the centered semicircular part of \eqref{eq:mu-M-t}, while the second term is responsible for the shift $t$. 
\end{remark}

\subsection{Comparing the bounds}
\label{sec:compare}

In the previous three subsections and in Section \ref{sec:random-Choi}, we have computed the asymptotic limits of the bounds \eqref{eq:B-C}-\eqref{eq:B-M} in the case of large dimensional random quantum channels (recall that the bound \eqref{eq:B-I} is always trivial). Since our ultimate goal is to use these quantities and Proposition \ref{prop:bound} to lower bound minimum output R\'enyi entropies of quantum channels, we analyze in this subsection which of the four quantities yields the tightest bounds. Note that the quantity $\|M_L^\Gamma\|$ is different from the other three, since it provides a lower bound for the $p=\infty$ minimum output R\'enyi entropy of $L$; however, using the inequality $H_2 \geq H_\infty$ (see Lemma \ref{lem:H-decreasing-in-p}), we shall consider it here as a lower bound for $H_2^{\min}(L)$.

Let us start with the two bounds arising from the Choi matrix of the channel $L$. 

\begin{proposition}\label{prop:comparing-Choi}
For a sequence of random quantum channels $L_n$, the following inequality holds almost surely:
\be \lim_{n \to \infty} \|C_n^{\Gamma}\| = \|\mu^{(C\Gamma)}_{k,t}\| \leq \|\mu^{(C)}_{k,t}\| = \lim_{n \to \infty} \|C_n\|.\ee
\end{proposition}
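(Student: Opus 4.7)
The plan is to reduce the almost sure inequality to a deterministic comparison between the two closed-form expressions already furnished by Theorem \ref{thm:strong-convergence} and Proposition \ref{prop:random-Choi}. Indeed, the strong convergence statements in those results give, almost surely,
\[ \lim_{n\to\infty} \|C_n^\Gamma\| = \|\mu^{(C\Gamma)}_{k,t}\| \quad \text{and} \quad \lim_{n\to\infty} \|C_n\| = \|\mu^{(C)}_{k,t}\|, \]
so the entire content of the proposition is the scalar inequality $\|\mu^{(C\Gamma)}_{k,t}\| \leq \|\mu^{(C)}_{k,t}\|$. I emphasize that one cannot hope for a matrix-level comparison $\|C_n^\Gamma\| \leq \|C_n\|$, since partial transposition can dramatically increase the operator norm (e.g.\ of a rank-one projection onto a maximally entangled state).

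Set $s:=(k+1)/(2k)$. The plan is a two-case split according to which branch of the piecewise formulas \eqref{eq:norm-mukt-1}--\eqref{eq:norm-mukt-2} and \eqref{eq:norm-mu-C} is active. First, when $t+s<1$, i.e., $t<(k-1)/(2k)$, one checks that $t+k^{-2}<1$ too (for $k\geq 2$, $(k-1)/(2k)+k^{-2}=(k^2-k+2)/(2k^2)<1$), so both norms are given by their interior formulas. Using $s(1-s)=(k^2-1)/(4k^2)$, the two square-root contributions in \eqref{eq:phi-st-pm} coincide (both equal $2\sqrt{(1-k^{-2})\,t(1-t)}$) and cancel, yielding
\[ \|\mu^{(C)}_{k,t}\| - \|\mu^{(C\Gamma)}_{k,t}\| = kt \geq 0. \]

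Second, when $t+s\geq 1$, we have $\|\mu^{(C\Gamma)}_{k,t}\|=1$ and only need $\|\mu^{(C)}_{k,t}\|\geq 1$. The sub-case $t+k^{-2}\geq 1$ gives $\|\mu^{(C)}_{k,t}\|=k\geq 1$ outright. In the remaining sub-case $t\in [(k-1)/(2k),\,1-k^{-2})$, I would show by a direct computation that $t\mapsto \varphi^+(k^{-2},t)$ is non-decreasing on this interval (differentiating, the derivative is strictly positive on $[(k-1)/(2k),1/2]$ and vanishes only at the right endpoint $t=1-k^{-2}$, where $\varphi^+$ reaches $1$). Consequently
\[ k\,\varphi^+(k^{-2},t) \geq k\,\varphi^+\!\left(k^{-2},\tfrac{k-1}{2k}\right) = \tfrac{k+1}{2} \geq 1, \]
finishing Case 2.

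The only mildly delicate point is the monotonicity sub-step in Case 2; everything else is bookkeeping on the two piecewise formulas. I expect this to be the main (but still elementary) obstacle. The overall structure makes the quantity $kt$ appear as the ``gap'' between the two bounds in the generic regime, and this is what will motivate the discussion in the following subsection comparing the usefulness of the two bounds.
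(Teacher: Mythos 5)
Your proposal is correct and follows essentially the same route as the paper: the key identity $\|\mu^{(C)}_{k,t}\|-\|\mu^{(C\Gamma)}_{k,t}\| = k\varphi^+(k^{-2},t)-2\varphi^+\bigl(\tfrac{k+1}{2k},t\bigr)+1 = kt \geq 0$ in the regime where both formulas are non-constant, combined with monotonicity in $t$ and the ordering of the two saturation thresholds to handle the remaining ranges. Your version merely makes the case split more explicit than the paper's three-fact summary, and all your intermediate claims (including $k\varphi^+(k^{-2},\tfrac{k-1}{2k})=\tfrac{k+1}{2}$ and the monotonicity of $\varphi^+(k^{-2},\cdot)$ on $[0,1-k^{-2}]$) check out.
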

\begin{proof}
First, note that at fixed $k$, the two norms in question, \eqref{eq:norm-mukt-1} and \eqref{eq:norm-mu-C}, are increasing functions of $t$. Then, note that the value of $t$ where the norm of the Choi matrix becomes constant $(=1)$ is smaller: $t+k^{-2} \leq t+(k+1)/(2k)$. Finally, the following remarkable identity holds: 
\begin{equation}
\forall \, t < 1- k^{-2}, \quad \|\mu^{(C)}_{k,t}\| - \|\mu^{(C\Gamma)}_{k,t}\| = k \varphi^+(k^{-2},t) - 2 \varphi^+ \left(\frac{k+1}{2k},t \right) + 1 = kt \geq 0.
\end{equation}
The proof follows now from the three facts above, see also Figure \ref{fig:bound-Choi-vs-Choi-Gamma} for the case $k=2$.
\end{proof}
\begin{figure}[htbp] 
\includegraphics{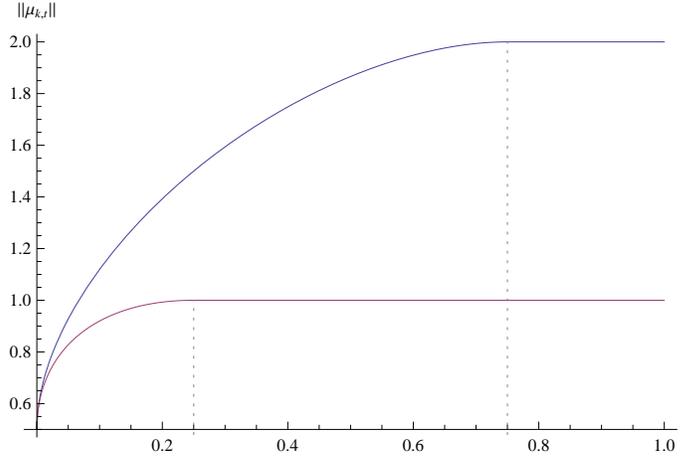} 
\caption{For large random quantum channels, the norm of the Choi matrix (blue) is larger than the norm of the partially transposed Choi matrix (red), here for $k=2$.} 
\label{fig:bound-Choi-vs-Choi-Gamma}
\end{figure}

In the previous two subsections, we were unfortunately not able to compute in full generality the asymptotic operator norm for the other two bounds, $\|C_{L^c}^\Gamma\|$ and $\| M_L^\Gamma \|$. We have to settle thus for a partial result, concerning their asymptotic behaviour in the case $1 \ll k \ll n$ (this regime corresponds to first taking the limit $n \to \infty$, followed by the limit $k \to \infty$). 

\begin{proposition}\label{prop:comparing-asymptotic}
For a sequence of random quantum channels $L_n$, in the asymptotical regime $1 \ll k \ll n$, the following inequalities hold:
\be \|\mu^{(C\Gamma)}_{t}\|  = \|\mu^{(Cc\Gamma)}_{t}\| \leq \|\mu^{(M\Gamma)}_{t}\|,
\label{eq:inequalities-measures}\ee
where 
\be \mu^{(C\Gamma)}_{t} = \lim_{k \to \infty} \mu^{(C\Gamma)}_{k,t} = 
 D_t \left[ \left(\frac{1}{2} \,\delta_{-1} +\frac{1}{2} \, \delta_{+1} \right)^{\boxplus 1/t}\right].
\label{eq:lim-mucg}\ee
In the case of the Choi matrices,  
\be \lim_{k \to \infty} \|\mu^{(C)}_{k,t}\| = + \infty.
\label{eq:lim-muC}\ee 
\end{proposition}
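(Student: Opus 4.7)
The plan is to assemble the three sub-statements from the explicit closed-form descriptions of the relevant limiting measures already derived in Theorem~\ref{thm:strong-convergence} and Propositions~\ref{prop:random-Choi}, \ref{prop:random-Choi-c-Gamma}, \ref{prop:random-M-Gamma}, in each case passing to the limit $k\to\infty$.

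First I would establish \eqref{eq:lim-mucg} directly from the description $\mu^{(C\Gamma)}_{k,t}=D_t[\nu_k^{\boxplus 1/t}]$ obtained inside the proof of Theorem~\ref{thm:limit-moment}, with $\nu_k=\tfrac{k-1}{2k}\delta_{-1}+\tfrac{k+1}{2k}\delta_{+1}$. Since $\nu_k\to\tfrac12(\delta_{-1}+\delta_{+1})$ weakly, with uniformly bounded support, and free additive convolution is continuous in the topology of convergence of all moments (see e.g.~\cite{nsp}), the limit $k\to\infty$ commutes with the operations $\boxplus^{1/t}$ and $D_t$, yielding \eqref{eq:lim-mucg}. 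The corresponding norm can be read off even more cheaply from \eqref{eq:norm-mukt-2}: when $t<1/2$ the first case of \eqref{eq:norm-mukt-1} eventually applies and $(1-2t)/k+2\sqrt{(1-k^{-2})t(1-t)}\to 2\sqrt{t(1-t)}$, while when $t>1/2$ the threshold condition $t+s\geq 1$ (with $s=(k+1)/(2k)$) holds for all $k$ large, forcing the norm to equal $1$. Comparing with \eqref{eq:norm-mu-C-c-Gamma} term by term gives the equality $\|\mu^{(C\Gamma)}_t\|=\|\mu^{(Cc\Gamma)}_t\|$.

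To deduce the inequality against $\|\mu^{(M\Gamma)}_t\|=t+2\sqrt{t(1-t)}$ from \eqref{eq:norm-mu-M-t}, I would split into two cases. For $t\leq 1/2$, the bound reduces to $t\geq 0$. For $t\geq 1/2$, the target $t+2\sqrt{t(1-t)}\geq 1$ rearranges to $2\sqrt{t(1-t)}\geq 1-t$, and squaring (both sides nonnegative) gives the equivalent $4t(1-t)\geq (1-t)^2$, i.e.\ $t\geq 1/5$, which is satisfied throughout $[1/2,1]$.

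Finally, \eqref{eq:lim-muC} is a one-line expansion: combining \eqref{eq:norm-mu-C} with \eqref{eq:phi-st-pm} gives
\begin{equation*}
k\,\varphi^+(k^{-2},t)=k^{-1}(1-2t)+kt+2\sqrt{(1-k^{-2})t(1-t)},
\end{equation*}
which diverges at rate $kt\to\infty$. There is no serious obstacle in the proposition; the only delicate point is the interchange of $k\to\infty$ with $\boxplus^{1/t}$ in the first claim, and that is covered by the standard continuity of free additive convolution in the moment topology.
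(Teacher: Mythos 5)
Your proposal is correct and follows essentially the same route as the paper's proof: both simply assemble the explicit norm formulas from Theorem \ref{thm:strong-convergence} and Propositions \ref{prop:random-Choi}, \ref{prop:random-Choi-c-Gamma}, \ref{prop:random-M-Gamma} and compare them in the limit $k \to \infty$. The only difference is that you spell out the elementary verifications (continuity of $\boxplus^{1/t}$ under the weak limit $\nu_k \to \tfrac12(\delta_{-1}+\delta_{+1})$, the case split at $t=1/2$ for the inequality, and the expansion of $k\varphi^+(k^{-2},t)$) that the paper leaves implicit.
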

\begin{proof}
First,  \eqref{eq:def-mukt-1} shows the equality in \eqref{eq:lim-mucg}. 
Then, taking $k \to \infty $ in Theorem \ref{thm:strong-convergence}, 
$\| \mu^{(C\Gamma)}_{t} \|$ turns out to be the same as $\|\mu^{(Cc\Gamma)}_{t}\| $,
which is given in \eqref{eq:norm-mu-C-c-Gamma}. 
Next, \eqref{eq:norm-mu-M-t} results in the inequality in \eqref{eq:inequalities-measures}.
Finally, \eqref{eq:lim-muC} is obtained by  \eqref{eq:norm-mu-C}. 
\end{proof}
\begin{remark}
Note that, although the probability measures $\mu_t^{(C\Gamma)}$ and $\mu_t^{(Cc\Gamma)}$ are both symmetric and have the same support upper bound, they are different. Indeed, we have
\begin{equation}
\mathrm{Var}[\mu_t^{(C\Gamma)}] = t^2 \kappa_2 \left[ \left(\frac{1}{2} \,\delta_{-1} +\frac{1}{2} \, \delta_{+1} \right)^{\boxplus 1/t}\right] = t\mathrm{Var}\left[\frac{1}{2} \,\delta_{-1} +\frac{1}{2} \, \delta_{+1} \right] = t,
\end{equation}
while
\begin{equation}
\mathrm{Var}[\mu_t^{(Cc\Gamma)}] = t^2 \kappa_1\left[ b_t^{\boxplus 1/t}\right] =t^2.
\end{equation}
\end{remark}

Finally, we present in Figure \ref{fig:histograms} some numerical result in the case $k=2$, $t=0.1$, for all the bounds. In the case of $C_L$, $C_L^\Gamma$, and $M_L^\Gamma$, the $10$ random isometries with $n=2000$ were used to produce the eigenvalue plot, while for $C_{L^c}^\Gamma$ we used $10$ random isometries with $n=100$ (note that we removed from the graphs some Dirac masses at zero corresponding to rank-deficient matrices). The approximations we deduce for the norms of the matrices are presented in Table \ref{tab:norms-bounds}. In the case we consider ($k=2$), it seems that the bound corresponding to $C_{L^c}^\Gamma$ is the tightest. However, this might be due simply to the rather small value of $n$ compared to the other cases, so we do not wish to make any conjectures at this time.  
\begin{figure}[htbp] 
\includegraphics[width=0.4\textwidth]{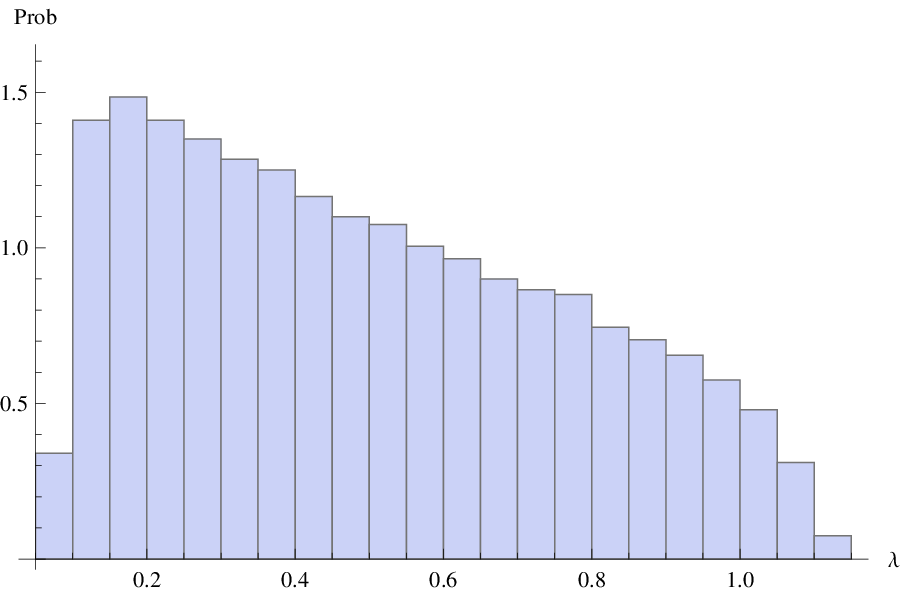} \quad 
\includegraphics[width=0.4\textwidth]{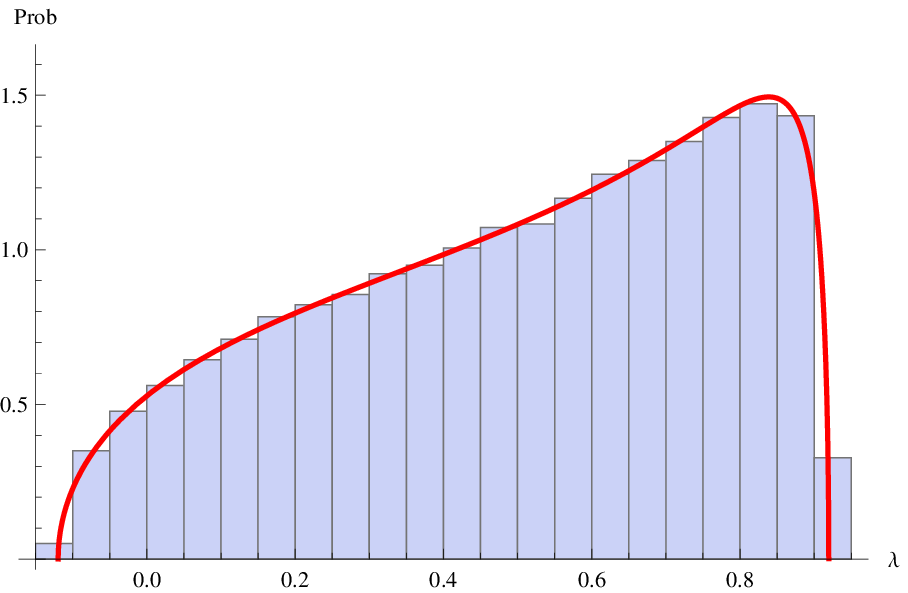} \\
\includegraphics[width=0.4\textwidth]{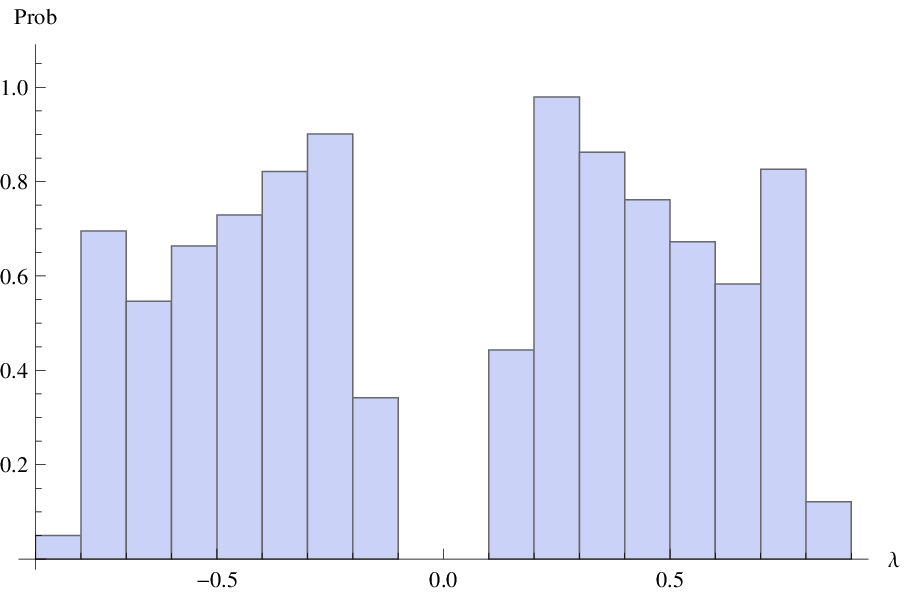} \quad 
\includegraphics[width=0.4\textwidth]{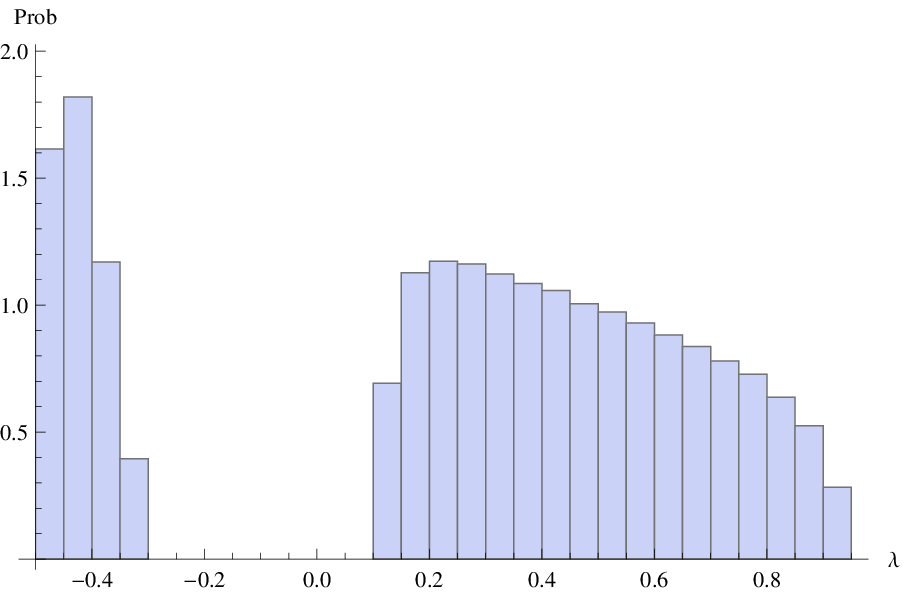} \\
\caption{Histograms for the eigenvalue distributions of, respectively,  $C_L$, $C_L^\Gamma$, and $C_{L^c}^\Gamma$, and $M_L^\Gamma$. For each histogram, we have chosen $10$ random matrices with $k=2$, $t=0.1$ and $n=2000$ (except for $C_{L^c}^\Gamma$, where $n=100$).}
\label{fig:histograms}
\end{figure}

\begin{table}[htdp]
\renewcommand{\arraystretch}{1.3}
\caption{Numerical results for the four bounds, in the case $k=2$ and $t=0.1$.}
\begin{center}
\begin{tabular}{|c|c|c|}
\hline
Bound &Numerical estimate & Theoretical value  \\ \hline \hline 
$\|C_L\|$ & $1.12360$ & $1.11962$ \\ \hline
$\|C_L^\Gamma\|$ & $0.91990$ & $0.91961$ \\ \hline
$\|C_{L^c}^\Gamma\|$ & $0.85758$ & -- \\ \hline
$\|M_L^\Gamma\|$ & $0.94548$ & -- \\ \hline
\end{tabular}
\end{center}
\label{tab:norms-bounds}
\end{table}

\bigskip

We would like to conclude this section with a discussion on the optimality of the four bounds, in the case of random quantum channels. From a practical point of view, note that we have explicit formulas, at fixed $k$ and $t$, only for the two bounds corresponding to Choi matrices, $\|C_L\|$ and $\|C_L^\Gamma\|$; among the two, Proposition \ref{prop:comparing-Choi} shows that the bound for the partial transposition is always tighter. In the asymptotical regime where  $t$ is fixed and $k \to \infty$, in Proposition \ref{prop:comparing-asymptotic} we show that the bound $\|C_L^\Gamma\|$ is tighter that $\|M_L^\Gamma\|$ (which is the bound used by Montanaro in \cite{mon}), while the two bounds corresponding to partially transposed Choi matrices, $\|C_L^\Gamma\|$ and $\|C_{L^c}^\Gamma\|$, perform equally well. Numerical simulations for $k=2$ and $t=0.1$ seem to suggest that $\|C_{L^c}^\Gamma\|$ performs better in this particular case; however, we do not consider this numerical data conclusive, because of the small value of the parameter $n$ that was used to obtained them, due to machine memory limitations.

For these reasons, we choose to work in the next sections with the bound $\|C_L^\Gamma\|$, corresponding to the partial transposition of the Choi matrix, see \eqref{eq:norm-mukt-1}-\eqref{eq:norm-mukt-2}.

\section{Minimum output entropies for a single random quantum channel}
\label{sec:MOE}

In this section we recall some upper bounds for minimum output entropies of random quantum channels we shall use in what follows. The following fact is a collection of results from \cite{bcn13,CollinsNechita2011}:

\begin{theorem}\label{thm:bcn}
For all $p \in [0,\infty]$ and for almost all sequences of random quantum channels $(L_n)_n$, we have
\begin{equation}\label{eq:upper-bound-h}
\limsup_{n \to \infty}H^{\min}_p (L_n) \leq H_p(x_{k,t})=:h_{p,k,t},
\end{equation}
where 
\be 
x_{k,t} = \left(y, \underbrace{\frac{1-y}{k-1}, \ldots, \frac{1-y}{k-1}}_{k-1 \text{ times}} \right)
\label{eq:single-opt}
\ee
with 
\begin{align}
y &= \max \operatorname{supp} (b_t \boxtimes b_{1/k}) = \max(1,\varphi^+(t,1/k)) \notag\\
&= \min\left[1, t + \frac{1}{k} - 2 \frac{t}{k}+ 2\sqrt{t(1-t)\frac{1}{k}\left(1-\frac{1}{k}\right)} \right].
\end{align}
This statement also holds for the sequence of complementary channels $(L^C_n)_n$.

Moreover, for $p \in [1,\infty]$, the above inequality is an equality, and $\limsup$ can be replaced by $\lim$ in \eqref{eq:upper-bound-h}.
\end{theorem}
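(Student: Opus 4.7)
The theorem assembles results from \cite{CollinsNechita2011,bcn13} proven by free-probabilistic methods. My plan splits into an upper bound valid for all $p \in [0, \infty]$ and a matching lower bound for $p \in [1, \infty]$.

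For the upper bound, I would exhibit an explicit input saturating the entropy. Fix a unit vector $|e\rangle \in \mathbb{C}^k$ and consider the random matrix
\[
M_n := V_n^* (I_n \otimes |e\rangle\langle e|) V_n \in \mathcal M_d(\mathbb C).
\]
Its nonzero spectrum coincides with that of the product $(I_n \otimes |e\rangle\langle e|)(V_nV_n^*)(I_n \otimes |e\rangle\langle e|)$ of two projections with normalized traces $1/k$ and $\to t$. By Voiculescu's asymptotic freeness (Proposition \ref{prop:bernoulli-boxtimes-boxplus}) and Collins--Male strong convergence (Proposition \ref{prop:CM}), $\lambda_{\max}(M_n) \to y = \max\mathrm{supp}(b_t \boxtimes b_{1/k}) = \varphi^+(t, 1/k)$ almost surely. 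Letting $|\psi_n^*\rangle$ be a top eigenvector, $\langle e|L_n(|\psi_n^*\rangle\langle\psi_n^*|)|e\rangle = \langle\psi_n^*|M_n|\psi_n^*\rangle \to y$. To identify the remaining output eigenvalues, I would exploit the invariance in law $V_n \stackrel{d}{=} (I_n \otimes U)V_n$ for $U \in \mathcal U(k)$ stabilizing $|e\rangle$: this leaves $M_n$ and $|\psi_n^*\rangle$ unchanged but conjugates $L_n(|\psi_n^*\rangle\langle\psi_n^*|)$ by $U$. Combining this $\mathcal U(k-1)$-invariance on $|e\rangle^\perp$ with a second-moment calculation via graphical Weingarten calculus (Section \ref{sec:graphical-Weingarten}) forces the non-top eigenvalues to concentrate at $(1-y)/(k-1)$. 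Hence the full spectrum of $L_n(|\psi_n^*\rangle\langle\psi_n^*|)$ converges almost surely to $x_{k,t}$, yielding $\limsup_n H_p^{\min}(L_n) \leq H_p(x_{k,t}) = h_{p,k,t}$ for all $p$.

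For the lower bound when $p \in [1, \infty]$, I would combine uniform control on Ky Fan norms of outputs with the explicit profile. For each $j \in \{1, \ldots, k\}$,
\[
\sum_{i=1}^j \lambda_i(L_n(|\psi\rangle\langle\psi|)) = \max_{P_j} \langle\psi|V_n^*(I_n \otimes P_j)V_n|\psi\rangle,
\]
where $P_j$ ranges over rank-$j$ projections on $\mathbb C^k$. Strong convergence for such compressions of the random rank-$d$ projection $V_nV_n^*$ by projections of each rank $j$ yields asymptotic upper bounds on each Ky Fan norm; unitary invariance together with a generic absence of product vectors in the image of $V_n$ (valid as long as $1-(1-t)k<0$, i.e.\ for $t$ bounded away from $1$) supplies matching lower bounds on the smallest eigenvalues. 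Taken together these pin down the spectrum of any near-minimizer to $x_{k,t}$ and give $\liminf_n H_p^{\min}(L_n) \geq H_p(x_{k,t})$. For $p=\infty$, only the $j=1$ Ky Fan estimate is needed, which follows from strong convergence applied directly to $M_n$.

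The main technical obstacle is the concentration of the non-top eigenvalues of $L_n(|\psi_n^*\rangle\langle\psi_n^*|)$ at $(1-y)/(k-1)$: the $\mathcal U(k-1)$-invariance alone guarantees only an exchangeable joint distribution, not sharp concentration, so one needs either careful second-moment estimates via Weingarten expansion or the operator-valued free probability framework developed in \cite{bcn13}. The same machinery also underlies the uniform-in-input bounds required to elevate the pointwise spectral control used in the lower bound to a statement valid for \emph{all} inputs simultaneously.
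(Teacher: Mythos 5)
Your route to the upper bound contains a genuine, self-acknowledged gap that is in fact an unnecessary detour. You correctly produce, via the top eigenvector $\psi_n$ of $M_n = V_n^*(I_n \otimes ee^*)V_n$ and strong convergence for the product of two asymptotically free projections, an output state $\rho_n = L_n(\psi_n\psi_n^*)$ with $\lambda_1(\rho_n) \to y$ almost surely; this is exactly the content of \cite[Theorem 4.1]{CollinsNechita2011}, which the paper invokes. But you then try to prove that the \emph{entire} spectrum of $\rho_n$ converges to $x_{k,t}$, and you yourself flag the concentration of the $k-1$ lower eigenvalues at $(1-y)/(k-1)$ as the main unresolved obstacle. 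That step is not needed. For any probability vector $x \in \mathbb{R}^k$ with largest entry $y$ one has $x_{k,t} \prec x$: the top $j-1$ of the remaining $k-1$ entries of $x$ have average at least $(1-y)/(k-1)$, so the partial sums of the decreasingly ordered $x$ dominate those of $x_{k,t}$. Since $H_p$ is Schur-concave for every $p \in [0,\infty]$, this gives $H_p(\rho_n) \leq H_p\bigl(\lambda_1(\rho_n), \tfrac{1-\lambda_1(\rho_n)}{k-1}, \ldots\bigr) \to H_p(x_{k,t})$ no matter where the other eigenvalues of $\rho_n$ sit. This is precisely the paper's one-line ``concavity'' argument, and it closes the upper bound for all $p$ without any second-moment Weingarten computation; the $\mathcal U(k-1)$-invariance you invoke only yields exchangeability in law of the lower eigenvalues, and completing that route would essentially amount to redoing the analysis of \cite{bcn13}.

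For the matching lower bound when $p \in [1,\infty]$, your Ky Fan sketch points in the right direction: uniform control of $\max_{P_j}\langle\psi|V_n^*(I_n \otimes P_j)V_n|\psi\rangle$ over \emph{all} inputs $\psi$ and all ranks $j$ would show that every output is asymptotically majorized by $x_{k,t}$, whence $H_p \geq H_p(x_{k,t})$ by Schur-concavity again. But the uniformity over all inputs is exactly the hard part, and it is the content of \cite[Theorem 5.2]{bcn13}; the passage ``unitary invariance together with a generic absence of product vectors supplies matching lower bounds'' is not a proof of it. The paper does not attempt one either --- it simply cites that theorem --- so your treatment is acceptable provided you state explicitly that you are importing the result rather than reproving it, and you should note that the $p=\infty$ case indeed needs only the $j=1$ estimate, which is again \cite[Theorem 4.1]{CollinsNechita2011}.
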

\begin{proof}
In \cite[Theorem 4.1]{CollinsNechita2011} it is shown that the largest eigenvalue of an output of a random quantum channel is at most $y$, and that the value $y$ is almost surely attained. Given this partial information, the upper bound \eqref{eq:upper-bound-h} follows from the concavity of $p$-R\'enyi entropies for $p \in [0, \infty]$ (to maximize entropy, the smaller eigenvalues should be identical). The second part of the statement follows from the finer analysis in \cite[Theorem 5.2]{bcn12}, where it is shown that the eigenvalue vector $x_{k,t}$ above is the one which achieves the minimum entropy among outputs of the random quantum channel, in the case where $p \geq 1$. 
\end{proof}
Let us now study the asymptotics of the above upper bound, in the regime $k \to \infty$. Let us remind the reader that these results concern quantities for which the limit $n \to \infty$ has already been taken; in other words, we are considering the asymptotical regime $1 \ll k \ll n$. 
\begin{corollary}\label{cor:bcn}
In the setting of Theorem \ref{thm:bcn}, for fixed $p\in [0,\infty]$, asymptotically as $k \to \infty$, we have:
\begin{enumerate}[i)]
\item 
When 
$0<t<1$ is a fixed constant, 
\be h_{p,k,t} = o(1) + 
\begin{cases}
\displaystyle  \frac p {1-p} \log t, &\quad \text{ if } p >1\\[5pt]
   (1-t) \log k - t \log t - (1-t) \log(1-t) , &\quad \text{ if } p =1\\[5pt]
 \displaystyle  \log k + \frac{p}{1-p} \log(1-t) , &\quad \text{ if } 0 \leq p < 1.
\end{cases}\ee

\item When $t \asymp  k^{-\tau}$ for some constant $\tau >0$,
\be h_{p,k,t} = o(1) +  \begin{cases} \displaystyle  \frac {\tau p}{p-1} \log k, &\quad \text{ if }  \displaystyle  0<\tau \leq 1- \frac 1 p\\ 
 \log k, &\quad \text{ if } \displaystyle 1- \frac 1 p< \tau. 
\end{cases}\ee
\end{enumerate} 
\end{corollary}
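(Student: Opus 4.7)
The plan is to start from the explicit form of the probability vector
\begin{equation*}
x_{k,t} = \left(y,\; \tfrac{1-y}{k-1},\,\ldots,\, \tfrac{1-y}{k-1}\right)
\end{equation*}
given in \eqref{eq:single-opt} and reduce everything to scalar asymptotic analysis of
\begin{equation*}
h_{p,k,t} = H_p(x_{k,t}) = \tfrac{1}{1-p}\log\!\bigl(y^p + (1-y)^p (k-1)^{1-p}\bigr)
\end{equation*}
for $p \notin \{0,1,\infty\}$ (with the usual continuous extensions at $p=0,1,\infty$). The key preliminary is to control $y$: since $y = \min(1,\varphi^+(t,1/k))$ and we work with $t<1$, $k\to\infty$, one has $y<1$ eventually, and the closed form
\begin{equation*}
y = t + \tfrac{1}{k} - \tfrac{2t}{k} + 2\sqrt{t(1-t)\tfrac{1}{k}\bigl(1-\tfrac{1}{k}\bigr)}
\end{equation*}
is available for direct manipulation.

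The next step is to expand $y$ in each regime. For regime (i), with $t$ fixed, one reads off $y = t + O(k^{-1/2})$, hence $\log y = \log t + o(1)$ and $1-y = 1-t + o(1)$. For regime (ii), with $t \asymp k^{-\tau}$, I would distinguish $\tau<1$ (where the term $t$ dominates both $1/k$ and the square-root correction, giving $y \sim t$ and $\log y = -\tau\log k + o(1)$) from $\tau \geq 1$ (where $y \asymp 1/k$, the $1/k$ or square-root terms dominating $t$, giving $\log y = -\log k + O(1)$). In both regimes $1-y\to 1$ in the small-$t$ limit and stays bounded away from $0$.

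The proof is then completed by identifying the dominant summand in $y^p + (1-y)^p(k-1)^{1-p}$ and taking logarithms. In regime (i): for $p>1$, the first summand is $\Theta(1)$ while the second is $\Theta(k^{1-p}) = o(1)$, so $h_{p,k,t} = \tfrac{p}{1-p}\log t + o(1)$; for $p<1$ the second summand dominates and factoring it out yields $h_{p,k,t} = \log(k-1) + \tfrac{p}{1-p}\log(1-t) + o(1)$; the case $p=1$ is handled by direct computation with Shannon's formula, and $p=0,\infty$ by $H_0(x_{k,t}) = \log k$ and $H_\infty(x_{k,t}) = -\log y$. In regime (ii), for $p>1$ one compares $y^p \asymp k^{-\tau p}$ to $(k-1)^{1-p} \asymp k^{1-p}$: the transition happens exactly at $-\tau p = 1-p$, i.e.~$\tau = 1-1/p$, with $y^p$ winning for $\tau < 1-1/p$ (giving $\tfrac{p}{1-p}\log y + o(1) = \tfrac{\tau p}{p-1}\log k + o(1)$) and the second term winning for $\tau > 1-1/p$ (giving $\log k + o(1)$). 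For $p\leq 1$ the threshold $1-1/p$ is non-positive, so the second term always wins and $h_{p,k,t} = \log k + o(1)$, which together with the $p=0$ boundary recovers the stated formula.

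The analysis is essentially routine asymptotic bookkeeping; the only mildly delicate point is the sharpness of the error term at the transition $\tau = 1-1/p$, where both summands in the logarithm are of the same order $k^{1-p}$ and the true error is $O(1)$ rather than $o(1)$. The two branches of the stated formula nevertheless agree to leading order $\log k$ at that boundary, so the conclusion of the corollary should be read in the asymptotic sense that both branches are equivalent there. Apart from this boundary bookkeeping, no essential obstacle arises.
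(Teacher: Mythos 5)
Your proposal is correct and follows essentially the same route as the paper: expand $y=\varphi^+(t,1/k)$ in each regime ($y=t+O(k^{-1/2})$ for fixed $t$, and $y\asymp k^{-\tau}+k^{-1}$ for $t\asymp k^{-\tau}$), identify the dominant summand in $\|x_{k,t}\|_p^p \asymp y^p + (1-y)^p(k-1)^{1-p}$, and take logarithms, with the boundary cases $p=0,1,\infty$ handled directly. Your remark that the error is only $O(1)$ (absorbed by the leading $\log k$ term) at the transition $\tau=1-1/p$, and indeed throughout regime (ii) since $t\asymp k^{-\tau}$ fixes $t$ only up to a bounded factor, is a fair reading of the paper's terse $o(1)$ notation.
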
 
\begin{proof}
First, note that since we are in the large $k$ regime, $y=\varphi^+(t,1/k) < 1$. 
In the case where $t$ is fixed, we have $y=t + O(k^{-1/2})$ so that, depending on the value of $p$, the main contribution to the quantity $\|x_{k,t}\|_p^p$ is given either by $y$ (when $p>1$), $(1-y)/(k-1)$ (when $p<1$), or by the whole vector $x_{k,t}$ (when $p=1$). 
In the second case, where $t$ scales like $k^{-\tau}$, we have that $y \asymp   k^{-\tau} + k^{-1} $, which implies that 
\be
\|x_{k,t}\|^p_p \asymp  k^{-p \tau} + k^{-p+1}.
\ee
We conclude by taking logarithms of the expressions above.
\end{proof}

\begin{remark}
In Corollary \ref{cor:bcn} the case when $\tau=1$ can be derived from \cite{ASW11}.
Also, the case $\tau \geq 1$ can be treated via $\max_{X \in \mathcal M_d^{1,+} } \|L(X)-I/k \|_2 \asymp \sqrt{\frac t k}+ t$ \cite{fuk13}
which means that all the output states are highly mixed. 
\end{remark}

\section{Additivity rates of random quantum channels}
\label{sec:additivity-rates-random}

This section contains one of the main results of our work, lower bounds for the additivity rates of random quantum channels. We shall use Proposition \ref{prop:bound} and the estimates from Sections \ref{sec:random-Choi}. Indeed, in the following sections we will only consider the bound given by the operator norm of the partially transposed Choi matrix of a quantum channel, for the reasons discussed in Section \ref{sec:compare}.

\subsection{Minimum output R\'enyi entropy}

The following result is a direct consequence of the bound in Proposition \ref{prop:bound}, the strong convergence result in Theorem \ref{thm:strong-convergence} and Lemma \ref{lem:H-decreasing-in-p}.

\begin{theorem}\label{thm:bound-product}
Fix a positive integer $r$ and a R\'enyi entropy parameter $p \in [0,2]$. Then, almost surely, as $n \to \infty$,
\be \lim_{n \to \infty} H^{\min}_p (L_n^{\otimes r}) \geq -r \log \| \mu_{k,t}^{(C\Gamma)}\|,\ee
where $\| \mu_{k,t}^{(C\Gamma)}\|$ was given in \eqref{eq:norm-mukt-1}-\eqref{eq:norm-mukt-2}.
This statement also holds for the sequence of complementary channels $(L^C_n)_n$.
\end{theorem}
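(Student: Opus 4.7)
The plan is to chain together three ingredients already established in the paper. First, by Lemma \ref{lem:H-decreasing-in-p}, for any quantum state $\rho$ and any $p \in [0,2]$ we have $H_p(\rho) \geq H_2(\rho)$, so minimizing over inputs of $L_n^{\otimes r}$ yields $H_p^{\min}(L_n^{\otimes r}) \geq H_2^{\min}(L_n^{\otimes r})$. This reduces the statement to the single index $p=2$.

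Second, I would invoke Proposition \ref{prop:bound}, specifically inequality \eqref{eq:bound-tensor-B-C-Gamma}, which was proved for arbitrary quantum channels and arbitrary tensor powers via a multiplicative-bound argument (Lemma \ref{lem:mult}). Applied to our random channel $L_n$, it gives $H_2^{\min}(L_n^{\otimes r}) \geq -r \log \|C_{L_n}^\Gamma\|$ deterministically for every realization and every $n$. Combining with the previous step,
\begin{equation*}
H_p^{\min}(L_n^{\otimes r}) \geq -r \log \|C_{L_n}^\Gamma\| \quad \text{for all } n, \text{ all } p \in [0,2].
\end{equation*}

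Third, I would let $n \to \infty$ and invoke Theorem \ref{thm:strong-convergence}, which asserts the almost sure convergence $\|C_n^\Gamma\| \to \|\mu_{k,t}^{(C\Gamma)}\|$. Since $-r \log(\cdot)$ is continuous on $(0,\infty)$ and the limit $\|\mu_{k,t}^{(C\Gamma)}\|$ is strictly positive (by the explicit formula \eqref{eq:norm-mukt-1}--\eqref{eq:norm-mukt-2}), passing to the limit inferior on the left-hand side yields the claimed inequality, almost surely.

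For the complementary channel statement, I would use the standard Stinespring fact recalled at the beginning of Section \ref{sec:channel}: for any pure input $|\psi\rangle$, the outputs $L_n(|\psi\rangle\langle\psi|)$ and $L_n^C(|\psi\rangle\langle\psi|)$ share the same non-zero spectrum, so all $p$-R\'enyi entropies agree on pure inputs. Since the R\'enyi entropy is concave for $p \in [0,1]$ and Schur-concave more generally, and since $H_p^{\min}$ is attained on pure states (a classical consequence of concavity/Schur-concavity together with the spectral decomposition of mixed inputs), one gets $H_p^{\min}(L_n) = H_p^{\min}(L_n^C)$, and the same identity for the $r$-fold tensor powers by applying the argument to $L_n^{\otimes r}$ whose complementary channel is $(L_n^C)^{\otimes r}$ up to the natural identification of environments. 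The bound therefore transfers verbatim. The whole argument is essentially a bookkeeping assembly --- no new estimate is needed --- so I do not anticipate any substantial obstacle.
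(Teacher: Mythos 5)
Your proposal is correct and follows exactly the route the paper intends: the paper states the theorem as "a direct consequence of the bound in Proposition \ref{prop:bound}, the strong convergence result in Theorem \ref{thm:strong-convergence} and Lemma \ref{lem:H-decreasing-in-p}", and your assembly (monotonicity in $p$ to reduce to $p=2$, the deterministic bound \eqref{eq:bound-tensor-B-C-Gamma}, almost sure norm convergence, and the shared nonzero output spectra of $L_n$ and $L_n^C$ on pure inputs for the complementary statement) fills in precisely those steps. No gaps.
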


\begin{remark}
In the case $p >2$, the inequality $\|x\|_p \leq \|x\|_2$ must be used, and thus a correction factor appears in the bound:
\be \lim_{n \to \infty} H^{\min}_p (L_n^{\otimes r}) \geq -r \log \| \mu_{k,t}^{(C\Gamma)}\| \cdot \frac{p}{2(p-1)}.\ee
\end{remark}

Next,  we investigate how the quantity $\|\mu_{k,t}^{(C\Gamma)}\|$ behaves when $k \to \infty$ (recall that we write $x_n \asymp y_n$ when $\lim_{n \to \infty} x_n / y_n \in (0,\infty)$).
\begin{corollary}\label{cor:strong-convergence}
In the setting of Theorem \ref{thm:strong-convergence} we have the following asymptotic behaviors as  $k \to \infty $.
\begin{enumerate}[i)]
\item When $t \geq 1/2$ is a constant, we have, for any $k$,
\be \|\mu_{k,t}^{(C\Gamma)}\|  = 1.\ee
\item When $0<t <1/2$ is a constant, we have  
\be \|\mu_{k,t}^{(C\Gamma)}\|  = 2 \sqrt{t(1-t)} + o(1).\ee
\item 
When $t \asymp k^{-\tau}$  and $0< \tau \leq 2$, we have
\be
\|\mu_{k,t}^{(C\Gamma)}\|  \asymp k^{-\frac {\tau}2 } .
\ee
\item When $t \asymp k^{-\tau}$  and $\tau > 2$, we have
\be
\|\mu_{k,t}^{(C\Gamma)}\|  \asymp k^{-1}.
\ee

\end{enumerate} 
\end{corollary}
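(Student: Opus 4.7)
The plan is to start directly from the explicit formula given in Theorem \ref{thm:strong-convergence}, namely
\[
\|\mu^{(C\Gamma)}_{k,t}\| \;=\; \begin{cases} \dfrac{1-2t}{k} + 2\sqrt{\left(1-\tfrac{1}{k^2}\right) t(1-t)}, & \text{if } t + s < 1,\\[2pt] 1, & \text{if } t + s \geq 1,\end{cases}
\]
with $s = (k+1)/(2k)$, and then carry out a straightforward case-by-case asymptotic analysis in $k$. The whole proof is essentially algebraic substitution once one recognizes which regime of the max/case formula is relevant for each of the four sub-statements.

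For case (i), observe that $s > 1/2$ for every $k$, so $t \geq 1/2$ forces $t + s > 1$, placing us in the second branch of the formula; the value is $1$ uniformly in $k$. For case (ii), fix $t \in (0,1/2)$ and note that $s = 1/2 + 1/(2k) \to 1/2$, so for all sufficiently large $k$ we have $t + s < 1$ and hence the first branch applies. Expanding yields $(1-2t)/k = o(1)$ and $2\sqrt{(1-k^{-2})t(1-t)} = 2\sqrt{t(1-t)} + o(1)$, giving the claim.

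For cases (iii) and (iv), $t \asymp k^{-\tau}$ tends to $0$, so again $t + s < 1$ for large $k$ and the first branch applies. The expression has two competing pieces: a linear term $(1-2t)/k \asymp k^{-1}$ and a square-root term $2\sqrt{(1-k^{-2})\,t(1-t)} \asymp \sqrt{t} \asymp k^{-\tau/2}$. The whole point is then to compare the two exponents: when $0 < \tau \leq 2$ the square-root term dominates (or matches at $\tau = 2$), yielding $\|\mu^{(C\Gamma)}_{k,t}\| \asymp k^{-\tau/2}$; when $\tau > 2$ the linear term dominates, giving $\asymp k^{-1}$.

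There is essentially no obstacle: everything reduces to elementary asymptotics of an explicit function of $k$. The only mildly subtle point is the branch change in the formula, which is dispensed with by the observation that $s \to 1/2$, so whenever $t$ is bounded away from $1/2$ from below, or tends to $0$, the first branch eventually applies. Thus the corollary follows immediately from Theorem \ref{thm:strong-convergence} by inspection, without any further probabilistic or combinatorial input.
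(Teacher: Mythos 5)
Your proposal is correct and matches the paper's (implicit) argument: the corollary is stated as an immediate consequence of the closed formula \eqref{eq:norm-mukt-1}--\eqref{eq:norm-mukt-2}, and your case-by-case comparison of the two terms $(1-2t)/k$ and $2\sqrt{(1-k^{-2})t(1-t)}$, together with the observation that $s>1/2$ forces the constant branch exactly when $t\geq 1/2$, is precisely the intended reasoning.
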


Finally, Theorems \ref{thm:bound-product}, \ref{thm:bcn}, and Corollaries \ref{cor:strong-convergence}, \ref{cor:bcn} immediately give one of the main results of this paper. The constants $\alpha^\Gamma_{p,k,t}$ below are almost sure limits of the lower bounds $\alpha_p^\Gamma(L_n)$ defined in Proposition \ref{prop:bound-alpha}.
\begin{theorem}\label{thm:wad}
For any $p \in [0,2]$, almost surely as $n \to \infty$, the $p$-additivity rates of random quantum channels $L_n$ are lower bounded by the constants
\begin{equation}\label{eq:wad-constant}
\alpha_p(L_n) \geq \alpha^\Gamma_{p,k,t} := \frac{- \log \|\mu_{k,t}^{(C\Gamma)}\|}{h_{p,k,t}}.
\end{equation}
where $\|\mu_{k,t}^{(C\Gamma)}\|$ and $h_{p,k,t}$ are given in \eqref{eq:norm-mukt-1}-\eqref{eq:norm-mukt-2} and \eqref{eq:upper-bound-h}. 
For example, in the case of the von Neumann entropy ($p=1$), we obtain
\begin{equation}
\alpha_1(L_n) \geq \frac{-\log  \left[\frac{1-2t}{k} + 2\sqrt{\left(1-\frac 1 {k^2}\right) t(1-t)}\right]}{-y \log y - (1-y) \log \frac{1-y}{k-1}}\mathbf{1}_{t < (k-1)/(2k)},
\end{equation}
where $y = \varphi_+(t, 1/k)$.

Again, these statements hold for the sequence of complementary channels $(L^C_n)_n$.
\end{theorem}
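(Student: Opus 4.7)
My plan is to combine three ingredients that have already been established in the paper: the additive lower bound on $H_p^{\min}(L^{\otimes r})$ in terms of $\|C_L^\Gamma\|$ (Theorem~\ref{thm:bound-product}), the almost sure spectral convergence $\|C_{L_n}^\Gamma\| \to \|\mu_{k,t}^{(C\Gamma)}\|$ (Theorem~\ref{thm:strong-convergence}), and the almost sure upper bound $\limsup_n H_p^{\min}(L_n) \leq h_{p,k,t}$ (Theorem~\ref{thm:bcn}). No new analytic input is needed; the task is to stitch these three statements together and to track the almost sure event on which everything holds simultaneously.

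First, for each realization of $L_n$, I would apply Proposition~\ref{prop:bound-alpha} with the choice $B=\|C_{L_n}^\Gamma\|$ (equivalently, divide the bound of Theorem~\ref{thm:bound-product} by $rH_p^{\min}(L_n)$ and use the infimum characterization of $\alpha_p$ from Proposition~\ref{prop:additivity-rates}) to get, for every $p\in[0,2]$,
\[
\alpha_p(L_n) \;\geq\; \frac{-\log \|C_{L_n}^\Gamma\|}{H_p^{\min}(L_n)},
\]
whenever the denominator is nonzero; the degenerate case $H_p^{\min}(L_n)=0$ gives $\alpha_p(L_n)=1$ by definition, which exceeds any $\alpha^\Gamma_{p,k,t} \in [0,1]$. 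Next, I would intersect the two almost sure events furnished by Theorems~\ref{thm:strong-convergence} and~\ref{thm:bcn}. On their intersection (still of probability one), the numerator converges to $-\log \|\mu_{k,t}^{(C\Gamma)}\|$ and the denominator satisfies $H_p^{\min}(L_n) \leq h_{p,k,t}+\varepsilon$ eventually, for any $\varepsilon>0$. Taking $\liminf$ and then $\varepsilon\to0$ yields
\[
\liminf_{n\to\infty} \alpha_p(L_n) \;\geq\; \frac{-\log \|\mu_{k,t}^{(C\Gamma)}\|}{h_{p,k,t}} \;=\; \alpha^\Gamma_{p,k,t},
\]
which is the claimed bound. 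The boundary regime $t\geq(k-1)/(2k)$, in which $\|\mu_{k,t}^{(C\Gamma)}\|=1$ and the estimate becomes vacuous, is exactly what the indicator $\mathbf{1}_{t<(k-1)/(2k)}$ records in the explicit $p=1$ formula; that formula itself is obtained by substituting the closed form \eqref{eq:norm-mukt-2} into the numerator and the Shannon entropy of the vector $x_{k,t}$ from \eqref{eq:single-opt}, with $y=\varphi^+(t,1/k)$, into the denominator.

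The assertion for the complementary channels $L_n^C$ requires no separate argument: since the outputs of $L$ and $L^C$ on a pure input share the same nonzero spectrum, $H_p^{\min}(L_n^{\otimes r}) = H_p^{\min}((L_n^C)^{\otimes r})$ for every $r$, so $\alpha_p(L_n)=\alpha_p(L_n^C)$ and the very same lower bound applies. I expect no genuine obstacle: the only real work is organizational, namely verifying that the two independent almost sure events combine on a single probability-one event and that the $p\in[0,1)$ case of Theorem~\ref{thm:bcn}, where only a one-sided inequality is available rather than equality, is handled through the routine $\varepsilon$-argument indicated above. The genuine content of the theorem sits entirely in Theorem~\ref{thm:strong-convergence}: once strong convergence of $C_n^\Gamma$ is in hand, the passage to operator norms, and thence to additivity rates, is essentially automatic.
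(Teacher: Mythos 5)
Your proposal is correct and follows exactly the route the paper takes: the paper's own proof is the one-line observation that Theorem \ref{thm:bound-product} (equivalently, Proposition \ref{prop:bound} plus the strong convergence of Theorem \ref{thm:strong-convergence}), Theorem \ref{thm:bcn}, and Proposition \ref{prop:bound-alpha} immediately combine to give the bound. Your additional care with the intersection of the two almost-sure events, the $\varepsilon$-argument for the one-sided inequality when $p<1$, the degenerate case $H_p^{\min}(L_n)=0$, and the identity $\alpha_p(L_n)=\alpha_p(L_n^C)$ only makes explicit what the paper leaves implicit.
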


\begin{corollary}\label{cor:wad}
The additivity rate lower bounds $\alpha^\Gamma_{p,k,t}$ obtained in the theorem above have the  following behaviour as $k \to \infty$:
\begin{enumerate}[I)]
\item When $t \geq 1/2$ is a constant, then $\alpha^\Gamma_{p,k,t} = 0$ for all $p$ and, actually, for all $k$.
\item When $0<t<1/2$ is a constant,
\be \alpha^\Gamma_{p,k,t} = o(1) + \frac{p-1}{2p} \left [ 1 + \frac {2 \log 2 + \log (1-t)}{\log t} \right] \cdot \mathbf{1}_{(1,2]}.
\ee
\item When $t \asymp k^{-\tau}$ with $\tau >0$,
\be \alpha^\Gamma_{p,k,t} = o(1) +  \begin{cases}
 \frac{p-1}{2p} &\quad \text{ if }  0< \tau \leq 1-1/p\\
 \tau  /2 &\quad \text{ if } 1-1/p \leq \tau \leq 2\\
1 &\quad \text{ if }  \tau \geq 2.
\end{cases} \ee
\end{enumerate}
\end{corollary}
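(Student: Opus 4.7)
The plan is to substitute the asymptotic estimates of Corollary \ref{cor:strong-convergence} (for $-\log \|\mu_{k,t}^{(C\Gamma)}\|$) and of Corollary \ref{cor:bcn} (for $h_{p,k,t}$) directly into the ratio $\alpha^\Gamma_{p,k,t} = -\log \|\mu_{k,t}^{(C\Gamma)}\|/h_{p,k,t}$ from Theorem \ref{thm:wad}, and to track the case distinctions on $(p,\tau)$. Since everything is already in place, the argument is essentially a bookkeeping exercise; the main task is to match the piecewise asymptotics of the numerator and the denominator and to ensure that the additive $o(1)$ errors remain negligible under division.

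For Case (I), I would first observe that $t \geq 1/2$ forces $s+t > 1$ for every $k$, because $s = (k+1)/(2k) > 1/2$; hence \eqref{eq:norm-mukt-1} yields $\|\mu_{k,t}^{(C\Gamma)}\| = 1$ exactly, so $\alpha^\Gamma_{p,k,t} = 0$ for all $p$ and $k$. For Case (II), with $0 < t < 1/2$ fixed, Corollary \ref{cor:strong-convergence}(ii) gives $-\log \|\mu_{k,t}^{(C\Gamma)}\| = -\log(2\sqrt{t(1-t)}) + o(1)$, a bounded quantity in $k$. When $p \in [0,1]$, Corollary \ref{cor:bcn}(i) shows $h_{p,k,t}$ diverges like $\log k$, so the quotient is $o(1)$, consistent with the factor $\mathbf{1}_{(1,2]}(p)$. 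When $p \in (1,2]$, the denominator equals $\frac{p}{p-1}\log(1/t) + o(1)$, and the short algebraic simplification
\eq{\frac{-\log 2 - \frac{1}{2}\log t - \frac{1}{2}\log(1-t)}{\frac{p}{p-1}\log(1/t)} = \frac{p-1}{2p}\left[1 + \frac{2\log 2 + \log(1-t)}{\log t}\right]}
produces exactly the claimed expression.

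For Case (III), with $t \asymp k^{-\tau}$, both numerator and denominator grow like $\log k$, so the task reduces to comparing their coefficients. By Corollary \ref{cor:strong-convergence}(iii)--(iv), the coefficient of $\log k$ in the numerator is $\tau/2$ for $0 < \tau \leq 2$ and $1$ for $\tau > 2$. By Corollary \ref{cor:bcn}(ii), the coefficient of $\log k$ in the denominator is $\tau p/(p-1)$ for $0 < \tau \leq 1-1/p$ (an empty range when $p \leq 1$) and $1$ for $\tau > 1-1/p$. Pairing these piecewise constants in the three sub-regimes $0 < \tau \leq 1-1/p$, $1-1/p \leq \tau \leq 2$, and $\tau \geq 2$ produces respectively $\frac{p-1}{2p}$, $\tau/2$, and $1$. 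The only delicate point, which is hardly an obstacle, is to confirm that the additive $o(1)$ errors in the two asymptotic expansions do not disturb the leading behavior once divided by quantities of order $\log k$; this is automatic since the expansions are uniform on the relevant parameter intervals.
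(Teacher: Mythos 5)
Your proposal is correct and follows exactly the route the paper intends: the corollary is stated there as an immediate consequence of Theorem \ref{thm:wad} together with Corollaries \ref{cor:strong-convergence} and \ref{cor:bcn}, and your substitution of the piecewise asymptotics of $-\log\|\mu_{k,t}^{(C\Gamma)}\|$ and $h_{p,k,t}$ into the ratio, including the algebraic simplification in Case (II) and the coefficient matching in Case (III), is the intended bookkeeping.
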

The above result can be summarized using the ``phase diagram'' in Figure \ref{fig:phase-diagram-alpha}, in which the asymptotical behavior ($k \to \infty$) of the lower bound $\alpha^\Gamma_{p,k,t}$ is presented as a function of $p$ and $\tau = \log(1/t) / \log k$.

\begin{figure}[htbp]
\begin{center}
\includegraphics{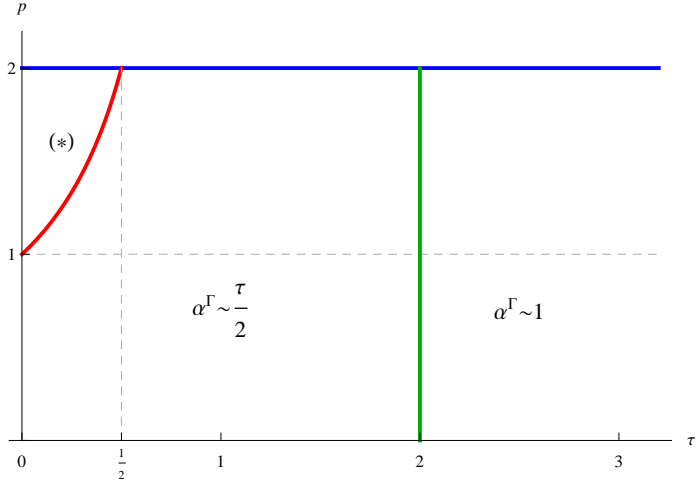}
\caption{Phase diagram for the additivity rate lower bound $\alpha^\Gamma_{p,k,t}$ as a function of $p$ and the scaling parameter $\tau$, for large $k$. In the region denoted by $(*)$, $\alpha^\Gamma_{p,k,t}$ behaves like $(p-1)/(2p)$, while the red curve in the plot is defined by $p=1/(1-\tau)$.}
\label{fig:phase-diagram-alpha}
\label{default}
\end{center}
\end{figure}

\begin{remark}
The above  lower bound  is non-trivial (i.e.~ $\alpha^\Gamma_{p,k,t} >0$) iff.~ $\|\mu_{k,t}^{(C\Gamma)}\| < 1$, which is equivalent to the condition $t < 1-s = (k-1)/(2k) < 1/2$.
\end{remark}

\subsection{Additivity rates versus weak multiplicativity exponents} \label{sec:Montanaro}
In this section, we compare the additivity rates from Theorem \ref{thm:wad} with Montanaro's results from \cite{mon}. First, let us comment on some major differences between the two approaches. First and foremost, the asymptotic regimes for random quantum channels are different: whereas we consider sequences of random quantum channels with \emph{fixed} output dimension, in \cite{mon} the author assumes that both the input and the output dimensions of the channels grow to infinity; for this reason, it is impossible to compare the results from a strictly mathematical perspective. Second, Montanaro's approach is to bound the additivity rate of a channel in terms of the matrix $M^\Gamma$ defined in \eqref{eq:B-M}; in this work, we introduce three additional bounds and we argue that in the fixed output dimension regime, the bound corresponding to the partial transposition of the Choi matrix outperforms the one from \cite{mon}. The new quantities we introduce are interesting also for the reason that they bound the $2$ R\'enyi entropy, whereas the quantity from \cite{mon} bounds the $\infty$-R\'enyi entropy. Finally, we not only compute the limiting operator norms of the relevant random matrices, but in most cases we compute also the limiting eigenvalue distributions (see Section \ref{sec:PPT} for an application). 

In spite of the fact that the current paper and \cite{mon} consider different asymptotic regimes, we would like to perform a heuristic benchmark of the additivity rates obtained. The following result is an adaptation of \cite[Theorem 3]{mon} for the $p=2$ R\'enyi entropy.
\begin{proposition}
We follow the notations in Section \ref{sec:channel}.
Suppose $k\leq n$, $\min \{d,k\} \geq 2 (\log_2 n)^{3/2}$ and $d=o(kn)$. 
Then, typically the additivity rate is bounded:
$\frac 1 r H^{\min}_2 (L^{\otimes r}) \geq \beta H^{\min}_2 (L) $ where
\be
\beta \sim \begin{cases} 1/4 & \text{if } d \geq n/k\\ 1/2 & \text{if } d \leq n/k\end{cases}
\ee
\end{proposition}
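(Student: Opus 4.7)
The plan is to translate Montanaro's $p=\infty$ additivity-rate bound into a statement for $H_2^{\min}$, by combining his projection-norm bound from Proposition \ref{prop:bound} with the universal comparison $H_\infty\leq H_2\leq 2H_\infty$. The starting point is \eqref{eq:bound-tensor-B-M}: together with the monotonicity $H_2\geq H_\infty$ (Lemma \ref{lem:H-decreasing-in-p}),
\be
\frac{1}{r}H_2^{\min}(L^{\otimes r})\ \geq\ \frac{1}{r}H_\infty^{\min}(L^{\otimes r})\ \geq\ -\log\|M_L^\Gamma\|
\ee
for every $r\geq 1$. The task therefore reduces to lower-bounding $-\log\|M_L^\Gamma\|$ by a suitable multiple of $H_2^{\min}(L)$.

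The essential random-matrix input is the concentration estimate proved in \cite[Proposition 4]{mon}, obtained from an $\varepsilon$-net on the unit ball of $\mathbb C^k\otimes\mathbb C^k$ together with L\'evy's inequality: under $\min(d,k)\geq 2(\log_2 n)^{3/2}$, one has almost surely as $n\to\infty$,
\be
\|M_L^\Gamma\|\ \leq\ C\!\left(\frac{1}{\sqrt k}+\sqrt{\frac{d}{kn}}\right).
\ee
This is the one ingredient not already present in Section \ref{sec:random-Choi}, where Proposition \ref{prop:random-M-Gamma} covers only the fixed-$k$ setting; the hypothesis $\min(d,k)\gtrsim(\log n)^{3/2}$ is precisely what makes the net argument succeed as $k,n\to\infty$ jointly. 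Simultaneously, Theorem \ref{thm:bcn} yields $H_\infty^{\min}(L)=-\log y+o(1)$ with $y=\varphi^+(d/(kn),1/k)\leq (1/\sqrt k+\sqrt{d/(kn)})^2$, so comparing the two displays gives $-\log\|M_L^\Gamma\|\geq \frac{1}{2}H_\infty^{\min}(L)-O(1)$. This recovers $\beta\sim 1/2$ at $p=\infty$.

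To pass to $p=2$, I would apply the universal inequality $H_2(\rho)\leq 2H_\infty(\rho)$ at the $H_\infty^{\min}$-minimizer of $L$, which gives $H_2^{\min}(L)\leq 2H_\infty^{\min}(L)$ and hence
\be
\frac{1}{r}H_2^{\min}(L^{\otimes r})\ \geq\ \frac{1}{2}H_\infty^{\min}(L)-O(1)\ \geq\ \frac{1}{4}H_2^{\min}(L)-O(1),
\ee
establishing $\beta\sim 1/4$ throughout the hypothesised regime, and in particular when $d\geq n/k$. When $d\leq n/k$, however, $\sqrt{d/(kn)}\leq 1/k\ll 1/\sqrt k$, so $y=1/k+o(1/k)$: the minimizing output is essentially maximally mixed, both $H_2^{\min}(L)$ and $H_\infty^{\min}(L)$ equal $\log k+o(1)$, no factor of two is lost in the comparison $H_2\leq 2H_\infty$, and the bound upgrades to $\beta\sim 1/2$.

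The main obstacle is the sharp concentration estimate for $\|M_L^\Gamma\|$ in the joint large-$k$, large-$n$ regime: neither Proposition \ref{prop:random-M-Gamma} nor Theorem \ref{thm:strong-convergence} applies (both are stated for fixed $k$), so this step genuinely has to be imported from \cite{mon}. Once it is in hand, the two cases reduce to verifying which of the two terms $1/\sqrt k$ and $\sqrt{d/(kn)}$ dominates in the numerator and whether $H_2\approx H_\infty$ or $H_2\approx 2H_\infty$ on the minimizer.
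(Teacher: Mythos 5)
Your route is the intended one: the paper states this proposition without proof, explicitly as an adaptation of \cite[Theorem 3]{mon}, and the natural adaptation is exactly what you do — chain $\tfrac1r H_2^{\min}(L^{\otimes r})\geq \tfrac1r H_\infty^{\min}(L^{\otimes r})\geq -\log\|M_L^\Gamma\|$ via Lemma \ref{lem:H-decreasing-in-p} and \eqref{eq:bound-tensor-B-M}, import Montanaro's concentration bound $\|M_L^\Gamma\|\leq C(1/\sqrt k+\sqrt{t})$ with $t=d/(kn)$ (which, as you correctly note, is the one ingredient that cannot come from Section \ref{sec:random-Choi}, since Theorem \ref{thm:strong-convergence} and Proposition \ref{prop:random-M-Gamma} are fixed-$k$ statements), and then compare with $H_2^{\min}(L)$ using $H_\infty\leq H_2\leq 2H_\infty$. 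The case split at $d=n/k$, i.e.\ $t=1/k^2$, comes out correctly: for $d\leq n/k$ one has $\sqrt t\leq 1/k$, so $-\log\|M_L^\Gamma\|\geq\tfrac12\log k-O(1)$ while $H_2^{\min}(L)\leq\log k$, giving $\beta\sim 1/2$ with no loss from the $H_2$-versus-$H_\infty$ comparison; otherwise the factor $2$ from $H_2^{\min}(L)\leq 2H_\infty^{\min}(L)$ degrades the constant to $1/4$.

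Two small repairs are needed. First, the inequality you quote on $y$ points the wrong way: to get $-\log\|M_L^\Gamma\|\geq\tfrac12 H_\infty^{\min}(L)-O(1)$ from $\|M_L^\Gamma\|\leq C(1/\sqrt k+\sqrt t)$ you need a \emph{lower} bound $y\geq c\,(1/\sqrt k+\sqrt t)^2$, not the upper bound $y\leq(1/\sqrt k+\sqrt t)^2$. This does hold — indeed $\varphi^+(s,t)=\bigl(\sqrt{s(1-t)}+\sqrt{t(1-s)}\bigr)^2$, so $y\asymp(1/\sqrt k+\sqrt t)^2$ when $t=o(1)$ — so it is a sign slip rather than a gap, but as written the displayed implication does not follow from the cited inequality. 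Second, Theorem \ref{thm:bcn} is proved in the regime where $k$ is fixed and $n\to\infty$, whereas here $k\to\infty$ jointly with $n$ (the hypothesis $\min\{d,k\}\geq 2(\log_2 n)^{3/2}$ forces this); strictly speaking one should either invoke the joint-regime eigenvalue estimates of \cite{hwi,mon} (which is what Montanaro does, under exactly this hypothesis), or bypass the issue: the needed upper bound on $H_2^{\min}(L)$ follows deterministically from $H_2^{\min}(L)\leq 2H_\infty^{\min}(L)\leq -2\log\max\{1/k,\,d/(nk)\}$, since averaging $\|P_S(e_i\otimes f_j)\|^2$ over a product basis shows any $d$-dimensional subspace of $\mathbb C^n\otimes\mathbb C^k$ contains a unit vector whose largest squared Schmidt coefficient is at least $d/(nk)$, and $\max\{1/k,t\}\geq\tfrac14(1/\sqrt k+\sqrt t)^2$. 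With these adjustments, and the observation that $H_2^{\min}(L)\to\infty$ (so the additive $O(1)$ errors are absorbed into the asymptotic $\sim$), the argument is complete.
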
 
In this proposition, if we take $n = 2^{(k/2)^{2/3}}$ this brings it to our setting  $1 \ll k\ll n$; note however that we are taking the two limits separately (first $n \to \infty$ and then $k \to \infty$) whereas in \cite{mon} the author considers the more general situation where both $k$ and $n$ grow at the same time (but at different speeds). 
By solving the equation (we have $t=k^{-\tau}$ in our mind)
\be
d= \frac n k = k^{-\tau} kn 
\ee
we get $\tau =2$. Hence,  $d \geq n/k$ and $d \leq n/k$ can be compared with our regimes $0< \tau \leq2$ and $\tau \geq 2$, respectively. Our method yields an improvement by a factor of two in this case. 

In the case of the von Neumann entropy ($p=1$), Montanaro obtains in \cite[Section 1.3]{mon} additivity rates of $1$ for $\tau =1$ and $1/2$ for $\tau \geq 2$, which are precisely the values we obtain in Corollary \ref{cor:wad}. This can be explained by the fact that in the case where $t \sim k^{-\tau}$, the norm $\|M^\Gamma\|$ from equation \eqref{eq:norm-mu-M-t} behaves like $2\sqrt t$, which is also the value from \cite[Theorem 8]{mon}. Using the monotonicity of the R\'enyi entropies as functions of $p$, one could improve the results in \cite{mon} for other values of $p$, as in the case $p=1$.

\subsection{Additivity rates for tensor products of conjugate random channels}
\label{sec:add-rate}
Besides the constructive counterexamples already discussed in the current work (see Examples \ref{ex:WH}, \ref{ex:GHP}), the most successful technique to construct quantum channels that violate the additivity relation consists of taking \emph{random conjugate channels}
(recall that for a channel $L$ as in \eqref{eq:quantum-channel-stinesrping}, its conjugate channel is obtained by replacing $V$ by $\bar V$).
In this subsection, we investigate how the additivity violations of the minimum output $p$-entropy allow to improve the additivity rate of the product channel $\alpha_p(L \otimes \bar L)$ with respect to the additivity rate of one channel $\alpha_p(L)$. These considerations are related to a conjecture by Hastings \cite{has} that the quantum channels $L \otimes \bar L$ should be generically additive. Although there is some evidence supporting this claim \cite{fne}, the conjecture is open to this date. We would like to bring further evidence to support this claim, by showing that, in some cases, the additivity rate of $L \otimes \bar L$ is larger than that of $L$, for random quantum channels $L$. 

The idea of considering conjugate channels when trying to expose additivity violations is due to Hayden and Winter \cite[Lemma 3.3]{hwi}, and it revolves around using the maximally entangled state \eqref{eq:maximally-entangled} as a test input for the tensor product between a channel and its conjugate:

\begin{lemma}\label{lem:Hayden-Winter}
Consider a quantum channel $L:\mathcal M_d(\mathbb C) \to \mathcal M_k(\mathbb C)$ which is defined via an isometry $V:\mathbb C^d \to \mathbb C^n \otimes \mathbb C^k$. Then, the output state $[L \otimes \bar L](E_d)$ cannot be too mixed, in the sense that 
\be \| [L \otimes \bar L](E_d) \| \geq \frac{d}{nk}.\ee 
The same bound holds for the complementary settings: $[L^C \otimes \overline{L^C}](E_d)$.
\end{lemma}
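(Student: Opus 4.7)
The plan is to run the classical Hayden--Winter test-state argument: since $[L\otimes\bar L](E_d)$ is positive semidefinite on $\mathbb C^k\otimes\mathbb C^k$, its operator norm is bounded below by its expectation against any unit vector, and I will choose the normalized maximally entangled vector $\phi_k := k^{-1/2}\sum_{i=1}^k e_i\otimes e_i$. The key is to rewrite $\langle \phi_k, [L\otimes\bar L](E_d)\,\phi_k\rangle$ in terms of the Kraus operators coming from the Stinespring dilation and then apply Cauchy--Schwarz.

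First I would write $L$ in Kraus form via \eqref{eq:quantum-channel-stinesrping}: setting $V_\alpha := (\langle e_\alpha|\otimes I_k)V$ for $\alpha=1,\ldots,n$, one has $L(X)=\sum_\alpha V_\alpha XV_\alpha^*$ and, correspondingly, $\bar L(X)=\sum_\alpha \bar V_\alpha X V_\alpha^\top$. With $e_d := \sum_i e_i\otimes e_i$ so that $E_d = e_d e_d^*$, a direct index calculation gives the single-Kraus overlap $\langle \phi_k, (V_\alpha\otimes \bar V_\beta) e_d\rangle = k^{-1/2}\,\trace(V_\alpha V_\beta^*)$. Squaring and summing over $\alpha,\beta$ yields
\[
\langle \phi_k, [L\otimes\bar L](E_d)\,\phi_k\rangle \;=\; \frac{1}{k}\sum_{\alpha,\beta=1}^n \bigl|\trace(V_\alpha V_\beta^*)\bigr|^2 \;=\; \frac{1}{k}\,\|M\|_{HS}^2,
\]
where $M\in\mathcal M_n(\mathbb C)$ is the positive semidefinite Gram matrix $M_{\alpha\beta} := \trace(V_\alpha V_\beta^*) = \langle V_\beta, V_\alpha\rangle_{HS}$.

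The concluding ingredient is the trace normalization $\trace(M) = \sum_\alpha \trace(V_\alpha^*V_\alpha) = \trace(V^*V) = d$, which holds because $V$ is an isometry. Since $M\succeq 0$ has size $n$, Cauchy--Schwarz on its eigenvalues gives $\|M\|_{HS}^2 = \trace(M^2) \geq \trace(M)^2/n = d^2/n$, so
\[
\|[L\otimes\bar L](E_d)\| \;\geq\; \langle \phi_k, [L\otimes\bar L](E_d)\,\phi_k\rangle \;\geq\; \frac{d^2}{kn} \;\geq\; \frac{d}{kn},
\]
which is the claimed bound (in fact slightly stronger). The complementary statement follows from the identical argument with the roles of $\mathbb C^k$ and $\mathbb C^n$ swapped. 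There is no serious obstacle here: the computation is elementary, and the only points requiring care are the convention for $\bar L$ (so that the Kraus operators are $\bar V_\alpha$, giving the complex-conjugate pairing $|\trace(V_\alpha V_\beta^*)|^2$) and the normalization of $\phi_k$ that is responsible for the explicit $1/k$ factor.
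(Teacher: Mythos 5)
Your proof is correct and is exactly the standard Hayden--Winter argument that the paper cites (\cite[Lemma 3.3]{hwi}): test the positive operator $[L\otimes\bar L](E_d)$ against the normalized maximally entangled output vector, reduce to the Gram matrix $M_{\alpha\beta}=\trace(V_\alpha V_\beta^*)$ with $\trace M=d$, and apply Cauchy--Schwarz. The only point worth flagging is the normalization: with the paper's unnormalized $E_d$ your computation yields the stronger bound $d^2/(nk)$, which collapses to the stated $d/(nk)$ exactly when one uses the normalized maximally entangled input $E_d/d$ (as the paper implicitly does later when it calls $[L\otimes\bar L](E_d)$ a quantum state); either reading is covered by your argument.
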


Since, obviously, the channels $L$ and $\bar L$ have the same additivity rates, the additivity rates of their product can be lower bounded using Proposition \ref{prop:alpha-Gamma-v-p} as follows:
\be \alpha^\Gamma_p(L \otimes \bar L) \geq v_p(L, \bar L) \alpha^\Gamma_p(L).\ee
It is intuitive now that the larger the entropy violation quotient $v_p$ is, the larger the additivity rate of the product channel will be, when compared to that of a single channel. Moreover, in order to get explicit lower bounds on the additivity rates $\alpha_p(L \otimes \bar L)$, we can lower bound the relative violation of additivity $v_p$ as follows:
\be v_p(L,\bar L) \geq \frac{2 H_p^{\min}(L)}{H_p([L \otimes \bar L](E_d))}.\ee
It follows that the quality of the lower bound $\alpha^\Gamma_p(L \otimes \bar L)$ can be in principle improved by a factor as large as $2$ when comparing to the single channel bound $\alpha^\Gamma_p(L)$.

We study next this phenomenon in the case of random quantum channels $L_n$ described in Section \ref{sec:random-Choi}. For our model of random quantum channels, the Hayden-Winter bound in Lemma \ref{lem:Hayden-Winter} was refined in \cite[Theorem 6.5]{cne10a}:

\begin{lemma}
Consider a sequence of random quantum channels $L_n$ as in Section \ref{sec:random-Choi}. Then, almost surely as $n \to \infty$, the eigenvalues of the random quantum state $[L \otimes \bar L](E_d)$ converge to the following deterministic vector:
\begin{equation}\label{eq:gamma-k-t}
\gamma_{k,t} = \left( t+\frac{1-t}{k^2}, \frac{1-t}{k^2}, \ldots, \frac{1-t}{k^2} \right) \in \mathcal M_{k^2}^{1,+}(\mathbb C).
\end{equation}
\end{lemma}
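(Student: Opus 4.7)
The plan is to use the same moment method via graphical Weingarten calculus that drives the analyses in Sections \ref{sec:random-Choi} and \ref{sec:other-bounds-random-channels}. First, I would derive an exact moment formula for $\mathbb E \operatorname{Tr}([L_n \otimes \bar L_n](E_d))^p$ by means of Theorem \ref{thm:graphical-Wg}. The diagram representing $[L_n \otimes \bar L_n](E_d)$ contains the isometries $V$, $V^*$ (coming from $L$) and $\bar V$, $V^\top$ (coming from $\bar L$), all contracted with the maximally entangled state $E_d$ and with the two environment traces. Taking the $p$-th trace power yields a closed diagram containing $2p$ copies of $U$ and $2p$ copies of $\bar U$, so the graphical Weingarten expansion \eqref{eq:graphical-Wg} is indexed by pairs $(\alpha,\beta) \in \mathcal S_{2p} \times \mathcal S_{2p}$. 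Counting the loops of the three ``colours'' (dimensions $n$, $k$, and $d$) produces a formula of the shape
\be \mathbb E \operatorname{Tr}([L_n \otimes \bar L_n](E_d))^p = \sum_{\alpha,\beta \in \mathcal S_{2p}} n^{a(\alpha)} k^{b(\alpha)} d^{c(\beta)} \operatorname{Wg}_{nk}(\alpha^{-1}\beta), \ee
where $a,b,c$ are combinatorial functions built from the full cycle $\gamma \in \mathcal S_{2p}$ coming from the $p$-fold trace and from an involution $\delta \in \mathcal S_{2p}$ encoding the pairing of each $L$-copy with its $\bar L$-partner through $E_d$.

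Second, I would extract the dominant contribution in the regime $n \to \infty$ with $d \sim tnk$ and $k$ fixed. Using the asymptotic $\operatorname{Wg}(nk,\sigma) \sim (nk)^{-(2p+|\sigma|)}\operatorname{Mob}(\sigma)$ from \eqref{eq:Wg-asympt}, the power of $n$ is maximised via a geodesic inequality in the same spirit as Sections \ref{sec:moment-cal} and \ref{sec:limiting-spectrum-C-c-Gamma}, and the surviving permutations lie on a geodesic determined by $\gamma$ and $\delta$. Applying the free moment--cumulant formula cycle by cycle, I expect to obtain
\be \lim_{n \to \infty} \frac{1}{d^p} \mathbb E \operatorname{Tr}([L_n \otimes \bar L_n](E_d))^p = \left(t + \frac{1-t}{k^2}\right)^p + (k^2-1)\left(\frac{1-t}{k^2}\right)^p, \ee
which is precisely the $p$-th moment of the normalised eigenvalue vector $\gamma_{k,t}$.

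Third, I would upgrade convergence in expectation to almost sure convergence. A variance bound $\operatorname{Var}[d^{-p}\operatorname{Tr}([L_n \otimes \bar L_n](E_d))^p] = O(n^{-2})$, obtained from the next-to-leading order of the Weingarten expansion of $\mathbb E(\operatorname{Tr}(\cdot)^p)^2$, combined with the Borel--Cantelli lemma, yields almost sure convergence of all moments; since $\gamma_{k,t}$ has only two distinct values, moment convergence upgrades directly to convergence of the ordered eigenvalues themselves.

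The main obstacle is the bookkeeping of Step~1: the presence of both $L$ and $\bar L$ coupled through $E_d$ doubles the number of $U/\bar U$ boxes and introduces the extra pairing permutation $\delta$ on top of $\gamma$, so the geodesic analysis is heavier than in the single-channel computations of Section \ref{sec:random-Choi}. A more conceptual shortcut starts from the identity $(V \otimes \bar V)|\Omega_d\rangle = |M\rangle$ with $M = VV^* \in \mathcal M_{nk}(\mathbb C)$ the projection onto $\operatorname{Im} V$: after a reordering of tensor factors, this exhibits $[L_n \otimes \bar L_n](E_d)$ as $AA^*$ for a fixed reshaping $A \in \mathcal M_{k^2 \times n^2}(\mathbb C)$ of $M$, reducing the problem to the singular value distribution of this reshaping. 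Combining $M^2 = M$ with the $(U \otimes \bar U)$-invariance of the law of $[L_n \otimes \bar L_n](E_d)$ (inherited from the left-invariance of $V$ under $I_n \otimes U$) shows that the expected state has the form $\alpha I_{k^2} + \beta |\Omega_k\rangle\langle\Omega_k|$; the two scalars are pinned down by matching the trace and the overlap with $|\Omega_k\rangle$, and concentration around the mean again reduces to a Weingarten second-moment bound.
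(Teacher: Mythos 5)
The paper does not prove this lemma itself --- it is quoted verbatim from \cite[Theorem 6.5]{cne10a} --- and your primary route (graphical Weingarten expansion of $\mathbb E \operatorname{Tr}([L_n\otimes\bar L_n](E_d))^p$ with permutations in $\mathcal S_{2p}$, geodesic analysis of the dominant terms in the regime $d\sim tnk$, identification of the limit with $\sum_i \gamma_i^p$, and an $O(n^{-2})$ variance bound plus Borel--Cantelli, which suffices since the output matrix has fixed size $k^2$) is essentially the proof given in that reference. Your proposal is correct and matches the cited argument.
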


Using this lemma, we are able to quantify the improvement in the lower bound of the product channel which is provided by additivity violations. The following proposition is a direct consequence of the above discussion and the asymptotic relations already used in this section. 

\begin{proposition}\label{prop:alpha-conjugate-channels}
Consider a sequence $L_n$ of random quantum channels as in Section \ref{sec:random-Choi}.
Almost surely as $n \to \infty$, the $p$-additivity rates of random quantum channels $L_n \otimes \bar L_n$ are lower bounded by the constants 
\be \alpha_p(L_n \otimes \bar L_n) \geq  v_{p,k,t}\alpha^\Gamma_{p,k,t},\ee
where $\alpha^\Gamma_{p,k,t}$ are the single channel bounds from Theorem \ref{thm:wad}, and 
\be v_{p,k,t} = \frac{2h_{p,k,t}}{H_p(\gamma_{k,t})},\ee
where the vector $\gamma_{k,t}$ was defined in \eqref{eq:gamma-k-t}.
The same bound holds for the complementary settings: $L^C \otimes \overline{L^C}$.

In particular, the parameter $v_{p,k,t}$ behaves like $1+o(1)$ when $k \to \infty$, except in the following cases:
\begin{enumerate}[I)]
\item When $0<t<1/2$ is a constant and $p>1$, we have $v_{p,k,t} =2+ o(1)$.
\item When $t \asymp k^{-\tau}$ with $p>1$ and $0< \tau < 1-1/p$, we have $v_{p,k,t} =2 + o(1)$.
\item When $t \asymp k^{-\tau}$ with $p>1$ and $1-1/p \leq \tau \leq 2-2/p$, we have 
\be v_{p,k,t} = \frac{2p-2}{\tau p}+ o(1).\ee
\end{enumerate}
\end{proposition}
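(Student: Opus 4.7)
The first inequality is essentially an identity, combined with a Hayden--Winter style test on the maximally entangled state. Since complex conjugation preserves operator norms and commutes with the partial transposition, $\|C_{\bar L_n}^\Gamma\| = \|C_{L_n}^\Gamma\|$. Combined with the multiplicativity established in Lemma~\ref{lem:mult}, this gives $\|C_{L_n \otimes \bar L_n}^\Gamma\| = \|C_{L_n}^\Gamma\|^2$. Writing $\hat\alpha_p$ for the lower bound of Proposition~\ref{prop:bound-alpha} associated to the partially transposed Choi matrix, I then have the identity
\begin{equation*}
\hat\alpha_p(L_n \otimes \bar L_n) \;=\; \frac{-2\log\|C_{L_n}^\Gamma\|}{H_p^{\min}(L_n \otimes \bar L_n)} \;=\; \frac{2 H_p^{\min}(L_n)}{H_p^{\min}(L_n \otimes \bar L_n)}\,\hat\alpha_p(L_n) \;=\; v_p(L_n,\bar L_n)\,\hat\alpha_p(L_n),
\end{equation*}
where I used $H_p^{\min}(\bar L_n) = H_p^{\min}(L_n)$ in the last step.

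To obtain an explicit lower bound on $v_p(L_n,\bar L_n)$, I feed the normalized maximally entangled state $E_d/d$ into $L_n \otimes \bar L_n$: this yields $H_p^{\min}(L_n \otimes \bar L_n) \leq H_p([L_n \otimes \bar L_n](E_d/d))$, and the lemma preceding the proposition (\cite[Theorem~6.5]{cne10a}) ensures that a.s.\ the latter quantity converges to $H_p(\gamma_{k,t})$. Combining this with the almost sure limits $H_p^{\min}(L_n) \to h_{p,k,t}$ (Theorem~\ref{thm:bcn}) and $\hat\alpha_p(L_n) \to \alpha^\Gamma_{p,k,t}$ (Theorem~\ref{thm:wad}), and applying $\alpha_p \geq \hat\alpha_p$ from Proposition~\ref{prop:bound-alpha}, delivers the main inequality $\alpha_p(L_n \otimes \bar L_n) \geq v_{p,k,t}\,\alpha^\Gamma_{p,k,t}$ almost surely.

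For the asymptotic behaviour of $v_{p,k,t} = 2h_{p,k,t}/H_p(\gamma_{k,t})$ as $k \to \infty$, the numerator is already tabulated in Corollary~\ref{cor:bcn}, so the remaining task is to estimate $H_p(\gamma_{k,t})$ in each scaling regime. The vector $\gamma_{k,t}$ has one large entry of order $t$ and $k^2-1$ small entries of order $(1-t)/k^2$; for $p>1$, $\sum_i \gamma_i^p$ is asymptotically governed by the larger of $t^p$ and $k^{2-2p}$, which in the scaling $t \asymp k^{-\tau}$ cross at $\tau = 2 - 2/p$. This yields $H_p(\gamma_{k,t}) \sim \frac{\tau p}{p-1}\log k$ below the crossover and $\sim 2\log k$ above it. Comparing with the analogous transition for $h_{p,k,t}$ at $\tau = 1 - 1/p$ produces the three non-trivial cases: (I) both numerator and denominator are driven by the large entry and the ratio equals $2$; (II) the same phenomenon persists in the scaling regime $\tau < 1 - 1/p$; (III) $h_{p,k,t}$ has flattened to $\log k$ while $H_p(\gamma_{k,t})$ still scales as $\frac{\tau p}{p-1}\log k$, yielding the ratio $2(p-1)/(\tau p)$. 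In every remaining regime (namely $p \leq 1$, or $p > 1$ with $\tau > 2 - 2/p$), a direct computation shows that numerator and denominator both scale like $\log k$ with matching leading constants, giving $v_{p,k,t} = 1 + o(1)$. The main obstacle will be the careful case analysis: verifying that in the borderline region (III) the two $\log k$ contributions carry genuinely different coefficients, and checking that the transition thresholds for the numerator and the denominator line up in the monotone order $\{1-1/p\} < \{2-2/p\}$ so that no additional regime with a fourth value of $v_{p,k,t}$ appears.
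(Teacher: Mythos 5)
Your proposal is correct and follows essentially the same route as the paper, whose own proof is just the one-line remark that the proposition is "a direct consequence of the above discussion": the identity $\hat\alpha_p(L\otimes\bar L)=v_p(L,\bar L)\hat\alpha_p(L)$ via multiplicativity of the $C^\Gamma$ bound, the Bell-state test giving $v_p(L,\bar L)\geq 2H_p^{\min}(L)/H_p([L\otimes\bar L](E_d/d))$, the limit vector $\gamma_{k,t}$, and the crossover analysis of $h_{p,k,t}$ at $\tau=1-1/p$ versus $H_p(\gamma_{k,t})$ at $\tau=2-2/p$. Your write-up is in fact somewhat more explicit than the paper's about why the first step is an equality and about the large-$k$ case analysis.
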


\section{Classical capacity for random quantum channels}\label{sec:cap}

In this section, we discuss how the additivity rate bounds derive earlier yield interesting upper bounds for the classical capacity of (random) quantum channels. In general, it is difficult to calculate the classical capacity for a given quantum channel \cite{BEHY2011}, so upper bounds are important in this situation. 

Let us start with an important relation between the additivity rate for the von Neumann entropy $\alpha_1$ and the classical capacity $\mathcal C_{cl}$ of a quantum channel. 
\begin{proposition} \label{prop:cap-bound}
For any quantum channel $L$, 
\begin{equation}
\mathcal C_{cl} (L) \leq H_1^{\max} (L) - \alpha_1(L) H_1^{\min}(L).
\end{equation}
In particular, 
\begin{equation}
	\mathcal C_{cl}(L) \leq H_1^{\max} (L) + \log B,
\end{equation}
where $B = \min\{\|C_L\|,\|C_L^\Gamma\|,\|C_{L^c}^\Gamma\|,\|M_L^\Gamma\| \}$.
\end{proposition}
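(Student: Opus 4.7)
My plan is to combine two ingredients: the regularized Holevo bound on the classical capacity, together with the additive lower bounds on $H_1^{\min}(L^{\otimes r})$ coming from Proposition \ref{prop:bound} and the monotonicity of R\'enyi entropies in $p$ (Lemma \ref{lem:H-decreasing-in-p}). Everything difficult is already contained in those two results; what remains is essentially bookkeeping with the regularization procedure.

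First I would prove the bound $\mathcal C_{cl}(L) \leq H_1^{\max}(L) - \alpha_1(L) H_1^{\min}(L)$. The classical capacity equals the regularized Holevo capacity, $\mathcal C_{cl}(L) = \lim_{r\to\infty} \chi(L^{\otimes r})/r$. The Holevo bound \eqref{eq:H-max-H-min} applied to $L^{\otimes r}$ gives $\chi(L^{\otimes r}) \leq H_1^{\max}(L^{\otimes r}) - H_1^{\min}(L^{\otimes r})$, and since $H_1^{\max}$ is trivially additive under tensor products, $H_1^{\max}(L^{\otimes r}) = r H_1^{\max}(L)$. Dividing by $r$ and passing to the limit yields
\begin{equation*}
\mathcal C_{cl}(L) \leq H_1^{\max}(L) - \lim_{r\to\infty} \frac{1}{r} H_1^{\min}(L^{\otimes r}).
\end{equation*}
By Proposition \ref{prop:additivity-rates} (via Fekete's lemma applied to the subadditive sequence $H_1^{\min}(L^{\otimes r})$) this limit exists and equals $\alpha_1(L) H_1^{\min}(L)$, including the degenerate case $H_1^{\min}(L)=0$ where $\alpha_1(L)=1$ by convention and the product vanishes.

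For the second inequality, Proposition \ref{prop:bound} gives $H_2^{\min}(L^{\otimes r}) \geq -r \log B'$ for each $B' \in \{\|C_L\|, \|C_L^\Gamma\|, \|C_{L^c}^\Gamma\|\}$ and $H_\infty^{\min}(L^{\otimes r}) \geq -r \log \|M_L^\Gamma\|$. Since $p\mapsto H_p(\rho)$ is non-increasing (Lemma \ref{lem:H-decreasing-in-p}), a pointwise comparison over input states yields $H_1^{\min}(L^{\otimes r}) \geq H_2^{\min}(L^{\otimes r})$ and $H_1^{\min}(L^{\otimes r}) \geq H_\infty^{\min}(L^{\otimes r})$, so $H_1^{\min}(L^{\otimes r}) \geq -r \log B$ with $B = \min\{\|C_L\|, \|C_L^\Gamma\|, \|C_{L^c}^\Gamma\|, \|M_L^\Gamma\|\}$. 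Dividing by $r$ and letting $r \to \infty$ gives $\alpha_1(L) H_1^{\min}(L) \geq -\log B$, which combined with the first inequality produces $\mathcal C_{cl}(L) \leq H_1^{\max}(L) + \log B$. I expect no serious obstacle; the only point worth flagging is that the fifth bound $\|L(I)\|$ is absent from $B$ precisely because its associated estimate on $H_\infty^{\min}(L)$ is trivially nonpositive in the regime we care about (as noted at the top of Section \ref{sec:other-bounds-random-channels}), so including it would weaken nothing but clutters the statement.
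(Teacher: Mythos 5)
Your proposal is correct and follows the same route as the paper, which proves the first inequality by combining the Holevo bound \eqref{eq:H-max-H-min} (applied to $L^{\otimes r}$, using the additivity of $H_1^{\max}$ and the Fekete-lemma characterization in Proposition \ref{prop:additivity-rates}) with the regularization of $\chi$, and the second by inserting the additive lower bounds of Proposition \ref{prop:bound} together with the monotonicity $H_1 \geq H_2 \geq H_\infty$. Your expansion of the regularization bookkeeping and the remark on the omitted $\|L(I)\|$ term are both consistent with the paper's (much terser) argument.
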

\begin{proof}
The first inequality follows from \eqref{eq:H-max-H-min} and Proposition \ref{prop:additivity-rates}, whereas the second one follows from Proposition \ref{prop:bound}.
\end{proof}

Let us now analyze in detail the corresponding bounds for random quantum channels by using estimates developed in the previous sections. 
In our previous work \cite{cfn3}, we have investigated the Holevo quantity $\chi(\cdot)$ for random quantum channels, but we were not able to analyze $\mathcal C_{cl}(\cdot)$
because we did not have the techniques to treat output entropy of tensor powers of quantum channels. 
However, now we use Proposition \ref{prop:cap-bound} 
and get bounds for the classical capacity, as follows.

\begin{theorem}
For random quantum channels defined by \eqref{eq:quantum-channel-stinesrping},
we have, almost surely
\begin{equation}
  \limsup_{n\to\infty} \mathcal C_{cl} (L_n) \leq \log k -\log \| \mu_{k,t} \|.
\end{equation}
In particular, the capacity admits the following asymptotic bounds as  $k \to \infty $.
\begin{enumerate}[i)]
\item When $0<t <1/2$ is a constant, we have  
\be
t \log k - h(t) \leq \liminf_{n\to\infty} \mathcal C_{cl}(L_n) \leq \limsup_{n\to\infty} \mathcal C_{cl}(L_n)  \leq \log k + \log 2 + \frac 12 \log t(1-t) + o(1),
\ee
where $h(t) = -t\log t - (1-t) \log(1-t)$ is the binary entropy. 
\item 
When $t \asymp k^{-\tau}$  and $0< \tau \leq 2$, we have
\be
\limsup_{n\to\infty} \mathcal C_{cl}(L_n)   \leq  \left(1-\frac \tau 2\right) \log k + c 
\ee
for some constant $c>0$. 
\item When $t \asymp k^{-\tau}$  and $\tau > 2$, we have
\be
\limsup_{n\to\infty} \mathcal C_{cl}(L_n)   \leq   c 
\ee
for some constant $c>0$. 
\end{enumerate} 
\end{theorem}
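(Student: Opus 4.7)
The proof is essentially an assembly of the machinery already developed. The plan has three ingredients: the general capacity bound of Proposition \ref{prop:cap-bound}, the strong convergence of partially transposed random Choi matrices from Theorem \ref{thm:strong-convergence}, and, for the lower bound in case (i), the known asymptotics of $H_1^{\min}(L_n)$ from Theorem \ref{thm:bcn} combined with a covariance argument saturating \eqref{eq:H-max-H-min}.

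For the upper bound, I would start from Proposition \ref{prop:cap-bound} applied with $B = \|C_{L_n}^\Gamma\|$, yielding, for each realization of the random channel,
\begin{equation*}
\mathcal C_{cl}(L_n) \leq H_1^{\max}(L_n) + \log \|C_{L_n}^\Gamma\| \leq \log k + \log \|C_{L_n}^\Gamma\|,
\end{equation*}
where the final inequality uses the trivial bound $H_1^{\max}(L_n) \leq \log k$ coming from the output dimension. Theorem \ref{thm:strong-convergence} gives $\|C_{L_n}^\Gamma\| \to \|\mu_{k,t}^{(C\Gamma)}\|$ almost surely, so taking $\limsup_n$ produces the main bound $\limsup_n \mathcal C_{cl}(L_n) \leq \log k + \log \|\mu_{k,t}^{(C\Gamma)}\|$. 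The three asymptotic refinements follow by substituting the $k \to \infty$ behaviors of $\|\mu_{k,t}^{(C\Gamma)}\|$ recorded in Corollary \ref{cor:strong-convergence}: $\|\mu_{k,t}^{(C\Gamma)}\| \to 2\sqrt{t(1-t)}$ in case (i), $\|\mu_{k,t}^{(C\Gamma)}\| \asymp k^{-\tau/2}$ in case (ii), and $\|\mu_{k,t}^{(C\Gamma)}\| \asymp k^{-1}$ in case (iii); taking logarithms reproduces the three stated upper estimates (observe, for case (i), that $\log(2\sqrt{t(1-t)}) = \log 2 + \tfrac{1}{2}\log[t(1-t)]$).

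For the lower bound in case (i), I would use $\mathcal C_{cl}(L_n) \geq \chi(L_n)$. The key is to produce an input ensemble $\{p_i, X_i\}$ of near-minimizers of $H_1 \circ L_n$ whose average is close to the maximally mixed state $I_d/d$. This is enabled by the bi-unitary invariance of the Haar-distributed isometry $V_n$: averaging over the unitary orbit of a single $H_1^{\min}$-optimizer yields a rotationally symmetric ensemble whose input average is $I_d/d$ and whose individual output entropies all equal $H_1^{\min}(L_n)$, which is the covariance-type argument alluded to after equation \eqref{eq:H-max-H-min}. Since $L_n(I_d/d)$ concentrates on $I_k/k$ almost surely, this ensemble gives
\begin{equation*}
\chi(L_n) \geq H_1\bigl(L_n(I_d/d)\bigr) - H_1^{\min}(L_n) = \log k - H_1^{\min}(L_n) + o(1).
\end{equation*}
Combining with Theorem \ref{thm:bcn} ($H_1^{\min}(L_n) \to h_{1,k,t}$ almost surely) and the case (i) expression from Corollary \ref{cor:bcn}, namely $h_{1,k,t} = (1-t)\log k + h(t) + o(1)$, delivers $\liminf_n \mathcal C_{cl}(L_n) \geq t \log k - h(t) + o(1)$ as required.

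The only non-mechanical step is the asymptotic saturation of \eqref{eq:H-max-H-min} used in the lower bound: one must exhibit a sufficiently symmetric ensemble of near-minimizers, which should follow from the Haar invariance together with a concentration-of-measure argument on the orbit of minimizing inputs. Everything else is a straightforward reading off of Proposition \ref{prop:cap-bound}, Theorem \ref{thm:strong-convergence}, Theorem \ref{thm:bcn}, and their corollaries.
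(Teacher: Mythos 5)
Your upper bound is correct and follows the paper's own route: Proposition \ref{prop:cap-bound} (equivalently, \eqref{eq:H-max-H-min} combined with Theorem \ref{thm:bound-product}), the trivial estimate $H_1^{\max}(L_n)\le\log k$, the almost sure norm convergence from Theorem \ref{thm:strong-convergence}, and the $k\to\infty$ asymptotics of Corollary \ref{cor:strong-convergence}. You also correctly read the main display as $\log k+\log\|\mu^{(C\Gamma)}_{k,t}\|$, which is what the three case-by-case estimates actually require.

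The lower bound in case (i), however, contains a genuine gap. You build the ensemble on the \emph{input} side, taking the Haar orbit $\{UX_0U^*\}_{U\in\mathcal U(d)}$ of a single minimizer $X_0$, and assert that the individual output entropies along this orbit all equal $H_1^{\min}(L_n)$. For a fixed realization of $L_n$ this is false: the channel is not covariant, so conjugating the input by a unitary does not preserve the output spectrum. Worse, for these random channels the output of a \emph{generic} pure input concentrates near the maximally mixed state $I_k/k$, so the orbit average $\int H_1\left(L_n(UX_0U^*)\right)dU$ is close to $\log k$ rather than to $H_1^{\min}(L_n)$, and your ensemble only yields $\chi(L_n)\gtrsim \log k - \log k = 0$. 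The bi-unitary invariance of the Haar measure is a statement about the \emph{distribution} of $V_n$, not about a fixed realization, so it cannot rescue the pointwise claim, and no concentration argument on the input orbit will help since the orbit genuinely consists mostly of non-minimizers. The paper's argument instead works on the \emph{output} side: the limit of the set of output states is a unitarily invariant convex body (by \cite{bcn12,cfn3}), so for every $W\in\mathcal U(k)$ one can choose an input $X_W$ with $L_n(X_W)\approx W\rho_{\min}W^*$; all members of this ensemble have output entropy close to $H_1^{\min}(L_n)$ while their output average is close to $I_k/k$, which gives $\chi(L_n)\ge\log k-H_1^{\min}(L_n)-o(1)$ and hence, via Corollary \ref{cor:bcn}, the stated $t\log k-h(t)$.
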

\begin{proof}
For the upper bound, one can use Theorem \ref{thm:bound-product} and Corollary \ref{cor:strong-convergence} 
together with  \eqref{eq:H-max-H-min}. 
To show the lower bound, we claim that for our random channel $L_n$ almost surely as $n\to\infty$
\be \lim_{n\to\infty} \chi(L_n) = \log k - \lim_{n \to \infty} H_1^{\min} (L_n) = \log k - h_{1,k,t}.\ee
Indeed, the almost-sure limit image of $L_n$ is unitarily invariant (see \cite{bcn12,cfn3} for details). 
Hence, we can rotate and average an optimal output to get the maximally mixed state, whose entropy is $\log k$.
Then, Corollary \ref{cor:bcn} gives the lower bounds. 
\end{proof}

Note that the constants $c$ appearing in the result above could have been explicitly computed using Theorem \ref{thm:bound-product} and Corollary \ref{cor:strong-convergence}.

\section{PPT properties for random quantum channels}
\label{sec:PPT}
In this section, we investigate the sequence of random quantum channels $L_n$ defined in  \eqref{eq:quantum-channel-stinesrping} and find the threshold for PPT/non-PPT property. 
Also, we show existence of PPT channels which violate additivity of R\'enyi $p$ entropy. 

Recall that a quantum channel $L$ is said to have the \emph{PPT property} if its Choi matrix is PPT, i.e.~ $C_L^\Gamma \geq 0$. This is equivalent to the fact that, for any bi-partite input state $x$, the output $[\mathrm{id} \otimes L](xx^*)$ is a PPT quantum state. It follows that the class of PPT channels contains as a (strict, for large enough dimensions) subclass the set of entanglement breaking channels (for which the Choi matrix is separable). 

\subsection{Thresholds for PPT property}\label{sec:PPT-t}
\begin{theorem}\label{thm:PPT}
Let $L_n$ be a sequence of random quantum channels of parameters $k,t$ as in  \eqref{eq:quantum-channel-stinesrping}, 
and let $s=(k+1)/(2k)$ and 
\begin{equation}
t_{PPT} = \frac 1 2 - \sqrt{s-s^2} = \frac{1}{2} \left( 1- \sqrt{1-\frac{1}{k^2}}\right).
\end{equation}
If $t \in (0,t_{PPT})$ then,
almost surely, the sequence $\lambda_{\min}(C_n^\Gamma)$ converges to a positive limit (the channels being asymptotic PPT), whereas if $t \in (t_{PPT}, 1]$, then, almost surely, the sequence $\lambda_{\min}(C_n^\Gamma)$ converges to a negative limit (the channels being asymptotic non-PPT). In other words, the value $t_{PPT}$ is a \emph{threshold} for PPT channels: a random quantum channel is PPT if and only if its relative dimension $t$ of the input space is smaller than $t_{PPT}$.
\end{theorem}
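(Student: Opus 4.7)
The plan is to read off the threshold directly from the strong convergence result of Theorem \ref{thm:strong-convergence}. Strong convergence of a self-adjoint sequence implies convergence of the extreme eigenvalues: applying Definition \ref{def:strong-convergence} to the polynomials $X+tI$ and $tI-X$ for $t$ large enough that these are positive on $\operatorname{supp}\mu^{(C\Gamma)}_{k,t}$ and on the spectra of the $C_n^\Gamma$ gives, almost surely,
\[
\lambda_{\max}(C_n^\Gamma) \longrightarrow \sup \operatorname{supp} \mu^{(C\Gamma)}_{k,t}, \qquad \lambda_{\min}(C_n^\Gamma) \longrightarrow \inf \operatorname{supp} \mu^{(C\Gamma)}_{k,t}.
\]
Since $L_n$ is PPT iff $\lambda_{\min}(C_n^\Gamma)\geq 0$, the proof reduces to determining the sign of $\inf\operatorname{supp}\mu^{(C\Gamma)}_{k,t}$ as a function of $t$.

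From the explicit form \eqref{eq:def-mukt-2}, with $s=(k+1)/(2k)>1/2$, this infimum equals $-1$ when $t>s$ (the atom at $-1$ is present) and equals $f(t):=2\varphi^-(s,t)-1$ when $t\leq s$. Since $f(0)=2s-1=1/k>0$, the task is to locate the unique sign change of $f$ on $(0,s]$. Isolating the square root in $f(t)=0$ and squaring (valid on $t\leq 1/2$, where $(s-1/2)(1-2t)\geq 0$) yields
\[
(s-1/2)^2(1-2t)^2 = 4\,s\,t\,(1-s)(1-t).
\]
Substituting $s-1/2 = 1/(2k)$ and $s(1-s) = (k^2-1)/(4k^2)$ collapses this to the quadratic $4k^2 t^2 - 4k^2 t + 1 = 0$, whose smaller root is exactly $t_{PPT}=\tfrac12(1-\sqrt{1-1/k^2})$; the larger root lies beyond $1/2$ and is spurious from the squaring.

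To finish, I would verify that the sign of $f$ changes from $+$ to $-$ at $t_{PPT}$ and stays $<0$ thereafter. A direct differentiation gives
\[
\partial_t \varphi^-(s,t) = (1-2s) - \sqrt{s(1-s)}\cdot\frac{1-2t}{\sqrt{t(1-t)}},
\]
and both terms are strictly negative when $s>1/2$ and $t\in(0,1/2)$, so $f$ is strictly decreasing there, with unique zero $t_{PPT}$. For $t\in[1/2,s]$ the elementary bound $\varphi^-(s,t)<s+t(1-2s)\leq 1/2$ (the linear function $s+t(1-2s)$ is decreasing in $t$ and equals $1/2$ at $t=1/2$) yields $f(t)<0$, and for $t>s$ the infimum is already $-1<0$. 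Combining the three ranges proves the theorem.

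The single conceptual step worth flagging is the promotion of strong convergence to convergence of $\lambda_{\min}$; mere convergence in moments (which gives the weak limit but not the support) would not suffice, and this is precisely the reason Theorem \ref{thm:strong-convergence} was proved in its \emph{strong} form. Everything else is elementary algebra on the explicit measure $\mu^{(C\Gamma)}_{k,t}$.
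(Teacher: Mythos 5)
Your proof is correct and follows essentially the same route as the paper: strong convergence from Theorem \ref{thm:strong-convergence} identifies $\lim_n\lambda_{\min}(C_n^\Gamma)$ with the bottom edge of $\operatorname{supp}\mu^{(C\Gamma)}_{k,t}$ (namely $-1$ for $t>s$ and $2\varphi^-(s,t)-1$ for $t\le s$), and the threshold is then located as the zero of $2\varphi^-(s,t)-1$. Your elementary sign analysis --- strict monotonicity of $t\mapsto\varphi^-(s,t)$ on $(0,1/2)$, the linear bound on $[1/2,s]$, and the explicit discarding of the larger root of $4k^2t^2-4k^2t+1=0$ as spurious from squaring --- is in fact somewhat more careful than the paper's appeal to convexity of $t\mapsto\varphi^-(s,t)$, but the argument is the same in substance.
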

\begin{proof}
We use the strong convergence proved in Theorem \ref{thm:strong-convergence}. Since the convergence is strong, the extremal eigenvalues of the partially transposed Choi matrix $C_n^\Gamma$ converge to the edges of the support of the limiting measure $\mu_{k,t}$ defined in equations \eqref{eq:def-mukt-1}-\eqref{eq:def-mukt-2}. Since we are interested in the positivity of the support, we only look at the smallest eigenvalue. We have that, almost surely, 
\be\lim_{n \to \infty} \lambda_{\min}(C_n^\Gamma) = 
\begin{cases}
2 \varphi^-(s,t)-1, &\quad \text{ if } t<s\\
-1, &\quad \text{ if } t \geq s.
\end{cases}\ee
So, for the limiting quantity to be strictly positive, both conditions $t<s$ and $\varphi^-(s,t) > 1/2$ need to be satisfied. In order to conclude, we need to show that these conditions are equivalent to the ones in the statement.

For a fixed value of $k$ (and thus $s$), the function $t \mapsto \varphi^-(s,t)$ is convex on $[0,1]$ and the equation $\varphi^-(s,t)=1/2$ has solutions $t_\pm=1/2 \pm \sqrt{s-s^2}$. Obviously, $\varphi^- (s,t) < 1/2 \iff t \in [0,t_-) \cup (t_+,1]$. A direct computation shows that $t_+>s$ iff $s>1/2+\sqrt 2 / 4$, which is equivalent to $k>\sqrt 2$, which is true for all integer $k>1$. Hence, $s \in  (t_-, t_+)$ for the relevant values of $k$, and the conclusion follows. 
\end{proof} 

\begin{remark}
In the above result, the threshold value $t_{PPT}$ behaves as $k^{-2}/4 + o(k^{-2})$ as $k \to \infty$. In that range of parameters ($t \asymp k^{-2}$ or smaller), we have also shown in Theorem \ref{thm:wad} that the channels $L_n$ are almost additive, i.e. their  additivity rate is $1 + o(1)$, for any $p \in [0,2]$.
\end{remark}

\begin{remark}\label{rk:random-density-matrices-PPT}
The above value of the threshold $t_{PPT}$ proves also the claim that the distribution of the random matrix $C_L^\Gamma$ is not that of a (rescaled) induced random density matrix \cite{zso}. Indeed, for random induced density matrices, the threshold for the PPT property has been computed in \cite[Theorem 6.2]{bn12}: $t_{PPT, induced} = 1/[4k(k-1)]$. In general,  the value for random Choi matrices is smaller than the value for the induced ensemble, $t_{PPT} < t_{PPT, induced}$, proving that the two probability distributions are different. However,  let us note that the two thresholds have the same asymptotic behaviour as $k \to \infty$. 
\end{remark}

\begin{remark}
Let us make one final remark concerning the case of the complementary channels. Indeed, it has been shown in Section \ref{sec:limiting-spectrum-C-c-Gamma} that the limiting spectral distribution of the Choi matrix of the complementary channel $L_n^c$ is \emph{symmetric} in the regime $1 \ll k \ll n$. Hence, in that regime, the complementary channel cannot be PPT. This illustrates the fact that, in general, a quantum channel and its complementary do not share the PPT property.
\end{remark}

\subsection{PPT channels violating additivity} \label{sec:PPT-v}
In this subsection, we show existence of PPT channels which violate the additivity of R\'enyi $p$ entropy with $p$ large. 
There are two theorems presented below. 
Theorem \ref{thm:ppt1} looks for the smallest possible dimension $k$, while
Theorem \ref{thm:ppt2} does for the minimum number of $p$. 
\begin{theorem}\label{thm:ppt1}
Take $k \geq 76$ and set $d= \frac n {4k}$, then for large enough $n$ and $p$, 
with high probability, $L_n$ are PPT and we have additivity violation:
\be
H^{\min}_p (L_n \otimes \bar L_n) < 2 H^{\min}_p (L_n).
\ee
\end{theorem}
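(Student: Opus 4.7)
The plan is to combine three ingredients already established in the paper: the PPT threshold of Theorem \ref{thm:PPT}, the almost-sure convergence of $H_p^{\min}(L_n)$ to the explicit value $h_{p,k,t}$ given by Theorem \ref{thm:bcn}, and the almost-sure convergence of the output eigenvalues of $L_n\otimes\bar L_n$ on the maximally entangled input to the vector $\gamma_{k,t}$ of \eqref{eq:gamma-k-t} (the refined Hayden--Winter statement recalled before Proposition \ref{prop:alpha-conjugate-channels}). No new random matrix estimate is needed; the argument reduces to a numerical comparison.

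With $d\sim n/(4k)$ we have $t=1/(4k^2)$, and a Taylor expansion gives $t_{PPT}=1/(4k^2)+1/(16k^4)+O(k^{-6})>t$ for every $k\geq 2$, so by Theorem \ref{thm:PPT} the channels $L_n$ are PPT almost surely. For the entropic part, first send $p\to\infty$: the upper bound $\limsup_n H_p^{\min}(L_n\otimes\bar L_n)\leq H_p(\gamma_{k,t})$ from the maximally-entangled test, together with the equality $\lim_n H_p^{\min}(L_n)=h_{p,k,t}$ (valid for $p\in[1,\infty]$ by Theorem \ref{thm:bcn}), reduces the additivity violation $H_\infty^{\min}(L_n\otimes \bar L_n) < 2 H_\infty^{\min}(L_n)$ to the concrete inequality $y^2 < t + (1-t)/k^2$, where $y=\varphi^+(t,1/k)$. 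Substituting $t=1/(4k^2)$ and expanding in powers of $1/\sqrt k$ gives
$$y^2 = \frac{1}{k^2} + \frac{2}{k^{5/2}} + \frac{3}{2k^3} + O(k^{-7/2}), \qquad t + \frac{1-t}{k^2} = \frac{5}{4k^2} + O(k^{-4}),$$
so the inequality simplifies to $2/\sqrt k + 3/(2k) < 1/4 + O(k^{-3/2})$; numerically the left-hand side equals $0.25094$ at $k=75$ and $0.24916$ at $k=76$, so the threshold is exactly $k=76$. Continuity of $H_p$ in $p$ then extends the strict inequality from $p=\infty$ to all sufficiently large finite $p$, which yields the additivity violation and completes the proof.

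The main obstacle is numerical rather than technical: the subleading $2/\sqrt k$ correction in the expansion of $-\log y$ is of the same order as the leading constant $\log(5/4)$ of the asymptotic gap $2 h_{\infty,k,t}-H_\infty(\gamma_{k,t})$, so the strict inequality $y^2 < t+(1-t)/k^2$ only opens up at a relatively large value of $k$. Everything else---the PPT conclusion, the single-channel entropy limit, and the product-channel upper bound on the maximally entangled test---follows directly from results already proved in the paper.
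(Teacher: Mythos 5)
Your proposal is correct and follows essentially the same route as the paper: reduce to $t=1/(4k^2)<t_{PPT}$ for the PPT part, and reduce the $p=\infty$ violation to the inequality $y^2 < t+(1-t)/k^2$ with $y=\varphi^+(t,1/k)$, verified numerically with threshold $k=76$, then pass to large finite $p$ by continuity. The only cosmetic difference is that you estimate $y^2$ via a truncated expansion with an $O(k^{-3/2})$ remainder evaluated at finite $k$, whereas the paper uses the genuine upper bound $y\leq k^{-1}+k^{-3/2}+\tfrac14 k^{-2}-\tfrac12 k^{-3}$; to be fully rigorous you should control that remainder, but the computation and conclusion are the same.
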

\begin{proof}
First,
\be
t_{PPT} = \frac 12 \left [ 1-\sqrt{1- \frac 1 {k^2}} \right] > \frac 1 {4k^2} 
\ee
So, set $t=1/(4k^2)$ so that typical random channels are $PPT$ by Theorem \ref{thm:PPT}. 

Next, by using the result in \cite{cne10a}, almost surely
\eq{
\lim_{n \to \infty} \max_X \left\| (L_n \otimes \bar L_n) (B_n) \right\|_\infty = t + \frac {1-t}{k^2} = \frac {5} {4k^2} - \frac 1 {4k^4}
\label{eq:PPTtensor}
}
where $B_n$ are Bell states on $\mathbb C^n \otimes \mathbb C^n$. 
On the other hand, by using the result in \cite{cne10a}, almost surely
\eq{
\lim_{n \to \infty} \max_X \left\| L_n (X) \right\|_\infty 
&= t+ \frac 1 k - 2 \frac t k + 2 \sqrt{t(1-t)\frac 1 k \left(1-\frac 1k\right)} \\
&\leq \frac 1 {k} + \frac 1 {k\sqrt{k}} + \frac 1 {4k^2}  - \frac 1 {2k^3}  
\label{eq:PPTsingle}
}
This implies that, almost surely,
\eq{
\lim_{n \to \infty} H^{\min}_\infty (L_n \otimes \bar L_n) <  2  \lim_{n \to \infty}  H^{\min} _\infty (L_n) 
}
for large enough $k$.
Indeed, for $76 \geq k$, we have $\eqref{eq:PPTtensor} > \eqref{eq:PPTsingle}^2 $.
Note that since $H^{\min}_\infty (\cdot) = \lim_{p\to\infty} H^{\min (\cdot)}$ we can extend the above violation of additivity to large $p$.
The proof is complete by taking the intersection of two large-probability events. 
\end{proof} 

\begin{theorem}\label{thm:ppt2}
Set $d= \frac n {4k}$. Then, for large enough $k$ and $n$, with high probability
$L_n$ are PPT and we have additivity violation for all $p \geq 30.95$:
\be
H^{\min}_p (L_n \otimes \bar L_n) < 2H^{\min}_p (L_n).
\ee
\end{theorem}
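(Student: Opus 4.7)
The plan is to mirror the structure of the proof of Theorem \ref{thm:ppt1}, but to optimize over $p$ rather than over $k$. The PPT half is immediate: at $t = 1/(4k^2)$ one has $t_{PPT} = \tfrac{1}{2}(1-\sqrt{1-1/k^2}) > 1/(4k^2)$ for every $k \geq 2$ (the equivalent form $\sqrt{1-1/k^2} < 1 - 1/(2k^2)$ holds after squaring), so Theorem \ref{thm:PPT} gives that $L_n$ is almost surely PPT as $n \to \infty$.

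For the violation I would input the maximally entangled state $B_n$ as in Lemma \ref{lem:Hayden-Winter}. Theorem \ref{thm:bcn} yields $\lim_n H_p^{\min}(L_n) = H_p(x_{k,t})$ with $x_{k,t}$ from \eqref{eq:single-opt}, while the result cited at the start of Section \ref{sec:add-rate} yields that the spectrum of $[L_n \otimes \bar L_n](B_n)$ converges almost surely to $\gamma_{k,t}$ from \eqref{eq:gamma-k-t}. Since $H_p^{\min}(L_n \otimes \bar L_n) \leq H_p([L_n \otimes \bar L_n](B_n))$, it suffices to exhibit the scalar inequality $H_p(\gamma_{k,t}) < 2 H_p(x_{k,t})$, which for $p > 1$ rearranges to $\sum_i \gamma_i^p > \bigl(\sum_i x_i^p\bigr)^{2}$.

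The heart of the argument is to Taylor-expand this scalar inequality in $u = 1/\sqrt{k}$, with $t = u^4/4$. Setting $a := ky$ and $c := k(1-y)/(k-1)$, direct computation gives $\gamma_1 = 5u^4/4 + O(u^8)$, $\gamma_{j \geq 2} = u^4 + O(u^8)$, $a = 1 + u + u^2/4 + O(u^3)$, and $c = 1 - u^3 - u^4/4 + O(u^5)$. Both sides of the target inequality have leading behavior $k^{-2(p-1)}$, so the expansion must be pushed two orders further. A careful accounting yields
\begin{equation*}
k^{2p} \sum_i \gamma_i^p = k^2 + \bigl[(5/4)^p - 1 - p/4\bigr] + O(k^{-2})
\end{equation*}
and, from $a^p + (k-1)c^p = k + p(p-1)/(2k) + O(k^{-3/2})$,
\begin{equation*}
k^{2p} \bigl(\textstyle\sum_i x_i^p\bigr)^{\!2} = k^2 + p(p-1) + O(k^{-1/2}).
\end{equation*}
The hardest step is extracting the coefficient $p(p-1)/2$ of $u^2$ in $a^p + (k-1)c^p$: it arises through cancellation of the linear $\pm pu$ terms, leaving $(2p^2 - p)/4$ from $a^p$ combined with $-p/4$ from $(1/u^2 - 1)(1 - pu^3 - pu^4/4 + \cdots)$. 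If one expands $c$ only to order $u^3$, the critical $-p/4$ contribution is missed and the inequality fails; getting this cancellation right is the main obstacle.

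Consequently, for all sufficiently large $k$ the additivity violation is implied by the scalar condition $(5/4)^p > p^2 - 3p/4 + 1$. Since the left side grows exponentially and the right only polynomially, it suffices to check one value: at $p = 30.95$ one has $(5/4)^{30.95} \approx 998.4 > 935.7 \approx 30.95^2 - 3(30.95)/4 + 1$, while the derivative $(5/4)^p \log(5/4) - 2p + 3/4$ is already positive there, so the inequality propagates to every $p \geq 30.95$. Intersecting the almost-sure PPT event with the almost-sure violation event then completes the proof.
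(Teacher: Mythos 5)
Your proposal is correct and follows essentially the same route as the paper: fix $t=1/(4k^2)$ so that $t<t_{PPT}$, test $L_n\otimes\bar L_n$ on the maximally entangled state whose output spectrum converges to $\gamma_{k,t}$, compare $\sum_i\gamma_i^p$ with $\bigl(\sum_i x_i^p\bigr)^2$ via an expansion in $k$, and reduce to the scalar condition $(5/4)^p> p^2-3p/4+1$ with critical value $p\approx30.94$. The only difference is cosmetic: you carry out the Taylor expansion (and the monotonicity check for $p\geq 30.95$) by hand, whereas the paper delegates the asymptotics \eqref{eq:single-approx}--\eqref{eq:tensor-approx} to Mathematica; your expansions agree with the paper's formulas.
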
 
\begin{proof}
First, by using the second statement of Theorem \ref{thm:bcn},
almost surely
\eq{
\lim_{n\to\infty} H^{\min}_p (L_n) = h_{p,k,t}.
}
which was obtained by the output distribution $x_{k,t}$ in \eqref{eq:single-opt}.
For this distribution with $t=1/(4k^2)$, we have
\eq{
\left (\| x_{k,t}\|_p^p\right)^2  = 
 \frac{k^2}{k^{2p}} + (p^2-p) \frac 1{k^{2p}} +  O\left(\frac 1 {k^{2p+1/2}}\right)
 \label{eq:single-approx}
}
where the approximation is given by Wolfram Mathematica. 
Similarly,  for $\gamma_{k,t}$ in \eqref{eq:gamma-k-t}, 
\eq{
 \| \gamma_{k,t} \|_p^p & = \frac{1}{(k^2-1)^p} \left[ k^2 -1 - \frac {5p}{4}\right] 
+ \frac 1 {k^{2p}} \left(\frac 54\right)^p + O\left(\frac 1 {k^{2p+2}}\right) \notag \\
&=\frac 1 {k^{2p}}  \left[ k^2 -1-\frac{p}4 +  \left(\frac 54\right)^p\right]  + O\left(\frac 1 {k^{2p+1}}\right)
\label{eq:tensor-approx}}

A sufficient condition for additivity violation is that $\eqref{eq:single-approx} < \eqref{eq:tensor-approx} $, which is equivalent.
in the regime $k \to \infty$, to 
\eq{
p^2 -\frac{3p}4 +1 < \left( \frac 54 \right)^p 
}
Again, solving the equation gives $p=30.9441$ via Wolfram Mathematica.
As in the proof of Theorem \ref{thm:ppt1}, $L_n$ are typically PPT as $n\to\infty$ when $d=\frac n{4k}$.
Therefore, taking the intersection of two high-probability events completes our proof. 
\end{proof}

\bigskip

\noindent \textit{Acknowledgements.}
We would like to thank Michael Wolf for useful discussions, and the referee for a careful reading of our manuscript. M.F.~ is grateful for financial support via the CHIST-ERA/BMBF project CQC and from the John Templeton Foundation (ID$\sharp$48322).  The opinions expressed in this publication are those of the authors and do not necessarily reflect the views of the John Templeton Foundation. I.N.'s research has been supported by a von Humboldt fellowship and by the ANR projects {OSQPI} {2011 BS01 008 01},  {RMTQIT}  {ANR-12-IS01-0001-01}, and {StoQ} {ANR-14-CE25-0003-01}.

\end{document}